\definecolor{Mathematica1}{rgb}{0.368417, 0.506779, 0.709798}
\definecolor{Mathematica2}{rgb}{0.880722, 0.611041, 0.142051}
\definecolor{Mathematica3}{rgb}{0.560181, 0.691569, 0.194885}
\definecolor{darkgreen}{rgb}{0,0.4,0}
\definecolor{darkred}{rgb}{0.4,0,0}
\definecolor{darkblue}{rgb}{0,0,0.4}
\definecolor{lightblue}{rgb}{.6,.6,0.9}
\definecolor{uglybrown}{rgb}{0.8,  0.7,  0.5}
\definecolor{palatinatepurple}{rgb}{0.41, 0.16, 0.38}
\definecolor{celebrationcolor}{rgb}{0.75,  0.0,  0.9}
\definecolor{shadecolor}{rgb}{0.90,0.90,0.90}
\definecolor{DVcolor}{rgb}{0.95,  0.5,  0.2}
\definecolor{lightbluemuons}{rgb}{0.0,.65,1.0}
\definecolor{chartreuse}{rgb}{0.70, 1.00, 0.00}
\tikzset{
    vector/.style={decorate, decoration={snake}, draw},
    fermion/.style={postaction={decorate},
        decoration={markings,mark=at position .55 with {\arrow{>}}}},
    fermionbar/.style={draw, postaction={decorate},
        decoration={markings,mark=at position .55 with {\arrow{<}}}},
    fermionnoarrow/.style={},
    gluon/.style={decorate,
        decoration={coil,amplitude=4pt, segment length=5pt}},
    scalar/.style={dashed, postaction={decorate},
        decoration={markings,mark=at position .55 with {\arrow{>}}}},
    scalarbar/.style={dashed, postaction={decorate},
        decoration={markings,mark=at position .55 with {\arrow{<}}}},
    scalarnoarrow/.style={dashed,draw},
%
	vectorscalar/.style={loosely dotted,draw=black, postaction={decorate}},
}
\def\centerarc[#1](#2)(#3:#4:#5)
\newmdenv[%
        backgroundcolor=lightgray,
    linecolor=black,
    outerlinewidth=2pt,
]{boxedandshaded}
\numberwithin{equation}{section}
\renewcommand{\theequation}{\arabic{section}.\arabic{equation}}
\newlength{\extraspace}
\newlength{\extraspaces}
\def\be{\begin{equation}}
\def\ee{\end{equation}}
\newcommand{\bea}{\begin{eqnarray}}
\newcommand{\eea}{\end{eqnarray}}
\def\Tr{{{\rm Tr}}}
\def\bra#1{\left\langle#1\right|}
\def\ket#1{\left|#1\right\rangle}
\def\II{\relax{I\kern-.10em I}}
\def\IB{\relax{\rm I\kern-.18em B}}
\def\ID{\relax{\rm I\kern-.18em D}}
\def\IE{\relax{\rm I\kern-.18em E}}
\def\IF{\relax{\rm I\kern-.18em F}}
\def\IG{\relax\hbox{$\inbar\kern-.3em{\rm G}$}}
\def\IGa{\relax\hbox{${\rm I}\kern-.18em\Gamma$}}
\def\IH{\relax{\rm I\kern-.18em H}}
\def\II{\relax{\rm I\kern-.18em I}}
\def\IK{\relax{\rm I\kern-.18em K}}
\def\inbar{\,\vrule height1.5ex width.4pt depth0pt}
\def\simgt{\hskip0.05in\relax{ 
\raise3.0pt\hbox{ $>$
{\lower5.0pt\hbox{\kern-1.05em $\sim$}} }} \hskip0.05in}
\def\lp10{\ell_p^{10}}
\def\lp11{\ell_p^{11}}
\def\R11{R_{11}}
\def\frac#1#2{{#1 \over #2}}
\newdimen\tableauside\tableauside=1.0ex
\newdimen\tableaurule\tableaurule=0.4pt
\newdimen\tableaustep
\def\phantomhrule#1{\hbox{\vbox to0pt{\hrule height\tableaurule width#1\vss}}}
\def\phantomvrule#1{\vbox{\hbox to0pt{\vrule width\tableaurule height#1\hss}}}
\def\sqr{\vbox{%
  \phantomhrule\tableaustep
  \hbox{\phantomvrule\tableaustep\kern\tableaustep\phantomvrule\tableaustep}%
  \hbox{\vbox{\phantomhrule\tableauside}\kern-\tableaurule}}}
\def\squares#1{\hbox{\count0=#1\noindent\loop\sqr
  \advance\count0 by-1 \ifnum\count0>0\repeat}}
\def\tableau#1{\vcenter{\offinterlineskip
  \tableaustep=\tableauside\advance\tableaustep by-\tableaurule
  \kern\normallineskip\hbox
    {\kern\normallineskip\vbox
      {\gettableau#1 0 }%
     \kern\normallineskip\kern\tableaurule}%
  \kern\normallineskip\kern\tableaurule}}
\def\gettableau#1 {\ifnum#1=0\let\next=\null\else
  \squares{#1}\let\next=\gettableau\fi\next}
\def\({\left(}
\def\){\right)}
\def\pp{{\bf p}}
\def\aa{{\bf a}}
\def\qq{{\bf q}}
\def\RR{{\bf R}}
\def\lsim{\mathrel{\mathstrut\smash{\ooalign{\raise2.5pt\hbox{$<$}\cr\lower2.5pt\hbox{$\sim$}}}}}
\def\gsim{\mathrel{\mathstrut\smash{\ooalign{\raise2.5pt\hbox{$>$}\cr\lower2.5pt\hbox{$\sim$}}}}}
\def\overleftrightarrow#1{\vbox{\ialign{##\crcr
     $\leftrightarrow$\crcr\noalign{\kern-0pt\nointerlineskip}
     $\hfil\displaystyle{#1}\hfil$\crcr}}}
     \def\overleftarrow#1{\vbox{\ialign{##\crcr
     $\leftarrow$\crcr\noalign{\kern-0pt\nointerlineskip}
     $\hfil\displaystyle{#1}\hfil$\crcr}}}
\def\eg{{\it e.g.}}
\newif{\ifeq}           
\newcounter{lecturecounter}
\def\baselinestretch{1.1}
\renewcommand{\title}[1]{\vbox{\center\LARGE{#1}}\vspace{5mm}}
\renewcommand{\author}[1]{\vbox{\center#1}\vspace{5mm}}
\newcommand{\address}[1]{\vbox{\center\em#1}}
\renewcommand{\date}[1]{\vbox{\center#1}}
\theoremstyle{plain}
\newtheorem{theorem}{Theorem}[section] 
\newtheorem{assumption}[theorem]{Assumption} 
\theoremstyle{definition}
\newtheorem{definition}[theorem]{Definition} 
\newtheorem{exmp}[theorem]{Example} 
\newtheorem{nonexmp}[theorem]{Non-Example} 
\newtheorem*{remark}{Remark}
\theoremstyle{plain}
\theoremstyle{plain} 
\newtheorem{lemma}[theorem]{Lemma}
\theoremstyle{plain}
\theoremstyle{plain} 
\newtheorem{Proposition}[theorem]{Proposition}
\newcommand{\calO}{{\mathcal O}}
\definecolor{JMblue}{RGB}{25,25,125}
\definecolor{BSorange}{RGB}{140,50,0}
\definecolor{XLgreen}{RGB}{34,139,34}
\def\dune{\mathbb{D}}
\def\centralcharge{{\mathfrak{c}_{\textrm{tot}}}}
\def\fc{{\mathfrak{c}_{\textrm{tot}}}}
\def\ctot{{c_{\textrm{tot}}}}
\def\calK{\mathcal{K}}
\def\KD{\calK_{\mathbb{D}}}
\def\({\left(} 
\def\){\right)}
\def\[{\left[} 
\def\]{\right]}
\newlength{\ml}
\begin{document}
\title{Conformal geometry from entanglement}
\author{
Isaac H.~Kim${}^{1}$, Xiang Li${}^{2}$, Ting-Chun Lin${}^{2}$, John McGreevy${}^{2}$, Bowen Shi${}^{2}$}

\address{${}^{1}$Department of Computer Science, University of California, Davis, CA 95616, USA\\
${}^{2}$Department of Physics, University of California San Diego, La Jolla, CA 92093, USA}

\begin{abstract} 
In a physical system with conformal symmetry, observables depend on cross-ratios, measures of distance invariant under global conformal transformations (conformal geometry for short).
We identify a quantum information-theoretic mechanism by which the conformal geometry emerges at the gapless edge of a 2+1D quantum many-body system with a bulk energy gap. 
We introduce a novel pair of information-theoretic quantities $(\centralcharge, \eta)$ that can be defined locally on the edge from the wavefunction of the many-body system, without prior knowledge of any distance measure. 
We posit that, for a topological groundstate, the quantity $\centralcharge$ is stationary under arbitrary variations of the quantum state, and study the logical consequences. 
We show that stationarity, modulo an entanglement-based assumption about the bulk, implies (i) $\centralcharge$ is a non-negative constant that can be interpreted as the total central charge of the edge theory. (ii)  $\eta$ is a cross-ratio, obeying the full set of mathematical consistency rules, which further indicates the existence of a distance measure of the edge with global conformal invariance. 
Thus, the conformal geometry emerges from a simple assumption on groundstate entanglement.

We show that stationarity of $\centralcharge$ is equivalent to a vector fixed-point equation involving $\eta$, making our assumption locally checkable. 
We also derive similar results for 1+1D systems under a suitable set of assumptions.
\end{abstract}
 
\vfill
\eject

\setcounter{tocdepth}{2}    

\renewcommand{\baselinestretch}{0.75}\normalsize
\tableofcontents
\renewcommand{\baselinestretch}{1.1}\normalsize

\vfill\eject

\section{Introduction}
\label{sec:introduction}

\subsection{Philosophy of entanglement bootstrap and motivations of this work}

In a many-body system consisting of a large number of microscopic degrees of freedom, a new emergent phenomenon may arise at a macroscopic scale \cite{Anderson1972more}. These phenomena form a basis by which one can define a phase of matter, a central concept in condensed matter physics. Intuitively, a phase of matter can be viewed as an equivalence class of renormalization group (RG) flows with the same fixed point \cite{Wilson1971-RG1}.
Under the RG flow, different theories at the ultraviolet (UV) may flow to the same infrared (IR) fixed point. These IR fixed points serve as the common route through which one can study universal properties of different phases. 

At zero temperature, the theories at the IR fixed points may exhibit exotic emergent phenomena, such as the emergence of anyons in two-dimensional gapped spin liquid systems \cite{Laughlin1983,moore1991nonabelions,Levin-string-net2005,Kitaev2005}. 
An important discovery is the fact that many universal properties of the fixed-point are encoded in the entanglement structure of the underlying groundstates. The work of extracting such universal properties from groundstates are numerous: In 2+1D gapped systems, examples of such work include the extraction of quantum dimensions of anyons \cite{Kitaev2006,Levin2006}, anyon types and fusion rules \cite{shi2020fusion}, chiral central charge \cite{Kim2021}.
In critical systems, the central charge can be extracted \cite{Calabrese:2009qy}. Since their discovery, these signatures have become useful tools to characterize phases of matter and transitions between them in numerical studies (for a small subset of examples, see \eg~\cite{Isakov2011,Zhang:2011wm, Grover:2012sp, Grover:2013ifa,Hofmann_2019,Zou:2019iwr,Wang:2022xmx,Dehghani:2020jls,Cian:2022vjb,Kobayashi:2023eev}). 

Thus, we are invited to explore
the possibility that \emph{all} the universal properties of the phase are encoded in the groundstate. This is surprising because the universal properties of the phase can include data that is {\it a priori} independent of the groundstate. For instance, important data that defines an anyon theory is the braiding and fusion rules of the anyons, which pertains to the low-energy point-like excitations, and the chiral central charge, which pertains to the heat transport at low but finite temperature. 

A surprising aspect of these recent developments is that the universal properties follow from some local conditions on the groundstate entanglement. The success of this approach raises a fundamental question: how do the universal properties of the phase (or critical point) emerge from groundstate entanglement? 
This question can be unpacked as the following series of questions: 
Given a state, how can we tell, \emph{from some local conditions}, whether it's a representative state of some phase of matter? 
If so, what phase does it represent?  
If not, does it represent a phase boundary, i.e., a critical theory? 
In addressing these questions, we shall not start from the IR theory in the first place but rather assume several locally-checkable conditions on a given quantum state (we shall call it ``a reference state'') and examine whether some universal properties or even the whole IR theory is the logical consequence of these local conditions.

Much progress has been recently made towards answering this question. A program called ``entanglement bootstrap'' (EB)
demonstrates that universal properties of 2+1D and 3+1D topologically-ordered phases follow logically from locally-checkable assumptions
about the many-body entanglement of local regions on a reference quantum state \cite{shi2020fusion,Shi:2020rne,knots-paper,Shi:2023kwr}. 
A similar approach has been advocated for 1+1D conformal field theory (CFT)~\cite{Lin2023}.

In this work, we focus on the emergent phenomenon associated with gapless edges from some 2+1D gapped states. 
Systems with an energy gap in the bulk\footnote{We will call them gapped systems for short.} can have gapless edges. 
In some cases, the gapless edge is robust from being gapped out by local perturbations\footnote{Such edge states are commonly called ungappable.}.  
Examples include edges of chiral gapped states~\cite{Kitaev2005} as well as non-chiral states with some non-zero higher central charges \cite{Ng2020,Kaidi:2021gbs} or non-zero minimal total central charge \cite{Siva-c-plus2021}. 
In many examples, such as fractional quantum Hall systems, people have conjectured or verified that the gapless edges are described by CFT at the IR limit \cite{witten1989quantum,moore1991nonabelions,wen1994chiral}. 
The key question that motivates this work is: 
What's the mechanism that results in the emergence of conformal symmetry in these gapless edges? 
Is the mechanism rooted from quantum entanglement properties of local regions near the edge of a reference state? This question can be explicitly phrased as: \emph{Is the emergence of conformal gapless edge a logical consequence of some quantum entanglement properties of local regions near the edge of a reference state?}
Moreover, can we understand the robustness of the ungappable edges in terms of this mechanism?
To answer these questions, we must first forgo the CFT assumption and try to identify several locally-checkable conditions on a reference state, from which one can prove the emergence of conformal symmetry.
Furthermore, from the robustness of these local conditions, one can tell the robustness of the emergent conformal symmetry. 
This paper makes the first step towards this goal.  

\subsection{A summary of this work}
In this work, we shall study a quantum state $\ket{\Psi}$ on a two-dimensional lattice on a disk; see Fig.~\ref{fig:intro-setup} for an illustration. 
We assume that on the coarse-grained lattice in the bulk regions [Fig.~\ref{fig:intro-setup}], the state satisfies one of the entanglement bootstrap axiom called {\bf A1}  and has a non-zero chiral central charge computed from modular commutators [Section~\ref{sec:bulk_assumptions}]. These two assumptions are borrowed from previous work \cite{Kim2021,shi2020fusion}, and they effectively enforce the bulk wavefunction to be a fixed-point wavefunction of some chiral gapped systems. 

\begin{figure}[htb]
    \centering
    \includegraphics[width=0.95\textwidth]{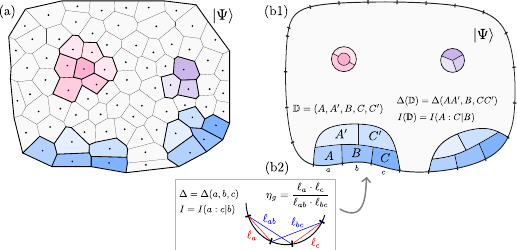}
    \caption{Basic setups. (a)  $\ket{\Psi}$ on a ``coarse-grained'' lattice. Each gray block stands for a coarse-grained site, which is a group of many microscopic degrees of freedom [Section~\ref{sec:basic-setup}]. The regions colored are combinations of regions involved in the three assumptions, which will be explicitly introduced in Section~\ref{sec:bulk_assumptions} and Section~\ref{sec:edge-assumption}. (b1) A ``smooth" version. $\mathbb{D} = (A,A',B,C,C')$ is a conformal ruler, where one can compute $\Delta(\mathbb{D}),I(\mathbb{D})$ defined in Eq.~\eqref{eq:Delta-I}. (b2) 1+1D CFT groundstate stacked on the edge of a round disk, with the bulk being a trivial product state. In this case, $\Delta(\dune) =\Delta(a,b,c) = -\frac{\ctot}{6}\ln(\eta_g)$, $I(\dune)=I(a:c|b) = -\frac{\ctot}{6}\ln(1-\eta_g)$. $\eta_g$ is the geometric cross-ratio, computed using the chord distance (represented by the red and blue lines) of these edge intervals.} 
    \label{fig:intro-setup}
\end{figure}

Our edge assumptions are defined on a collection of local edge regions $\mathbb{D}$ [Fig.~\ref{fig:intro-setup}(b1)], called as \emph{conformal ruler} for the reason that will be clear later. For each of those regions, we introduce quantum information-theoretic quantities called (total) \emph{central charge} candidate $\centralcharge(\mathbb{D})_{\ket{\Psi}}$ and \emph{quantum cross-ratio} candidate $\eta(\mathbb{D})_{\ket{\Psi}}$, 
defined as 
\begin{equation}\label{eq:def-c-eta-intro}
e^{-6\Delta(\dune)/\centralcharge}+e^{-6I(\dune)/\centralcharge}=1, \;\;  \quad \eta \equiv e^{-6\Delta(\dune)/\centralcharge},
\end{equation}  
where $\Delta(\dune)$ and $I(\dune)$ are two certain linear combination of entanglement entropies in $\mathbb{D}$ [Fig.~\ref{fig:intro-setup}(b1)]: 
\begin{equation}\begin{split}
    &\Delta(\dune) = \Delta(AA',B,CC') = S_{AA'B} + S_{BCC'}-S_{AA'}-S_{CC'} \\ 
    &I(\dune) = I(A:C|B) = S_{AB} + S_{BC}-S_{B}-S_{ABC}.
\end{split}
\end{equation} 
Here $S_X \equiv S(\rho_X) \equiv - \Tr (\rho_X \ln \rho_X)$ is the entanglement entropy of a reduced density matrix on region $X$, computed from the reference state $\ket{\Psi}$. These two particular entropy combinations are designed in a way that UV contributions from these entanglement entropies are canceled.  
The definition of $\fc(\dune)$ and $\eta(\dune)$ are motivated from 1+1D CFT. One can consider a 1+1D CFT groundstate stacking on the edge of a regular disk, with the bulk being a trivial product state. $\Delta(\dune)$ and $I(\dune)$ becomes a simple linear combination of entanglement entropies over three contiguous intervals [Fig.~\ref{fig:intro-setup}(b2)]. Utilizing the formula of the entanglement entropy of an interval of chord length $\ell$ on the CFT groundstate~\cite{Calabrese:2009qy}, $S(\ell) = \frac{\ctot}{6}\ln(\ell/\epsilon)$, where $\ctot$ is the total central charge of the CFT and $\epsilon$ is a UV cutoff, one can explicitly obtain $\Delta(\dune) = - \frac{\ctot}{6}\ln(\eta_g)$ and $I(\dune) = - \frac{\ctot}{6}\ln(1-\eta_g)$. Therefore, the solution of Eq.~\eqref{eq:def-c-eta-intro} in this case is exactly the total central charge $\ctot$ and the geometric cross-ratio $\eta_g$ of the three contiguous intervals [Fig.~\ref{fig:intro-setup}(b2)]. Our approach is to turn this table around and make $\fc$ and $\eta$ defined in Eq.~\eqref{eq:def-c-eta-intro} as a candidate of central charge and cross-ratio over any quantum state without assuming any symmetry beforehand. Remarkably, under the edge assumption we posit on $\fc$, $\eta$ indeed can be interpreted as a cross-ratio, which further indicates the existence of edge conformal geometry. Explicitly, the assumption [Stationarity condition] states: For every $\dune$, under any infinitesimal (norm-preserving) variation of the state $\ket{\Psi} \to \ket{\Psi}+\epsilon \ket{\Psi'}$, $\fc(\dune)$ is stationary, i.e.
\begin{equation}
    \delta \fc(\dune) = 0,
\end{equation}
where $\delta \fc(\dune)$ denotes the resulting variation of $\fc$ in linear order of $\epsilon$. 

We remark that our assumption is motivated by the speculation that $\centralcharge$ defined this way behaves as a $c$-function near the critical RG fixed point, analogous to the one defined by Zamolodchikov \cite{Zamolodchikov1986} or Casini-Huerta \cite{Casini2007} in the context of 1+1D relativistic quantum field theory. One evidence for this speculation is that $\centralcharge$ is indeed stationary if the edge physics is described by a 1+1D CFT. This comes from the fact that the stationarity condition is equivalent to a \emph{vector fixed-point equation} in terms of $\eta$ defined in Eq.~\eqref{eq:def-c-eta-intro}: 
\begin{equation}\label{eq:vector-2d-intro}
    \Big( \eta\hat{\Delta}(\dune)+(1-\eta)\hat{I}(\dune) \Big)\ket{\Psi} \propto \ket{\Psi}, \quad \forall \dune
\end{equation} 
where $\hat{\Delta}(\dune),\hat{I}(\dune)$ are linear combinations of modular Hamiltonians,
\begin{equation}\begin{split}
    &\hat{\Delta}(\dune) = \hat{\Delta}(AA',B,CC') = K_{AA'B} + K_{BCC'}-K_{AA'}-K_{CC'} \\ 
    &\hat{I}(\dune) = \hat{I}(A:C|B) = K_{AB} + K_{BC}-K_{B}-K_{ABC},
\end{split}
\end{equation} 
with $K_X \equiv -\ln(\rho_X)$ denoting the modular Hamiltonian of a reduced density matrix on region $X$. 
Eq.~\eqref{eq:vector-2d-intro} generalizes the vector fixed-point equation derived in 1+1D CFT \cite{Lin2023}. In particular, since this equation is satisfied by 1+1D CFT~\cite{Lin2023}, the stationarity condition holds true. We will discuss the equivalence of the two conditions in more detail in Section~\ref{sec:edge-assumption}. 

Our approach to demonstrating the emergence of conformal geometry is to derive the defining relations of cross-ratios. More precisely, on the physical setup described in [Section~\ref{sec:basic-setup}], based on these three assumptions [Section~\ref{sec:bulk_assumptions} and \ref{sec:edge-assumption}] --- (1) bulk {\bf A1}, (2) non-zero chiral central charge from the bulk modular commutator, and (3) stationarity condition of $(\centralcharge)_{\ket{\Psi}}$ --- we prove that $\eta$ satisfies certain consistency relations [Section~\ref{sec:emergence-chiral}]. 
These consistency relations also appear in the mathematics literature, which axiomatically define a set of cross-ratios~\cite{labourie2008cross,PMIHES_2007__106__139_0}. Moreover, these relations enable us to map all the (coarse-grained) edge intervals to a set of intervals on a circle, such that the quantum cross-ratios determined from our method are precisely equal to the geometric cross-ratios computed on the circle. Therefore, these consistency relations are enough to justify viewing $\eta$s as legitimate cross-ratios. We will explain this fact in detail in [Section~\ref{subsec:map-to-circle}]. In addition, under these three assumptions, utilizing the cross-ratio relations, we show that $\fc$ is the same for every region along the edge [Section~\ref{subsec:const-c}]. In [Section~\ref{sec:emergence-nonchiral}], replacing the non-zero chiral central charge assumption with another assumption [genericity condition], we derive a similar set of results for non-chiral states. 

Let us remark on the significance of our main result, the emergence of cross-ratios on the chiral edge. Firstly, cross-ratios provide a distance measure modulo global conformal transformation. The emergence of cross-ratios indicates the emergence of conformal geometry, whose origin is purely quantum information-theoretic; the proper notion of distance for the cross-ratios emerged from our approach, even without making any further assumptions!
Secondly, the emergence of conformal geometry is robust. Note that we did not assume any symmetry or geometric property of the edge. Even if the actual edge can be irregular, in the sense that the system does not have any translational symmetry around the edge, our approach continues to work. We discuss a result of a simple numerical example that demonstrates this point in Appendix~\ref{appendix:p+ip}. 
This phenomenon is likely tied to the robustness of the gapless edge under local perturbations.

What is more, the quantum cross-ratios enable us to construct approximate Virasoro generators in the purely chiral state, generalizing the ideas in~\cite{emergence-of-virasoro}. Proving the full algebraic relation is tangent to the future work. This work is the root of many future research directions, which will be discussed in Section~\ref{sec:discussions}.  


\section{Preliminaries}
\label{sec:basic-setup}
Throughout this paper, we shall study a many-body quantum state on a two dimensional disk, referred to as the \emph{reference state}. Physically, we can view the reference state as a groundstate of some 2+1D local Hamiltonian with a bulk gap, though we do not make use of this fact. Below we introduce our notations [Section~\ref{subsec:notations}] and the physical setup [Section~\ref{subsec:setups}] to describe this state.

\subsection{Notations}
\label{subsec:notations}

Unless specified otherwise, we shall refer to the reference state as $|\Psi\rangle$ throughout this paper. We shall reserve the uppercase letters to denote subsystems (or equivalently, regions). The complement of a subsystem will be denoted by placing a bar on top. For instance, for a subsystem $A$, $\bar{A}$ is the complement of that region. For denoting the Hilbert space, the symbols representing the subsystem will be placed in the subscript, e.g., $\mathcal{H}_A$. 

We shall use the standard notation for the density matrix, using Greek letters such as $\rho, \sigma, \lambda, \ldots$. For the reduced density matrix of a subsystem, we define $\rho_A = \text{Tr}_{\bar{A}}(\rho)$. Entanglement entropy of a subsystem is defined as $S(\rho_A) \equiv -\text{Tr}(\rho_A \ln \rho_A)$. We will often deal with various linear combinations of entanglement entropies over different subsystems. When the underlying global state is the same, we shall use the following short-hand notation. Without loss of generality, suppose we are given an expression of the form $\left(\sum_{i} \alpha_i S_{A_i}\right)_{\rho}$, where $\alpha_i \in \mathbb{R}$ and $\{A_i\}$ is a collection of subsystems. This expression is defined as 
\begin{equation}
    \left(\sum_{i} \alpha_i S_{A_i}\right)_{\rho} = \sum_{i}\alpha_i S(\rho_{A_i}).
\end{equation}

There are two linear combinations of entanglement entropies which shall be used frequently in this paper. The first is the conditional mutual information, defined as 
\begin{equation}
    I(A:C|B)_{\rho}\equiv \left( S_{AB}+S_{BC}-S_B-S_{ABC}\right)_{\rho}.  \label{eq:CMI}
\end{equation}
The strong subadditivity (SSA) of the von Neumann entropy~\cite{Lieb1973} can be expressed as $I(A:C|B) \ge 0$.
The other is a linear combination that appears in the weak monotonicity inequality\footnote{Weak monotonicity, $\Delta(A,B,C) \ge 0$ is equivalent to the strong subadditivity of entropy by the trick of purifying a quantum state, as is well known.}, which is
\begin{equation}
    \Delta(A,B,C)_{\rho}\equiv \left( S_{AB}+S_{BC}-S_A-S_C\right)_{\rho}.  \label{eq:WM}
\end{equation}
We remark that both quantities are non-negative. 

We shall also consider operator analogs of Eq.~\eqref{eq:CMI} and Eq.~\eqref{eq:WM}. Let $K_A \equiv -\ln \rho_A$ be the modular Hamiltonian. These are defined as 
\begin{equation}\label{eq:op-I-Delta}
    \begin{aligned}
    \hat{I}(A:C|B)_{\rho} &\equiv  K_{AB}+K_{BC}-K_B-K_{ABC},\\
    \hat{\Delta}(A,B,C)_{\rho}&\equiv  K_{AB}+K_{BC}-K_A-K_C.
    \end{aligned}
\end{equation}
We remark that $\hat{I}(A:C|B)$ is not necessarily positive semi-definite. However, $\hat{\Delta}(A,B,C)$ is indeed positive semi-definite~\cite{Lin2022-op}. 

\subsection{Physical Setup}
\label{subsec:setups}

The reference state $|\Psi\rangle$ can be defined on any two-dimensional manifold with boundaries, although we focus on a disk-like geometry for concreteness. More precisely, we envision a two-dimensional disk consisting of microscopic degrees of freedom, each locally interacting with each other. The reference state is a vector in a many-body Hilbert space that has a tensor product structure (or a $\mathbb{Z}_2$-graded tensor product structure for fermions) over these microscopic degrees of freedom.

Although the reference state is formally defined over these microscopic degrees of freedom, we shall study the same state from a more coarse-grained point of view. By partitioning the system into large disks and viewing each disk as a ``supersite,'' we obtain a state defined over a coarse-grained lattice [Fig.~\ref{fig:coarse-grained-lattice}]. Each site of the lattice now contains a large enough number of degrees of freedom so as to satisfy the assumptions elucidated in Section~\ref{sec:bulk_assumptions} and ~\ref{sec:edge-assumption}. 
Although there are more fine-grained spatial structures within each supersite, we will remain agnostic about this internal structure, simply viewing each supersite as an indecomposable object.

At this point, a natural question is how large the supersite should be. From a RG point of view, the disks ought to be large enough so that the physics at the scale of the supersites can be accurately described by an effective theory in the IR. 
More specifically, in the IR we expect the local reduced density matrices over a few supersites to satisfy certain nontrivial conditions. (These conditions are the bulk assumptions and edge assumptions we describe in Section~\ref{sec:bulk_assumptions} and \ref{sec:edge-assumption}.) Furthermore, we expect the effective theory in the IR to emerge from these conditions.  

We remark that such a coarse-graining isn't part of the assumptions in the logical framework in our work. As long as the local conditions describe in Section~\ref{sec:bulk_assumptions} and \ref{sec:edge-assumption} are satisfied, we can say the microscopic degrees of freedom has been coarse-grained enough. 

\begin{figure}[t]
    \centering
    \includegraphics[width=0.85\textwidth]{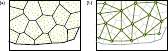}
    \caption{A coarse-grained lattice with an edge. (a) Each coarse-grained site (supersite) is a face that contains a set of sites contained within the face. (b) Coarse-grained sites are connected to each other by an edge if their corresponding faces are adjacent to each other. } 
    \label{fig:coarse-grained-lattice}
\end{figure}

\section{Bulk assumptions}
\label{sec:bulk_assumptions}
We now introduce the main assumptions about the reference state in the bulk, referred to as the \emph{bulk assumptions}. These are assumptions imposed on regions away from the edge, borrowed from the recent developments in entanglement bootstrap~\cite{shi2020fusion,Kim2021}. The entanglement bootstrap program rests on two basic axioms, referred to as \textbf{A0} and \textbf{A1}. In this paper, we will only impose \textbf{A1} in the bulk, discussed in more detail in Section~\ref{subsec:bulk_a1}. (The reason for dropping \textbf{A0} as well as its potential usage in future works are discussed in Section~\ref{subsec:comments_a0}.) In addition, we assume that the bulk is chiral in the sense we make precise in Section~\ref{subsec:chiral}.

We remark that we do not anticipate the assumptions presented below to hold on every physical state. In fact, there are states that break our assumption in a robust manner, such as highly-excited states, or even groundstates in the presence of defects \cite{Bombin2010,Brown2013} and domain walls \cite{Kitaev2012-domain-wall,Shi:2020rne,domain-wall-TEE}. Understanding the origin of such violations is of independent interest.

\subsection{Bulk A1}
\label{subsec:bulk_a1}
The first bulk assumption, which we refer to as ``bulk {\bf A1},'' is one of the entanglement bootstrap axioms~\cite{shi2020fusion}: 
\begin{assumption}[Bulk {\bf A1}] 
\label{ass:bulk_a1}
We assume the reference state $\ket{\Psi}$ satisfies {\bf A1}: for any disk-like region of linear size $O(1)$\footnote{i.e. order 1 of the coarse-grained lattice sites} in the bulk with partition $BCD$ topologically equivalent to the one in Fig.~\ref{fig:axioms-borrowed}, 
    \begin{equation}
        \Delta(B,C,D)_{\ket{\Psi}}\equiv \left( S_{BC}+S_{CD}-S_B-S_D\right)_{\ket{\Psi}} = 0. 
    \end{equation} 
\end{assumption}

\begin{figure}[!h]
    \centering
    \includegraphics[width=0.65 \textwidth]{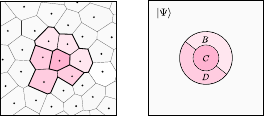}  
    \caption{Axiom {\bf A1}: $\Delta(B,C,D)_{|\Psi\rangle}\equiv (S_{BC} + S_{CD} - S_B - S_D)_{|\Psi\rangle}=0$ for partition of a bulk disk into $B,C,D$ in a way topologically equivalent in the figure.}
    \label{fig:axioms-borrowed}
\end{figure}
One way to understand this assumption is to use the area law of entanglement entropy. For any region $A$ in the bulk, the entanglement entropy satisfies 
\begin{equation}
    S(A)= \alpha |\partial A|-\gamma + \ldots,
\end{equation}
where $\alpha$ is a UV-dependent quantity, $\gamma$ is the topological entanglement entropy \cite{Kitaev2006,Levin2006}, and the ellipsis is the subleading term that vanishes in the limit of $|\partial A| \to \infty$. Importantly, one can verify that this form of the area law implies the bulk {\bf A1} assumption. 

\textbf{A1} is useful because one can deduce from it that certain density matrices are quantum Markov chains. Generally speaking, a tripartite quantum state $\rho_{ABC}$ is a quantum Markov chain if it satisfies $I(A:C|B)_\rho =0$. Assuming $\textbf{A1}$ holds for the subsystem $BCD$, for any $A$ in the complement of $BCD$, the strong subadditivity (SSA) of entropy implies the following:
\begin{equation}
    I(A:C|B)_{\rho}\leq \Delta(B,C,D)_{\rho}=0.
\end{equation}
Because $I(A:C|B)\geq 0$ again by SSA, we conclude that $I(A:C|B)_{\rho}=0$ and therefore $\rho_{ABC}$ is a quantum Markov chain. Note that this argument is agnostic about the choice of $A$ as long as it is in the complement of $BCD$. In particular, $A$ can include regions on the edge, even though we made no assumption about the edge so far!

Throughout this paper, we will utilize the quantum Markov chain structure in two ways. Firstly, although we only assumed \textbf{A1} in every $O(1)$-sized regions, this assumption implies that \textbf{A1} holds at a larger scale. (This is referred to as ``extension of axioms'' in \cite{shi2020fusion}.) 
This allows one to deform the regions used in certain linear combinations of entropies; see Appendix~\ref{appendix:invariant} for details. Secondly, a quantum Markov chain enables us to decompose the modular Hamiltonian of larger regions in terms of those on smaller regions \cite{Petz2003}: for any state $\ket\psi$,
\begin{equation}
    I(A:C|B)=0\quad \Leftrightarrow \quad K_{ABC}\ket\psi = (K_{AB}+K_{BC}-K_{B}) \ket\psi.
\end{equation}  
We will apply this identity throughout this paper. 

We note that in various lattice models, such as those with non-zero chiral central charge, \textbf{A1} can be satisfied only approximately, i.e., $\Delta(B,C,D)_{\ket{\Psi}} \approx 0$, if the dimension of the local Hilbert space is finite~(\eg~\cite{parent-hamiltonian,Li2024IMF}).
Thus our results would not directly apply to such lattice models. Nevertheless, it has been observed in numerical studies that the violation decreases as the subsystem size increases \cite{emergence-of-virasoro}. Therefore, we anticipate our results to be applicable in the limit where the size of every considered subsystem becomes large. More generally, we anticipate that if the quantum state is ``close enough'' to a zero-correlation length RG fixed-point, then the violation of bulk {\bf A1} will decrease to zero as the state further approaches the fixed-point. Proving such a statement on a rigorous footing is a subject for future work.

\subsection{Non-zero chiral central charge}\label{subsec:chiral}

The second bulk assumption states that the bulk is chiral. Our assumption can be precisely stated in terms of the modular commutator~\cite{Kim2021,Kim:2021tse}. This is a quantity defined for any tripartite quantum state $\rho_{ABC}$, denoted as $J(A,B,C)_{\rho_{ABC}}$:
\begin{equation}
    J(A,B,C)_{\rho_{ABC}} \equiv i \Tr([K_{AB},K_{BC}]\rho_{ABC}).
\end{equation}
Throughout this paper, we will define the chiral central charge as a constant $c_-$ appearing in the following formula~\cite{Kim2021,Kim:2021tse}:
\begin{equation}\label{eq:modular-c}
        J(A,B,C)_{\ket{\Psi}} = \frac{\pi}{3}c_{-},
\end{equation}
where $ABC$ is a local disk with partition shown in Fig.~\ref{fig:abc-borrowed}. Due to \textbf{A1}, the value of $c_-$ obtained from Eq.~\eqref{eq:modular-c} is a constant everywhere in the bulk~\cite{Kim2021,Kim:2021tse}. Now we can state the second bulk assumption: 
\begin{assumption}[Non-zero chiral central charge]
\label{ass:nonzero_ccc}
We assume the state $|\Psi\rangle$ is chiral 
in the sense that
   the chiral central charge $c_{-}$ computed from Eq.~\eqref{eq:modular-c} is non-zero.
\end{assumption}

\begin{figure}[h]
    \centering
    \includegraphics[width=0.65 \textwidth]{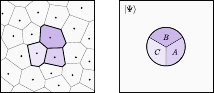} 
    \caption{Partition of a disk-shaped region $ABC$ in the bulk. Each subsystem is assumed to be sufficiently large compared to the correlation length.}
    \label{fig:abc-borrowed}
\end{figure}

Throughout the rest of the paper, when we refer to the chiral central charge $c_{-}$, we mean the one computed from Eq.~\eqref{eq:modular-c}. There has been evidence \cite{Kim:2021tse,Fan2022-free} on why this definition of chiral central charge should match the traditional definition \cite{Kitaev2005,Kane1997,Read1999} on physical states. As such, we shall call the state with $c_{-}\neq 0$ a \emph{chiral state}.

\subsection{A comment on the role of A0}
\label{subsec:comments_a0}

There is another axiom of entanglement bootstrap, known as {\bf A0}. It states that for any $BCD$ partition of a bulk disk of the topology shown in Fig.~\ref{fig:axioms-borrowed}, $S_{C} + S_{BCD} - S_{BD} =0$ for the reference state of interest. Because our results simply do not make use of this assumption, the conclusions drawn in this work should apply regardless of \textbf{A0}. Nonetheless, \textbf{A0} may play a nontrivial role in the future. We briefly comment on that prospect below.

Assuming \textbf{A1}, the fact that $c_-$ attains a constant value everywhere in the bulk was proved in Ref.~\cite{Kim2021}. However, without \textbf{A0}, it is not possible to conclude that $c_-$ is quantized. If we do not demand {\bf A0}, we can consider a reference state of the form of  $|\Psi\rangle = \sqrt{p}|\Psi_1\rangle + \sqrt{1-p} |\Psi_2\rangle$, where $p\in (0,1)$ and $|\Psi_1\rangle$ and $|\Psi_2\rangle$ are two topologically ordered groundstates. Let us further suppose that the two states are supported on orthogonal subspaces on each lattice site. The chiral central charge (computed from the modular commutator) would be $c_- = p c_-^1 + (1-p) c_-^2$, which can be continuously tuned between $c_-^1$ and $c_-^2$. For the chiral central charge appearing in the anyon theory, it is well-known that it must attain a quantized value related to the universal properties of the anyons; see~\cite[Appendix E]{Kitaev2005}. In order to rule out examples like this, we would need to assume \textbf{A0}.

\section{Edge assumption}
\label{sec:edge-assumption}
The two bulk assumptions reviewed in Section~\ref{sec:bulk_assumptions} are the assumptions already used in the existing literature~\cite{shi2020fusion,Kim2021}. In this Section, we introduce a new assumption on the edge from which certain features of the conformal symmetry emerge. 

In order to state our assumption, we shall first define a pair of information-theoretic quantities from a region adjacent to the edge [Fig.~\ref{fig-dune}], denoted as $\centralcharge$ and $\eta$ [Section~\ref{subsec:def-c-eta}]. These quantities shall ultimately correspond to the central charge and the cross-ratio of a CFT under our assumption, though at this point they are merely some information-theoretic quantities definable over any quantum state. In Section~\ref{subsec:stationary-vector}, we put forward our main assumption --- formulated in terms of $\centralcharge$ and $\eta$ --- from which aspects of the conformal symmetry emerge. 

\subsection{Key concepts: central charge and quantum cross-ratio candidates}
\label{subsec:def-c-eta}

In this Section, we introduce two information-theoretic quantities that will play a central role in this paper. These quantities will ultimately correspond to the central charge and the cross-ratio. However, without making any further assumptions (such as the ones in Section~\ref{subsec:stationary-vector}) such an interpretation cannot be justified. Therefore, for now we simply refer to them as \emph{central charge candidate} and \emph{quantum cross-ratio candidate}, denoted as $\centralcharge$ and $\eta$, respectively. Later in Section~\ref{sec:emergence-chiral} and~\ref{sec:emergence-nonchiral}, we will provide conditions under which these can be viewed as the central charge and the cross-ratio. In that context, we will refer to them simply as central charge and quantum cross-ratio. 

\begin{figure}[h]
    \centering
    \includegraphics[width=0.92 \textwidth]{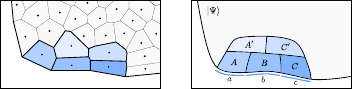} 
    \caption{A conformal ruler near the edge. It is a disk-like region with a partition $\mathbb{D}=(A,A',B,C,C')$ with a topology as shown. The three contiguous edge intervals $a,b,c$ are obtained by the intersection of $A,B,C$ with the edge.}
    \label{fig-dune}
\end{figure}

Here are the main motivations behind our definitions. Our definitions of $\centralcharge$ and $\eta$ are aimed at identifying the central charge and the cross ratio near the edge of a 2+1D groundstate with a gapped bulk, using entanglement entropies. In order to isolate the contribution to the entanglement entropy from the edge, we need to ensure that the contributions from the bulk cancel each other out in a judicious way. Furthermore, we want the definitions to be applicable even without any prior knowledge of the distance metric (used in defining the cross-ratio) near the edge.

Both of these challenges can be solved by using a 5-partite block $\mathbb{D}=(A,A',B,C,C')$ [Fig.~\ref{fig-dune}]. 
Each such block allows us to compute a pair $(\centralcharge, \eta)$ [Definition~\ref{def:c-eta}]. Such pairs can be used to (i) determine whether the underlying state allows a conformal distance measure on its edge [Section~\ref{subsec:stationary-vector}], and (ii) assign such a conformal distance measure (i.e. a distance measure modulo global conformal transformation) to the coarse-grained edge interval [Section~\ref{subsec:map-to-circle}]. Because such a block $\mathbb{D}$ allows us to recover the conformal distance measure, we refer to it as a {\it conformal ruler}.

For each such $\mathbb{D} = (A,A',B,C,C')$, we put forward two linear combinations of entropies, which carefully cancel out area law contribution from the bulk, namely
    \begin{equation}\label{eq:Delta-I}
    \begin{split}
          \Delta(\mathbb{D})&\equiv \Delta(AA',B,CC') = S_{AA'B}+S_{CC'B}-S_{AA'}-S_{CC'},\\ 
         I(\mathbb{D})&\equiv I(A:C|B) =S_{AB}+S_{BC}-S_B-S_{ABC}.
    \end{split}
    \end{equation}  
These quantities retain nontrivial information about the edge. For the physically interesting case of gapless edges, we expect $I,\Delta >0$. For gapped edges, both $I$ or $\Delta$ become vanishingly small, in the limit the size of each subsystem becomes large. Unless stated otherwise, for each $\mathbb{D}$, we will demand the following topological requirement on the underlying regions. Firstly, $AA'BCC'$ should be topologically a disk, and $A,B,C$ should be anchored at three contiguous coarse-grained intervals on the edge. Secondly, $AA'CC'$ should completely cover $B$ shielding it from the complement of $AA'BCC'$ [Fig.~\ref{fig-dune}]. Lastly, $A,C$ should not be adjacent to each other. 

We remark that $\Delta(\dune),I(\dune)$ defined in Eq.~\eqref{eq:Delta-I} is the ``canonical'' expression once a conformal ruler $\dune$ is specified. For example, in the same 5-partite region in Fig.~\ref{fig-dune}, $\dune' = (AA',\emptyset,B,C,C')$ is also a conformal ruler, in which  $\Delta(\dune') = \Delta(AA',B,CC'),I(\dune') = I(AA':C|B)$. 

We now introduce the definition of $\centralcharge$ and $\eta$.

\begin{definition}[$\centralcharge(\dune)$ and $\eta(\dune)$]\label{def:c-eta}
    Let $\ket{\Psi}$ be a state on a disk that satisfies bulk {\bf A1}. Consider a conformal ruler $\dune = (A,A',B,C,C')$. We define $\centralcharge(\dune)_{\ket{\Psi}}$ and $\eta(\dune)_{\ket{\Psi}}$ as the solution to the following equations:
    \begin{equation}\label{eq:def-c-eta}
        \begin{aligned}        &e^{-6\Delta(\mathbb{D})/\centralcharge(\dune)}+e^{-6I(\mathbb{D})/\centralcharge(\dune)}=1, \;\; 
        \\ 
        &\eta(\dune)_{\ket{\Psi}} \equiv e^{-6\Delta(\mathbb{D})/\centralcharge(\dune)}.
        \end{aligned}
    \end{equation}
\end{definition}

A few remarks are in order. First, when $I(\mathbb{D}), \Delta(\mathbb{D}) >0$, Eq.~\eqref{eq:def-c-eta} has a unique solution, where
   \begin{equation}
       \centralcharge(\dune)_{\ket{\Psi}}>0, \quad  \eta(\dune)_{\ket{\Psi}} \in (0,1).
   \end{equation}
When $I=0$ or $\Delta=0$, we can set $\centralcharge(\dune)_{\ket{\Psi}}=0$, which is the limit of $\fc$ defined in Eq.~\eqref{eq:def-c-eta} as $\Delta \to 0$ or $I\to 0$  [see Appendix~\ref{appendix:property-c-eta}]. 
Secondly, while $\centralcharge$ and $\eta$ become the central charge and the cross ratio \emph{under some assumptions}, more generally, one cannot interpret them in such a way. We shall discuss the relevant examples in the latter part of this Section. Thirdly, $\centralcharge(\dune)_{\ket{\Psi}}$ and $\eta(\dune)_{\ket{\Psi}}$ are invariant under the deformations of the conformal ruler in the bulk, such as the one shown in Fig.~\ref{fig:dune-deform}. This is because both $\Delta$ and $I$ are invariant under such deformation, a fact that follows straightforwardly from bulk \textbf{A1} [Appendix~\ref{appendix:invariant}].

\begin{figure}[h]
    \centering
    \includegraphics[width=0.78\textwidth]{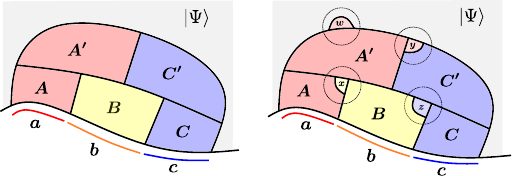}
    \caption{Examples of bulk deformation of a $\dune$: (i) $A\to A\setminus x, B \to Bx$; (ii) $A'\to A'w$; (iii) $A'\to A'y, C'\to C'\setminus y$; (iv) $C'\to C'z, B\to B\setminus z$. }
    \label{fig:dune-deform}
\end{figure}

Continuing the last remark, due to the invariance under the deformation in the bulk, it is sometimes more informative to specify the conformal ruler in terms of the edge intervals. Without loss of generality, let $a, b,$ and $c$ be the edge intervals on which the subsystems $A, B,$ and $C$ are anchored. We will sometimes refer to a conformal ruler of those regions as $\mathbb{D}(a, b, c)$. While this notation hides the explicit choice of $A, B, C, A',$ and $C'$, these details are irrelevant in calculations involving $I(\mathbb{D})$ and $\Delta(\mathbb{D})$. Later in Section~\ref{sec:emergence-chiral} and~\ref{sec:emergence-nonchiral}, we will use even more succinct notations, such as $\Delta_{a,b,c}$ and $I_{a,b,c}$ for $\Delta(\dune(a,b,c))$ and $I(\dune(a,b,c)$. 

We now provide examples for which $\centralcharge$ and $\eta$ take a clear physical meaning.  

\begin{exmp}[1+1D CFT groundstate]\label{exmp:CFT-eta_g}
    The simplest example is a 1+1D CFT groundstate on a circle. We identify the circle with the edge, and the bulk is left empty. The entanglement entropy of an interval in the groundstate of a 1+1D CFT on a circle is $S(\ell) = \frac{c_{\mathrm{tot}}}{6}\ln(\ell/\epsilon)$~\cite{Calabrese:2009qy}, where $\ell$ is the chord length of the interval, $\epsilon$ is a cutoff, and $c_{\mathrm{tot}}$ is the total central charge. One can calculate  
\begin{equation}
    \Delta = \Delta(a,b,c),\quad I = I(a:c|b).
\end{equation}
Plugging them into Definition~\ref{def:c-eta}, one can see that indeed $\centralcharge$ equals the total central charge of the CFT and $\eta$ equals the geometric cross-ratio ($\eta_g$) for the interval $(a,b,c)$ on a circle:
\begin{equation}\label{eq:eta_g}
    \centralcharge = c_{\mathrm{tot}},\quad \eta = \eta_g \equiv \frac{\ell_a\cdot \ell_c}{\ell_{ab}\cdot \ell_{bc}},
\end{equation} 
where $\ell_a$ is the chord length associated with interval (or arc) $a$.
See Fig.~\ref{fig:CFT-exmp} for an illustration.
\end{exmp}

\begin{figure}[h]
    \centering
    \includegraphics[width=0.25\textwidth]{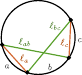}
    \caption{The chord lengths used in defining the geometric cross-ratio on a circle.}
    \label{fig:CFT-exmp}
\end{figure}

\begin{exmp}\label{exmp:CH-limit}
   There is a limit of our quantity $\centralcharge$ in which, when applied to the groundstate of a relativistic 1+1D QFT, it is related to Casini and Huerta's c-function $c_\text{CH} \equiv 6 r \partial_r S(r) $, where $S(r)$ is the entanglement entropy of an interval of length $r$ \cite{Casini:2004bw}.  
   Using strong subadditivity and Lorentz symmetry, Casini and Huerta showed that this quantity 
   is monotonic under RG flows of couplings in relativistic QFT. 

   Consider a translation-invariant state and suppose that $\eta$ is equal to the geometrical cross-ratio\footnote{The reader will wonder how strong this assumption is.  Indeed we should not expect it to hold for QFTs far from a fixed point.  However, it should hold for small deviations away from a CFT.}, so
\be \centralcharge = 6 { \Delta\over \ln 1/\eta } ~.\ee
Take the regions to be as in the argument for constraints on derivatives of $S(r)$ from SSA in \cite{grover2014certain},
so $|ab| = |bc| = r, |a| = |c| = r - \delta r$, and take $\delta r $ infinitesimal.
Then $\eta = (r-\delta r)^2 /r^2 = 1 - 2 \delta r$, $ \ln1/\eta =  2 \delta r$ and
\be \label{eq:tarun-limit} \centralcharge = 6 { \Delta\over \ln 1/\eta } = 6  \frac{ 2 S(r) -2 S(r - \delta r) }{ 2 \delta r} = 6  r \partial_r S(r) = c_\text{CH}, \ee
the RG monotone of Casini and Huerta.
The relation \eqref{eq:tarun-limit} between $c_\text{CH}$ and $\Delta$ is not entirely a surprise since the fact that $c_\text{CH}$ can be related to the quantity appearing in the weak monotonicity inequality is the crucial step of their proof of RG monotonicity~\cite{Casini:2004bw}.
\end{exmp}

\begin{exmp}[Chiral gapped system on a disk]
    Another class of examples is chiral gapped systems on a disk. Here, we assume that the bulk is gapped and satisfies the area law and the edge is one that obeys Hypothesis 1 of \cite{emergence-of-virasoro}, concerning the CFT behavior of the chiral edge; as explained in the reference, one can conclude that $\centralcharge$ is the central charge and $\eta$ is the geometric cross-ratio. 
    (For an alternative physical argument, see the ``cylinder argument" in the same reference.) 
\end{exmp}

\begin{exmp}[Chiral gapped system with an irregular edge]
Here is an example from numerical observation. On an irregular edge of $p+ip$ superconductor (detailed in Appendix~\ref{appendix:p+ip}), we computed $\centralcharge$ and $\eta$. The value $\centralcharge \approx 1/2$  matches the anticipated central charge of the edge. The values of $\eta$s computed from different choices of edge intervals satisfy the consistency relations one would expect for a set of cross-ratios with high precision. (See Section~\ref{subsec:three-relations} for those rules). Unlike the previous examples, in which the distance measure is given to us from the beginning, this example does not begin with any preferred choice of distance measure.  
\end{exmp}

\subsection{Assumption: stationarity condition}
\label{subsec:stationary-vector}

We now introduce our rather minimalistic edge assumption. The assumption we put forward is the \emph{stationarity condition} of $\centralcharge$, which posits that $\centralcharge$ is invariant under infinitesimal norm-preserving perturbations. 

Without loss of generality, consider any norm-preserving perturbation of the state $|\Psi\rangle$ of the form $|\Psi\rangle \to |\Psi\rangle + \epsilon |\Psi'\rangle$, where $\epsilon\in\mathbb{R}$ is infinitesimal and $|\Psi'\rangle$ is a state orthogonal to $|\Psi\rangle$. We use $\delta \fc$ to denote the resulting variation of $\fc$ in linear order of $\epsilon$. The stationarity condition states:  

\begin{assumption}[Stationarity condition]\label{ass:stationarity}
We assume the state $\ket\Psi$ satisfies the following stationarity condition: for every conformal ruler $\dune$,
\begin{equation}
    \delta \fc(\dune)_{|\Psi\rangle} =0,
\end{equation}
for any norm-preserving perturbation of $\ket{\Psi}$.
\end{assumption}

Interestingly, the stationarity condition turns out to be equivalent to a seemingly unrelated condition. This is the \emph{vector fixed-point equation} involving $\eta$ and the modular Hamiltonians, similar to the one introduced in Ref.~\cite{Lin2023}. In order to describe this assumption, it will be helpful to introduce the following notation. Given a conformal ruler $\mathbb{D}=(A, A', B, C, C')$, we can consider operator analogs of $\Delta(AA', B, CC')$ and $I(A:C|B)$:
\begin{equation}\label{eq:Delta-I-op-dune}
    \begin{aligned}
        \hat{\Delta}(\dune)&\equiv \hat{\Delta}(AA', B, CC') = K_{AA'B} + K_{CC'B} - K_{AA'} - K_{CC'}\\
        \hat{I}(\dune)&\equiv \hat{I}(A:C|B) = K_{AB} + K_{BC} - K_B - K_{ABC}.
    \end{aligned}
\end{equation}
As we show in Appendix~\ref{appendix:invariant}, when those operators act on $\ket{\Psi}$, one can deform their supports in the bulk region. Therefore, for a $\dune$ that anchors on the edge intervals $a,b,c$, we sometimes denote $\hat{\Delta}(\dune) = \hat{\Delta}_{a,b,c}$, and $\hat{I}(\dune) = \hat{I}_{a,b,c}$, when they act on the reference state.

Now we can state our stationarity condition equivalently using the vector fixed-point equation: 
\begin{definition}[Vector fixed-point equation]\label{definition:vector_fixed_point}
A state $|\Psi\rangle$ satisfies the vector fixed-point equations if the following is true: For any conformal ruler $\mathbb{D}$, 
\begin{equation}\label{eq:2d-vector}
    \KD(\eta(\mathbb{D})) |\Psi\rangle \propto |\Psi\rangle,
\end{equation}
where 
\begin{equation}\label{eq:KD(x)}
    \KD(x)\equiv x \hat{\Delta}(\dune)+(1-x)\hat{I}(\dune).
\end{equation}
\end{definition}

The equivalency relation between the stationarity condition and the vector fixed-point equation [Definition~\ref{definition:vector_fixed_point}] follows from the theorem stated below: 
\begin{theorem}\label{thm:equiv}
    Consider a conformal ruler $\dune=(A,A',B,C,C')$, with $\centralcharge(\dune)_{\ket{\Psi}}$ and $\eta(\dune)_{\ket{\Psi}}$ defined as in Definition~\ref{def:c-eta}. $\centralcharge(\dune)_{
    \ket{\Psi}}$ is stationary if and only if 
    \begin{equation}
        \KD(\eta)\ket{\Psi} \propto \ket{\Psi}, 
    \end{equation}
    where $\eta$ is the solution of Eq.~\eqref{eq:def-c-eta}.
\end{theorem}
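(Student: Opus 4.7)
The plan is to directly compute the first-order change of $\centralcharge(\dune)$ under a norm-preserving deformation $\ket{\Psi}\to\ket{\Psi}+\epsilon\ket{\Psi'}$ (with $\Re\langle\Psi|\Psi'\rangle=0$), reduce the stationarity condition to a scalar inner-product identity, and then invoke a real-Hilbert-space duality argument to convert that identity into the eigenvector statement $\KD(\eta)\ket{\Psi}\propto\ket{\Psi}$. Throughout I focus on the generic regime $\Delta(\dune),I(\dune)>0$, so that $\centralcharge>0$ and $\eta\in(0,1)$; the corner cases $\eta\in\{0,1\}$ are handled by taking limits using the analytic properties of $(\centralcharge,\eta)$ collected in Appendix~\ref{appendix:property-c-eta}.

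First, the standard first-order expansion of the von Neumann entropy gives $\delta S_X = 2\Re\langle\Psi|K_X|\Psi'\rangle$ for the reduced state on any region $X$ (the $\Tr(\delta\rho_X)$ contribution vanishes because $\Re\langle\Psi|\Psi'\rangle=0$). Summing with the signs dictated by Eqs.~\eqref{eq:Delta-I} and~\eqref{eq:Delta-I-op-dune},
\[
  \delta\Delta(\dune) = 2\Re\langle\Psi|\hat{\Delta}(\dune)|\Psi'\rangle, \qquad
  \delta I(\dune) = 2\Re\langle\Psi|\hat{I}(\dune)|\Psi'\rangle.
\]
Implicit differentiation of the defining relation $e^{-6\Delta/\centralcharge}+e^{-6I/\centralcharge}=1$, together with $\eta=e^{-6\Delta/\centralcharge}$ and $1-\eta=e^{-6I/\centralcharge}$, yields
\[
  \frac{\eta\Delta+(1-\eta)I}{\centralcharge}\,\delta\centralcharge \;=\; \eta\,\delta\Delta + (1-\eta)\,\delta I.
\]
Since $\eta\Delta+(1-\eta)I>0$ in the generic regime, $\delta\centralcharge=0$ is equivalent to $\eta\,\delta\Delta+(1-\eta)\,\delta I=0$, which in view of the previous display is the single scalar condition
\[
  \Re\langle\Psi|\KD(\eta)|\Psi'\rangle = 0 \qquad \text{for every } \ket{\Psi'} \text{ with } \Re\langle\Psi|\Psi'\rangle=0.
\]

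To promote this scalar condition to the vector equation $\KD(\eta)\ket{\Psi}\propto\ket{\Psi}$, set $\ket{v}=\KD(\eta)\ket{\Psi}$. Because $\KD(\eta)$ is Hermitian (a real linear combination of modular Hamiltonians), $\lambda\equiv\langle\Psi|\KD(\eta)|\Psi\rangle$ is automatically real. The stationarity condition says the real-linear functional $\ket{\Psi'}\mapsto\Re\langle v|\Psi'\rangle$ vanishes on the real codimension-one hyperplane $\{\ket{\Psi'}:\Re\langle\Psi|\Psi'\rangle=0\}$, so (assuming $\ket{v}\neq 0$, else the conclusion is trivial) there exists $\mu\in\mathbb{R}$ with $\Re\langle v|\Psi'\rangle=\mu\,\Re\langle\Psi|\Psi'\rangle$ for every $\ket{\Psi'}$. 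The substitution $\ket{\Psi'}\to i\ket{\Psi'}$ converts this into the analogous statement about imaginary parts, giving $\langle v-\mu\Psi|\Psi'\rangle=0$ for all $\ket{\Psi'}$ and hence $\KD(\eta)\ket{\Psi}=\mu\ket{\Psi}$ (taking the expectation in $\ket{\Psi}$ shows $\mu=\lambda$). The converse is immediate: if $\KD(\eta)\ket{\Psi}=\mu\ket{\Psi}$ with $\mu\in\mathbb{R}$, then $\Re\langle\Psi|\KD(\eta)|\Psi'\rangle=\mu\,\Re\langle\Psi|\Psi'\rangle=0$ for all admissible perturbations, and running the implicit differentiation backwards gives $\delta\centralcharge=0$.

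The main subtlety lies in the last step: the physical norm-preservation constraint is the single \emph{real} equation $\Re\langle\Psi|\Psi'\rangle=0$, not the stronger complex equation $\langle\Psi|\Psi'\rangle=0$, so one must explicitly exploit that $i\ket{\Psi}$ itself is an allowed variation to upgrade a real-part identity to the full complex eigenvalue equation. Reality of the eigenvalue then comes for free from Hermiticity of $\KD(\eta)$, and the only remaining loose end is the degenerate endpoints $\eta\in\{0,1\}$, where a short limiting argument based on the properties of $(\centralcharge,\eta)$ in Appendix~\ref{appendix:property-c-eta} is needed.
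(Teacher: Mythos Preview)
Your argument is correct and follows essentially the same route as the paper's proof in Appendix~\ref{sec:proof-of-equivalence}: implicitly differentiate the defining relation to reduce $\delta\centralcharge=0$ to $\eta\,\delta\Delta+(1-\eta)\,\delta I=0$, express this as $\Re\langle\Psi|\KD(\eta)|\Psi'\rangle=0$, and then upgrade the scalar condition to the eigenvector statement by playing off the perturbations $\ket{\Psi'}$ and $i\ket{\Psi'}$. The only packaging differences are that the paper isolates the step $\langle\Psi|\delta K_X|\Psi\rangle=0$ as a separate Lemma~\ref{lemma:dK} (whereas you absorb it into the direct formula $\delta S_X=\Tr(K_X\,\delta\rho_X)$), and the paper states the ``scalar $\Rightarrow$ eigenvector'' step as a standalone Lemma~\ref{lemma:dO}; your real-hyperplane argument is the same content, and your attention to $\Re\langle\Psi|\Psi'\rangle=0$ versus $\langle\Psi|\Psi'\rangle=0$ is if anything more careful than the paper's phrasing.
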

The proof of this theorem is given in Appendix~\ref{sec:proof-of-equivalence}. 

\noindent Following this theorem, one can conclude that the stationarity condition, namely $\fc(\dune)_{\ket{\Psi}}$ is stationary for every conformal ruler $\dune$, is equivalent to the vector fixed point equation condition [Definition~\ref{definition:vector_fixed_point}]. 

While the stationarity condition [Assumption~\ref{ass:stationarity}] and the vector fixed-point equation condition [Definition~\ref{definition:vector_fixed_point}] 
are equivalent thanks to Theorem~\ref{thm:equiv}, the motivations behind them are different. A motivation behind Assumption~\ref{ass:stationarity} is to define a $c$-function $\centralcharge$ that generalizes the central charge of the 1+1D CFT, even for non-relativistic systems. For relativistic systems, there are $c$-functions which monotonically decrease under RG flow~\cite{Zamolodchikov1986,Casini:2004bw,Casini2007}. In particular, at an RG fixed point, the $c$-function ought to be stationary against perturbations of the Lagrangian. Assumption~\ref{ass:stationarity} posits a condition in this spirit.\footnote{A difference is that we are directly perturbing a quantum state, whereas in Ref.~\cite{Zamolodchikov1986,Casini:2004bw,Casini2007}, it is the Lagrangian that is being perturbed.} A motivation behind the vector fixed-point equation [Definition~\ref{definition:vector_fixed_point}] is to define a proper notion of cross ratio $\eta$ even without knowing the distance measure. 
For a 2+1D gapped system on a disk whose edge is described by a CFT, if the edge is known to have a translational symmetric geometry, then as argued in~\cite{emergence-of-virasoro}, it is expected that the edge shall hold the vector fixed-point equation [Definition~\ref{definition:vector_fixed_point}] with geometric cross-ratio, which is a generalization of the vector fixed-point equation for 1+1D CFT groundstate~\cite{Lin2023}.
However, in systems in which the translational symmetric is not manifest or even absent,
e.g., the edge of a disordered quantum Hall system, it is less clear how to define the cross ratio. Demanding the vector fixed-point equation is one viable approach. It is remarkable that the two assumptions motivated from seemingly unrelated reasons are in fact equivalent to each other. 

Although the stationarity condition may appear to be a global condition, it is in fact locally checkable due to its equivalence to the vector fixed-point equation. That is, one only needs to work with the reduced density matrix ($\rho_\dune$) on the conformal ruler $\dune$:  
\begin{equation}
    \delta\centralcharge(\dune)_{\ket{\Psi}}=0 \quad \Leftrightarrow\quad \KD(\eta)\ket{\Psi}\propto \ket{\Psi} \quad \Leftrightarrow\quad \KD(\eta)\rho_{\dune} \propto \rho_{\dune}. 
\end{equation}
The first $\Leftrightarrow$ is the content of the Theorem~\ref{thm:equiv}.  
Now we explain the second  $\Leftrightarrow$. The $\Rightarrow$ direction is simple as one can simply trace out the complement of $AA'BCC'$ on both hand side of $\KD(\eta)\ket{\Psi}\bra{\Psi}\propto \ket{\Psi}\bra{\Psi}$. To see the $\Leftarrow$, one can first purify $\rho_{\dune}\to\ket{\Psi'}$ and obtain $\KD(\eta)\ket{\Psi'}\propto\ket{\Psi'}$. Then by Uhlmann's theorem~\cite{Uhlmann1976}, one can obtain $\ket{\Psi}$ from $\ket{\Psi'}$ by a unitary $I_{\dune}\otimes U_{\overline{\dune}}$ whose support is only within $\overline{\dune}$. As the unitary commutes with $\KD(\eta)$, one obtains $\KD(\eta)\ket{\Psi}\propto\ket{\Psi}$.  

One particular usage of the edge assumption is to decompose the modular Hamiltonians on a larger region in terms of the linear combinations of the modular Hamiltonians on smaller regions, when they act on $\ket{\Psi}$. The reverse process also works, allowing us to glue the local modular Hamiltonians to obtain modular Hamiltonian on larger regions. This resembles 
the situation of quantum Markov chains, where if some state $\ket\psi$ satisfies $I(A:C|B)_{\ket{\psi}}=0$, one can decompose $K_{ABC}\ket{\psi}=(K_{AB}+K_{BC}-K_{B})\ket\psi$, and vice versa. In the context of 1+1D CFT, this same decomposition idea is observed in \cite{Lin2023}. 

One might wonder why we advocated using the stationarity condition instead of the vector fixed-point equation, in spite of the fact that they are equivalent [Theorem~\ref{thm:equiv}]. While this is of course a matter of taste, we have reasons to believe that the stationary condition has a potential to be applicable in broader contexts. For one thing, the stationarity condition in its formulation explicitly includes a set of states in a small neighborhood of the underlying state. We can thus speculate that, near the RG-fixed point, $\centralcharge$ monotonically decreases under the RG flow. It will be interesting to understand if our definition of $\centralcharge$ measures the ``number of degrees of freedom'' in general quantum many-body systems, just like Zamolodchikov's $c$-function does for relativistic systems~\cite{Zamolodchikov1986}. 

There is also numerical evidence for our perspective that these edge conditions hold at fixed points of the RG, and that their violation decreases under coarse-graining. See, for example, Fig.~19 of \cite{emergence-of-virasoro} where the violation of the vector fixed-point equation decreases as the subsystem size increases. 

The relation between the stationarity condition and the vector fixed-point equation is evocative of the relation between the action principle and the equation of motion. At a classical level, they are equivalent formulations of the same physical laws. Often times, the equation of motion is more practical. However, the action principle played a key role in going beyond classical mechanics in terms of the path integral, thereby explaining the origin of the action principle. This anecdote invites us to wonder if the stationarity condition has more to tell us about the physics of quantum many-body systems in the future. 

\subsection{Examples of stationary states}
\label{subsec:stationary-examples}

In this section, we provide several examples of states that satisfy the stationarity condition. This is done by verifying the vector fixed-point equation [Definition~\ref{definition:vector_fixed_point}]. Then the claim follows immediately from Theorem~\ref{thm:equiv}.

\begin{exmp}[Gapped state with gapped boundaries]
For a topological order with a gapped boundary, the stationarity condition holds trivially.
In the language of entanglement bootstrap, for a gapped boundary, in addition to the bulk {\bf A1} described above, a boundary version of {\bf A1} is also satisfied; see \cite{Shi:2020rne}. This boundary axiom is precisely $\Delta=0$ (defined in \eqref{eq:Delta-I}) for each choice of $\dune$. By strong subadditivity, $I=0$. Thus
\begin{equation}
    \centralcharge=0, \quad \hat{I}|\Psi\rangle = \hat{\Delta}|\Psi\rangle =0.
\end{equation}
The vector fixed-point equation holds trivially, and this also implies  $\delta \centralcharge=0$. One may also derive $\delta \centralcharge=0$ bypassing the vector fixed-point equation. Note that $\centralcharge \ge 0$ by definition. Therefore, $\centralcharge=0$ is the absolute minimum, which implies the stationarity. 
\end{exmp}

\begin{exmp}[1+1D CFT on a circle]\label{exmp:1+1CFT}
    To fit our definition of $\centralcharge$ and $\eta$, one can regard the 1+1D CFT groundstate on a circle as the edge state of a 2+1D system on a disk with empty bulk. As we showed before 
    \begin{equation}        \Delta(AA',B,CC')=\Delta(a,b,c),\quad I(A:C|B) = I(a:c|b),
    \end{equation} 
    where $a,b,c$ are the edge intervals associated with region $A,B,C$ as in Fig.~\ref{fig-dune}.
    Therefore, $\eta = \eta_g$, the geometric cross-ratio computed from the circle \eqref{eq:eta_g}, and 
    \begin{equation}
    \begin{aligned}
         \KD(\eta)\ket{\Psi} &= \Big( \eta_g\hat{\Delta}(a,b,c) + (1-\eta_g)\hat{I}(a:c|b) 
         \Big)\ket{\Psi} \\
         &\propto \ket{\Psi},
    \end{aligned}    
    \end{equation}
    where the ``$\propto$" in the second line is by the vector fixed-point equation derived in \cite{Lin2023}. By Theorem~\ref{thm:equiv}, 1+1D CFT groundstates satisfy the stationarity condition $\delta\centralcharge=0$. 
\end{exmp}

\begin{exmp}[Chiral states with a bulk energy gap]
    Consider a chiral state with a bulk energy gap on a two-dimensional manifold with edges. As derived in \cite{emergence-of-virasoro} under some mild hypothesis, for a local region near an edge, the vector fixed-point equation is satisfied:
    \begin{equation}
        \KD(\eta)\ket{\Psi} \propto\ket{\Psi}.
    \end{equation} 
    Here $\eta$ computed from $I$ and $\Delta$ is identical to the geometric cross-ratio. 
    Hence, the stationarity condition is satisfied on the edges of such chiral states.  
\end{exmp}

\begin{exmp}[Chiral gapped system with an irregular edge] \label{exmp:pplusip}
We numerically tested the irregular edges of a $p+ip$ superconductor groundstate [Appendix~\ref{appendix:p+ip}]. In this setup, a preferred distance measure for the cross-ratio is unclear due to the irregularity. By computing the quantum cross-ratio $\eta$ from the groundstate (Definition~\ref{def:c-eta}) and checking the validity of the vector fixed-point equation for this $\eta$, we found a sharp verification of our assumption. More precisely, we have computed the error of vector equation $\KD (x) |\Psi\rangle \propto |\Psi\rangle$ for $ x\in [0,1]$, where $\KD (x)$ is that in Eq.~\eqref{eq:KD(x)}. Here the error is defined as $\sigma(\KD (x))= \sqrt{\langle \KD(x)^2\rangle -\langle \KD(x)\rangle^2}$, expectation taken from $|\Psi\rangle$. The error reaches its minimum for $x \approx \eta$.  The smallness of the error, $\sigma(\KD(\eta))\approx 10^{-3}$ suggests the validity of the vector fixed-point equation $\KD (\eta) |\Psi\rangle \propto |\Psi\rangle$. It follows that stationarity holds approximately on the edge. 
\end{exmp}

\begin{nonexmp}[Exotic non-CFT states that match CFT entropy]
There are states that match CFT entropy on each interval choice, which, nonetheless, violate the vector fixed-point equation. Such states do not satisfy the stationarity condition, and they are not true CFT groundstates. See Appendix~\ref{appendix:exotic-6-site} for the construction of such wavefunctions. Note that these exotic examples cannot be distinguished from CFT groundstates by the presence of a constant $\centralcharge$. (A constant $\centralcharge$ always implies that $\eta$ is a cross ratio, as explained in Prop.~\ref{prop:const-c-implies} below.) 
\end{nonexmp}

We remark on the importance of this non-example. During our search for a suitable local edge condition, one failed attempt was to demand $\centralcharge(\dune)$ to be a constant, independent of the choice of $\dune$. At first, this appears to be a reasonable candidate because $\centralcharge(\dune)$ being constant is equivalent to the condition that $\eta(\dune)$ is a cross-ratio [Section~\ref{subsec:const-c}]. However, this non-example shows that even a non-CFT state can satisfy these conditions, at least if we only consider a finite subset of coarse-grained intervals. This example is ruled out by the vector fixed-point equation, or equivalently, by the stationarity condition. While these conditions do imply the constant $\centralcharge(\dune)$, the converse is not necessarily true.

\section{Emergence of conformal geometry: chiral edge}\label{sec:emergence-chiral}

In this section, we will prove the emergence of conformal geometry based on the setups [Section~\ref{sec:basic-setup}] and three assumptions [Section~\ref{sec:bulk_assumptions} and \ref{sec:edge-assumption}].
Namely, we shall assume bulk {\bf A1} [Assumption~\ref{ass:bulk_a1}], nonzero chiral central charge [Assumption~\ref{ass:nonzero_ccc}], and stationarity [Assumption~\ref{ass:stationarity}]. By conformal geometry, we mean the existence of a map from the chiral edge to a round circle, which allows us to define a distance modulo global conformal transformation on the chiral edge [Prop.~\ref{prop:map-to-S1-exist}]. The existence of such a map follows from the relations between the quantum cross ratio candidates [Prop.~\ref{prop:rule1} and~\ref{prop:rule2}], which follow from our three assumptions.

\begin{figure}[htbp]
    \centering
    \includegraphics[width=0.9\textwidth]{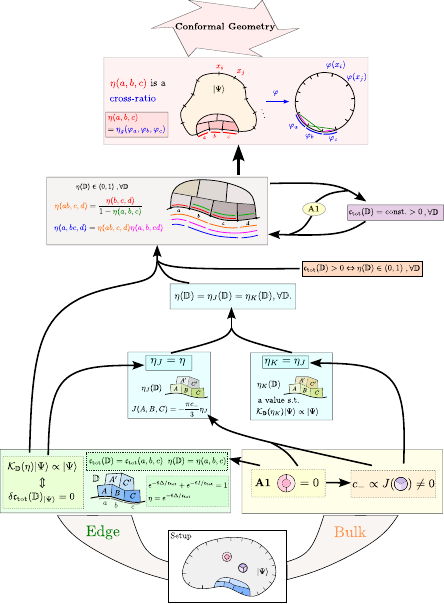}
    \caption{A summary of the main results and the routes to prove them from the assumptions.}
    \label{fig:summary}
\end{figure}

We summarize the route towards the proof of this main result in Fig.~\ref{fig:summary}. Also summarized are the major results we derive along the way. Below is a summary of the content of the subsections, with bracketed remarks pointing to their relation with Fig.~\ref{fig:summary}.
In Section~\ref{subsec:match-three}, we first identify two other quantum cross-ratio candidates $\eta_J$ and $\eta_K$ which are a priori unrelated to the quantum cross-ratio candidate $\eta$ [Section~\ref{subsec:def-c-eta}]. We prove that these alternative candidates are equal to $\eta$. Then, in Section~\ref{subsec:three-relations}, we prove that $\eta$ satisfies a set of relations that \emph{define} cross-ratios [Gray box]. In Section~\ref{subsec:map-to-circle}, we explain that our quantum cross-ratios provide a distance measure modulo global conformal transformations [pink box]. Finally, we show that $\centralcharge$ being constant is equivalent to $\eta$ being a cross-ratio in Subsection~\ref{subsec:const-c} [purple box]; importantly, stationarity implies that $\centralcharge$ is constant [Prop.~\ref{prop:const-c}], though the converse is not necessarily true [Appendix~\ref{appendix:exotic}]. 
\subsection{Three candidate cross-ratios}
\label{subsec:match-three}

Recall that, given a conformal ruler $\dune$ near the edge, we defined a quantum cross-ratio candidate $\eta(\dune)$ by Eq.~\eqref{eq:def-c-eta} from $\Delta$ and $I$ [Section~\ref{subsec:def-c-eta}]. We referred to $\eta(\dune)$ as a quantum cross-ratio candidate because it is computed from quantum information quantities without the help of any distance measure, and it becomes the geometric cross-ratio when the state is a 1+1D CFT (Example~\ref{exmp:CFT-eta_g} and \ref{exmp:1+1CFT}). Are there other ways to compute the CFT geometric cross-ratio with quantum information quantities? Can they be similarly promoted to quantum cross-ratio candidates in the broader setup of interest to us?  Under what assumptions do they agree?  Here we discuss two more such candidates, $\eta_J$ and $\eta_K$.

In \cite{Zou2022}, it was argued based on CFT assumptions that cross-ratio shows up in the modular commutator computed near a chiral edge. Motivated by this fact, one can formally define a cross-ratio candidate $\eta_J$ according to 
\begin{equation}\label{eq:def-eta-J}
    J(A,B,C) = -\frac{\pi c_{-}}{3}\eta_J,
\end{equation}
where $A, B, C$ here belongs to some conformal ruler $\dune=(A,A',B,C,C')$ [Fig.~\ref{fig:Jabc-dune}(a)]. Note that $c_- \ne 0$ is needed to define $\eta_J$ unambiguously, which is true because of Assumption~\ref{ass:nonzero_ccc}.

Another setup in which geometric cross-ratios appear is the vector fixed-point equation derived in \cite{Lin2023} for 1+1D CFT groundstates. A direct way to generalize this is to formally define $\eta_K$, such that there also exists a vector equation  
\begin{equation}\label{eq:eta-K}
    \KD(\eta_K)\ket{\Psi} \propto\ket{\Psi}.
\end{equation}
Here $\KD(x) \equiv x \hat{\Delta}(\mathbb{D}) + (1-x) \hat{I}(\mathbb{D})$. We emphasize that $\eta_K$ is defined as the solution(s) to Eq.~\eqref{eq:eta-K}. If the stationarity condition is satisfied, we already find a solution $\eta(\dune)$ solved from $\Delta,I$. The question is: Is it the \emph{only} solution to Eq.~\eqref{eq:eta-K}? 

As it turns out, on a state $\ket{\Psi}$ satisfying our assumptions, these quantum cross-ratio candidates are the same as $\eta(\dune)$: 
\begin{equation}
   \boxed{ \eta(\dune)  = \eta_J(\dune) = \eta_K(\dune).}
\end{equation}
This is true because of the following proposition.  

\begin{figure}[htb]
    \centering
    \subfloat[\centering some $\dune=(A,A'B,C,C')$]{{\includegraphics[width=0.360\columnwidth]{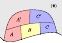}}}
    \qquad \quad 
    \subfloat[\centering decomposition $A=A_1A_2,\,\,B=B_1B_2,\,\,C=C_1C_2$]{{\includegraphics[width=0.360\columnwidth]{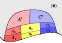}}}
    \caption{Diagrams for showing $J(AA',B,CC')=\frac{\pi c_{-}}{3} + J(A,B,C)$.}
    \label{fig:Jabc-dune}%
\end{figure}

\begin{Proposition}[Edge modular commutator and uniquness]\label{Prop:edge-J}
    Let $\ket{\Psi}$ be a state with bulk {\bf A1} and a non-zero chiral central charge. Suppose the stationarity condition is satisfied on a choice of $\dune$ near the edge, then 
    \begin{equation}\label{eq:edge-J-eta}
        J(A,B,C)_{\ket{\Psi}} = -\frac{\pi c_{-}}{3}\eta(\dune)_{\ket{\Psi}},
    \end{equation}
    and $\eta(\dune)_{\ket{\Psi}}$ is the only value of $x$ that obeys $\KD(x)\ket{\Psi}\propto \ket{\Psi}$, leading to a unique vector equation
    \begin{equation}
        \KD(\eta(\dune))\ket{\Psi}\propto \ket{\Psi}. 
    \end{equation}
\end{Proposition}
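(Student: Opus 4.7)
I would prove the two claims separately, both leveraging the operator form of stationarity from Theorem~\ref{thm:equiv} together with the Markov chain structure provided by bulk \textbf{A1} and the bulk value $\frac{\pi c_-}{3}$ of the modular commutator.

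\textbf{Edge modular commutator formula.} My plan is to first establish the geometric decomposition $J(AA', B, CC')_{\ket\Psi} = \frac{\pi c_-}{3} + J(A, B, C)_{\ket\Psi}$ indicated in Fig.~\ref{fig:Jabc-dune}(b), and then to compute $J(AA',B,CC')$ directly from stationarity. For the decomposition, I would partition each of $A, B, C$ into a bulk-adjacent piece and an edge-adjacent piece so that the bulk pieces together with $A'$ and $C'$ form a bulk disk tri-partition with modular commutator $\frac{\pi c_-}{3}$ (Assumption~\ref{ass:nonzero_ccc}). Bulk \textbf{A1} then supplies the Markov chains needed to rewrite $K_{AA'B}$ and $K_{BCC'}$ via the identity $K_{XYZ}\ket\Psi = (K_{XY}+K_{YZ}-K_Y)\ket\Psi$ (valid whenever $I(X{:}Z|Y)=0$), splitting each into a bulk piece and an edge piece. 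Substituting into $J(AA',B,CC') = i\Tr([K_{AA'B}, K_{BCC'}]\rho)$, the bulk-bulk cross terms collapse to $\frac{\pi c_-}{3}$ while the edge-edge cross terms give $J(A,B,C)$. To extract $J(AA',B,CC')$ itself, I use that Theorem~\ref{thm:equiv} gives $\KD(\eta)\ket\Psi = \lambda\ket\Psi$, hence $\langle\Psi|[M, \KD(\eta)]|\Psi\rangle = 0$ for every Hermitian $M$. Choosing $M$ judiciously (e.g. $M = K_{AA'B}$) and expanding using Eq.~\eqref{eq:Delta-I-op-dune}, each resulting term $\Tr([K_X, K_Y]\rho)$ either vanishes by bulk Markov structure, equals a bulk modular commutator $\frac{\pi c_-}{3}$, or is a multiple of $J(AA',B,CC')$. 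Solving the resulting linear equation yields $J(AA',B,CC') = \frac{\pi c_-}{3}(1-\eta)$, and combined with the decomposition gives $J(A,B,C) = -\frac{\pi c_-}{3}\eta(\dune)_{\ket\Psi}$.

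\textbf{Uniqueness.} Suppose some $x$ satisfies $\KD(x)\ket\Psi \propto \ket\Psi$. Since $\KD(\eta)\ket\Psi \propto \ket\Psi$ by stationarity and Theorem~\ref{thm:equiv}, subtracting yields $(x-\eta)(\hat\Delta(\dune) - \hat I(\dune))\ket\Psi \propto \ket\Psi$. If $x\neq \eta$, then $\ket\Psi$ is an eigenvector of $\hat\Delta(\dune) - \hat I(\dune)$, which in particular forces $\langle\Psi|[K_{AB}, \hat\Delta(\dune) - \hat I(\dune)]|\Psi\rangle = 0$. Expanding this commutator with the same Markov-chain simplifications used above, every term either vanishes or is proportional to $J(A,B,C)$. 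Since the first part of the proposition gives $J(A,B,C) = -\frac{\pi c_-}{3}\eta \neq 0$ (using $c_-\neq 0$ and $\eta\in(0,1)$ when $\Delta, I > 0$), this contradiction forces $x = \eta$.

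The main obstacle in both parts will be the careful bookkeeping of which conditional mutual informations actually vanish given the specific topology of $\dune$ in Fig.~\ref{fig-dune}, so that the modular-Hamiltonian decomposition identity applies, and checking in each commutator expansion that only the intended $J$-type terms survive --- everything else must collapse to either zero or $\frac{\pi c_-}{3}$ by bulk \textbf{A1} plus the bulk modular commutator formula. A minor subtlety arises in the degenerate case $\Delta = 0$ or $I = 0$, where $\centralcharge(\dune) = 0$ and $\eta \in \{0,1\}$; this boundary case should be handled by a direct consistency check rather than through the vector equation.
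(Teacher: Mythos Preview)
Your approach to the edge modular commutator formula is essentially identical to the paper's: both establish the decomposition $J(AA',B,CC') = \tfrac{\pi c_-}{3} + J(A,B,C)$ via the Markov splitting of Fig.~\ref{fig:Jabc-dune}(b), and both then extract the value by exploiting $\langle\Psi|[\KD(\eta), M]|\Psi\rangle = 0$ for a judicious $M$ (you choose $M=K_{AA'B}$, the paper chooses $M=K_{BCC'}$; these are related by the $A\leftrightarrow C$ symmetry of the ruler).

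For uniqueness the routes diverge. The paper simply observes that the computation in Eq.~\eqref{eq:matching-key} used only the relation $\KD(\eta)|\Psi\rangle\propto|\Psi\rangle$, never the entropy definition of $\eta$; replaying it verbatim with any solution $\eta_K$ yields $\eta_K=\eta_J$, and since $\eta_J$ is a fixed number determined by the modular commutator, all solutions coincide. Your subtraction route is sound in spirit but has a gap as written: the commutator $\langle[K_{AB},\hat\Delta(\dune)-\hat I(\dune)]\rangle$ does \emph{not} reduce to a pure multiple of $J(A,B,C)$ --- contributions such as $i\langle[K_{AB},K_{AA'}]\rangle=J(B,A,A')$ and $i\langle[K_{AB},K_{BCC'}]\rangle=J(A,B,CC')$ survive, and relating these to $J(A,B,C)$ requires further Markov bookkeeping that is not obviously available from bulk \textbf{A1} alone. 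There is an easy repair: once you have both $(\hat\Delta-\hat I)|\Psi\rangle\propto|\Psi\rangle$ and $\KD(\eta)|\Psi\rangle\propto|\Psi\rangle$, linear combination gives $\hat I(\dune)|\Psi\rangle\propto|\Psi\rangle$ on its own, and then
\[
i\langle\Psi|[K_{AB},\hat I(\dune)]|\Psi\rangle = i\langle\Psi|[K_{AB},K_{BC}]|\Psi\rangle = J(A,B,C),
\]
with every other term vanishing by the self-commutator or sub-region arguments. This forces $J(A,B,C)=0$, contradicting the first part when $c_-\neq 0$ and $\eta\in(0,1)$. Still, the paper's reuse argument is more economical, since it requires no new commutator computation at all.
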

\begin{remark}
    In Eq.~\eqref{eq:edge-J-eta}, the chiral central charge $c_{-}$ is the one defined in terms of the \emph{bulk} modular commutator. If the state has zero bulk modular commutator then it immediately follows that $J(A,B,C)_{\ket\Psi}=0$. 
\end{remark}

\begin{proof}
Without loss of generality, consider a conformal ruler $\dune$ depicted in Fig.~\ref{fig:Jabc-dune}(a). We will prove our claim for these regions. Because the conformal ruler we use in this proof shall be always $\dune$, we will simplify our notation by writing $\eta(\dune), \eta_J(\dune), \eta_K(\dune)$ as $\eta, \eta_J,$ and $\eta_K$, respectively.

We now show that $\eta_J = \eta$. To that end, we can first derive a formula for the modular commutator for another choice of regions $(AA',B,CC')$.  
\begin{equation}
    J(AA',B,CC') = i\bra{\Psi}[K_{AA'B},K_{BCC'}]\ket{\Psi} = \frac{\pi c_{-}}{3}(1- \eta_J). 
\end{equation}
The key to deriving this equation is the Markov decomposition. Let $A = A_1A_2,B = B_1B_2,C=C_1C_2$, as shown in Fig.~\ref{fig:Jabc-dune}(b). Notice 
\begin{equation}\begin{split}
    &I(A_1B_1:A'|A_2B_2) = 0\quad \Rightarrow \quad K_{AA'B}\ket{\Psi} = (K_{A_1A_2B_1B_2} + K_{A' A_2B_2} - K_{A_2B_2})\ket{\Psi} \\ 
    &I(C_1B_1:C'|C_2 B_2) = 0\quad \Rightarrow \quad K_{CC'B}\ket{\Psi} = (K_{C_1C_2B_1B_2} + K_{C' C_2B_2} - K_{C_2B_2})\ket{\Psi} .
\end{split}
\end{equation}
Therefore 
\begin{equation}\label{eq:bulk-vs-bdy-J}
\begin{aligned}
  J(AA',B,CC') &= J(A_1A_2,B_1B_2,C_1C_2) + J(A_2A',B_2,C_2C') \\
  & = -\frac{\pi c_-}{3} \eta_J + \frac{\pi c_-}{3}\\
  &= \frac{\pi c_{-}}{3}(1-\eta_J).   
\end{aligned}  
\end{equation}
In the first equality, we used the fact that $I(A:C|B)=0 \Rightarrow J(A,B,C)=0$ to show only two out of nine terms survive. The second line follows from the definition of $\eta_J$ in Eq.~\eqref{eq:def-eta-J} and the bulk modular commutator formula.

With both $J(A,B,C)$ and $J(AA',B,CC')$ written in terms of $\eta_J$, we now use the vector fixed-point equation to show $\eta_J = \eta$. Since $\KD(\eta)\ket{\Psi}\propto\ket{\Psi}$ [Theorem~\ref{thm:equiv}], we have 
\begin{equation}\label{eq:matching-key}
\begin{aligned}
    &\bra{\Psi} [\KD(\eta),K_{BCC'}]\ket{\Psi} = 0 \\ 
    \Rightarrow \quad & \eta  \bra{\Psi}[K_{AA'B},K_{BCC'}]\ket{\Psi}   + (1-\eta) \bra{\Psi}[K_{AB},K_{BCC'}]\ket{\Psi}  =0 \\
\Rightarrow \quad & \eta (1-\eta_J)  -\eta_J (1-\eta)= 0 \\
\Rightarrow \quad & \eta = \eta_J. 
\end{aligned}
\end{equation}
Therefore, we proved that $\eta=\eta_J$.

The derivation above indicates the uniqueness of the solution to 
\begin{equation}
    \KD(\eta_K)\ket{\Psi}\propto\ket{\Psi}. 
\end{equation}
One can simply repeat the steps in Eq.~\eqref{eq:matching-key} with $\eta_K$ replacing $\eta$, then 
\begin{equation}
    \eta_K = \eta_J.
\end{equation}
Therefore, 
\begin{equation}
   \boxed{ \eta (\dune) = \eta_J (\dune) = \eta_K (\dune). }
\end{equation}
Note that we only used a vector equation (alternatively stationarity) at a single $\dune$. 
\end{proof}

This result implies that, for a chiral state satisfying the stationarity condition near the edge, the ``correct'' $\eta$ that goes into the vector fixed-point condition 
\begin{equation}
    \KD(\eta)\ket{\Psi} \propto \ket{\Psi},
\end{equation}
can be computed not only using $\Delta, I$ [Eq.~\eqref{eq:def-c-eta}], but also using the edge modular commutator formula [Eq.~\eqref{eq:edge-J-eta}]. This fact plays a key role in the proof of the consistency relations of cross-ratios in the next subsection. 

\subsection{Consistency relations of cross-ratios}
\label{subsec:three-relations}

In this Section, we will derive a set of consistency relations for $\eta$s for different conformal rulers, taking advantage of the results in Section~\ref{subsec:match-three}. These relations turn out to be the defining properties of cross-ratios. 

Throughout this derivation, we shall use the following simplified notation. Because objects such as $\Delta(AA',B,CC')$ and $I(A:C|B)$, and their operator analogs acting on the global state $\ket\Psi$ ($\hat{\Delta}(AA',B,CC')$ and $\hat{I}(A:C|B)$), are invariant under the deformation of the bulk region, it makes sense to specify the conformal ruler only in terms of the edge intervals. More precisely, consider a contiguous set of intervals $a, b$, and $c$ on the edge, on which the subsystems $A, B,$ and $C$ are anchored. We shall denote such a conformal ruler as $\dune(a,b,c)$. We can further define the following short-hand notations:
\begin{equation}
    \hat{\Delta}_{a,b,c} \equiv \hat{\Delta}(\dune(a,b,c)),\quad \hat{I}_{a,b,c} \equiv \hat{I}(\dune(a,b,c)), 
\end{equation}
\begin{equation}
    \Delta_{a,b,c} =\Delta(\dune(a,b,c)),\quad I_{a,b,c} \equiv I(\dune(a,b,c)),
\end{equation}
\begin{equation}\label{eq:c-eta-abc}
        \centralcharge(a,b,c)\equiv \centralcharge(\dune) ,\quad \eta(a,b,c)=\eta(\dune).
\end{equation} 
These are the conventions that we will use in this Section. 

We identify two groups of consistency relations. The first group relates $\eta(a,b,c)$ to $\eta$s that involve the complement of $abc$ on the physical edge. We call it \emph{complement relations} (Prop.~\ref{prop:rule1}). The second group of relations enables us to decompose $\eta$s on larger intervals in terms of those on smaller intervals. We refer to these as \emph{decomposition relations} (Prop.~\ref{prop:rule2}).

\begin{Proposition}[Complement relation]\label{prop:rule1}
    Consider a conformal ruler $\dune$ that intersects with the physical edge at $(a,b,c)$. Let $d$ be the complement of $abc$ on the edge (see Fig.~\ref{fig:eta-rule1}). If $\ket{\Psi}$ satisfies bulk \textbf{A1}, then 
\begin{equation}\label{eq:rule1}
    \eta(a,b,c) = 1-\eta(b,c,d)  = \eta(a,d,c) = 1-\eta(d,a,b). 
\end{equation}
and 
\begin{equation}\label{eq:rule1-c}
    \centralcharge(a,b,c) = \centralcharge(b,c,d) = \centralcharge(a,d,c) = \centralcharge(d,a,b).
\end{equation}
\end{Proposition}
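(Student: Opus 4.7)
The plan is to reduce the claim to an elementary entropic identity on a four-region pure-state partition of the disk, and then read off the $(\centralcharge,\eta)$-statement directly from the defining equation.

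First, I would use the deformation invariance of $\Delta(\dune)$ and $I(\dune)$ under bulk \textbf{A1} (proved in Appendix~\ref{appendix:invariant}) to bring the four conformal rulers $\dune(a,b,c)$, $\dune(b,c,d)$, $\dune(a,d,c)$, and $\dune(d,a,b)$ into a common ``pie-slice'' partition of the whole disk into four edge-anchored regions $A,B,C,D$ meeting in the interior, with the bulk buffers ($A',C'$ for $\dune(a,b,c)$ and the analogues for the others) deformed to be empty. In this common partition, every $\Delta_{\bullet,\bullet,\bullet}$ and $I_{\bullet,\bullet,\bullet}$ becomes a fixed linear combination of the entropies $S_X$ for $X\subseteq\{A,B,C,D\}$, and since $\ket{\Psi}$ is pure on the whole disk one may use $S_X = S_{\overline X}$ to rewrite any of these entropies in terms of the three remaining regions.

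A short bookkeeping step then yields the entropy identities
\begin{equation*}
\Delta_{a,d,c}=\Delta_{a,b,c},\quad I_{a,d,c}=I_{a,b,c},\quad \Delta_{b,c,d}=\Delta_{d,a,b}=I_{a,b,c},\quad I_{b,c,d}=I_{d,a,b}=\Delta_{a,b,c}.
\end{equation*}
For example, $\Delta_{a,d,c}=S_{AD}+S_{DC}-S_A-S_C$; purity of $\ket{\Psi}$ gives $S_{AD}=S_{BC}$ and $S_{DC}=S_{AB}$, so this equals $\Delta_{a,b,c}$; the other identities follow from analogous complementations. The proposition then drops out from the defining equation $e^{-6\Delta/\centralcharge}+e^{-6I/\centralcharge}=1$ with $\eta=e^{-6\Delta/\centralcharge}$: this equation is invariant under $\Delta\leftrightarrow I$ with $\centralcharge$ unchanged and $\eta\mapsto 1-\eta$. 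Hence the $(a,b,c)\to(a,d,c)$ relabelling preserves both $\Delta$ and $I$, and so preserves $\centralcharge$ and $\eta$, while the $(a,b,c)\to(b,c,d)$ and $(a,b,c)\to(d,a,b)$ relabellings exchange $\Delta$ with $I$, and so preserve $\centralcharge$ but send $\eta\mapsto 1-\eta$, as claimed.

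The main obstacle is justifying that the pie-slice reduction is legitimate, i.e.\ that the A1-based deformations of Appendix~\ref{appendix:invariant} really do connect the generic conformal ruler of Figure~\ref{fig-dune} to the pie-slice configuration while preserving the topological requirements of a conformal ruler at every step (in particular that $A$ and $C$ continue to shield $B$ from its complement, and likewise for the other rulers). Once this reduction is in hand the rest is routine; note that no appeal to Assumption~\ref{ass:nonzero_ccc} or to the stationarity condition is needed, which is consistent with the fact that the proposition is stated under bulk \textbf{A1} alone.
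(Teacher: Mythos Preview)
Your approach is essentially the paper's: both reduce to entropy identities via bulk \textbf{A1} deformation invariance plus purity, and then read off the $(\centralcharge,\eta)$ relations from the defining equation (the $\Delta\leftrightarrow I$ swap sending $\eta\mapsto 1-\eta$ with $\centralcharge$ fixed). The one substantive difference is the choice of common partition. You propose a four-region pie-slice with all buffers deformed to empty; the paper instead keeps a fifth, central bulk region $E$ and uses rulers such as $\dune(a,b,c)=(A,E,B,C,\emptyset)$ and $\dune(b,c,d)=(B,\emptyset,C,D,E)$. With $E$ present, each ruler manifestly satisfies the topological requirements (the outer regions are separated by $E$, and the middle region is shielded), so the obstacle you flag---whether the pie-slice endpoint of the deformation is still a legitimate conformal ruler---never arises. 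Once $E$ is absorbed into the appropriate buffer, your entropy identities ($\Delta_{a,b,c}=I_{b,c,d}$, $I_{a,b,c}=\Delta_{b,c,d}$, etc.) coincide with the paper's, and the conclusions are identical.
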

\begin{figure}[h]
    \centering
    \includegraphics[width=0.32\textwidth]{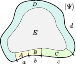}
    \caption{A disk with an edge partitioned into four intervals $a,b,c$ and $d$. The partition of the disk into $A,B,C,D,E$ is good enough for computing $\centralcharge$ and $\eta$ for any three successive intervals. Note that $\dune(a,b,c)=(A, E,B,C, \emptyset)$ makes a legitimate conformal ruler. In the same way, we can identify conformal rulers $\dune(b,c,d),\dune(c,d,a),\dune(d,a,b)$ in this figure.}
    \label{fig:eta-rule1}
\end{figure}

\begin{proof}
    The proof only requires bulk \textbf{A1} and the pure state condition. Due to the bulk \textbf{A1} condition, $(\centralcharge,\eta)$ depends only on the edge intervals. Pure state condition enables us to identify the entanglement entropy of a region $X$ with that of its complement, i.e., $\overline{X}$: $S_{X} = S_{\overline{X}}$. This lets us relate $\dune(a,b,c)$ to the other conformal rulers containing the complement of $abc$, namely $\dune(b,c,d),\dune(c,d,a),\dune(d,a,b),d = \overline{abc}$. 

    Consider the partition of the system shown in Fig.~\ref{fig:eta-rule1}. We take $\dune(a,b,c)=(A,E,B,C,\emptyset)$ and $\dune(b,c,d)=(B,\emptyset,C,D,E)$ for an example: 
    \begin{equation}
        \begin{aligned}
               I_{a,b,c} &\equiv I(A:C|B) \\
                        & = S_{AB} + S_{BC} - S_B - S_{ABC} \\
                        & =  S_{CDE} + S_{BC} - S_B - S_{DE} \\
                        & = \Delta(B,C,DE)\\
                        & \equiv \Delta_{b,c,d},
        \end{aligned}
        \begin{aligned}
            \Delta_{a,b,c} &\equiv \Delta(AE,B,C) \\
                     & = S_{ABE} + S_{BC} - S_{AE} - S_{C} \\
                     & =  S_{CD} + S_{BC} - S_{BCD} - S_{C} \\
                     & = I(B:D|C)\\
                     & \equiv I_{b,c,d},
     \end{aligned}
    \end{equation}
    where the third line in both columns follows from the pure state condition. To relate $(\centralcharge,\eta)$ between these two conformal rulers: Recalling $\Delta_{a,b,c} = -\frac{\centralcharge(a,b,c)}{6}\ln(\eta(a,b,c))$ and $I_{a,b,c} = -\frac{\centralcharge(a,b,c)}{6}\ln(1-\eta(a,b,c)) $, we can obtain 
    \begin{equation}\begin{split} 
        &\begin{aligned} 
        \Delta_{a,b,c} = I_{b,c,d}\quad &\Rightarrow \quad \centralcharge(a,b,c)\ln(\eta(a,b,c)) = \centralcharge(b,c,d)\ln(1-\eta(b,c,d)) \\ 
        I_{a,b,c} = \Delta_{b,c,d}\quad &\Rightarrow \quad \centralcharge(a,b,c)\ln(1-\eta(a,b,c)) = \centralcharge(b,c,d)\ln(\eta(b,c,d)).
        \end{aligned} \\ 
        \Rightarrow \quad &\left\{ \begin{aligned} &\centralcharge(a,b,c) = \centralcharge(b,c,d)\\ &\eta(a,b,c)=1-\eta(b,c,d) \end{aligned}\right.. 
    \end{split}
    \end{equation}
    Similarly, we can obtain  
    \begin{align}
        \left.
        \begin{aligned}
        &\Delta_{a,b,c} = \Delta_{c,d,a} \\ 
        &I_{a,d,c} = I_{c,b,a} 
        \end{aligned}\right\} & \quad \Rightarrow \quad 
        \left\{\begin{aligned} &\centralcharge(a,b,c) = \centralcharge(c,d,a)\\ &\eta(a,b,c) = \eta(c,d,a) \end{aligned}\right.\\ 
        \left.\begin{aligned}
        &\Delta_{a,b,c} = I_{d,a,b}\\ &I_{a,b,c} = I_{d,a,b} \end{aligned}\right\}&\quad \Rightarrow \quad \left\{\begin{aligned}&\centralcharge(a,b,c) = \centralcharge(d,a,b)\\ &\eta(a,b,c) = 1-\eta(d,a,b)\end{aligned}\right.
    \end{align}
\end{proof}

We remark again that the above derivation only makes use of bulk \textbf{A1} and the pure state condition. The stationarity condition and the chiral state condition are not required. On the other hand, the following part does rely on these extra assumptions.

Now we aim to derive relations that let us decompose $\eta$ on an interval to the $\eta$s on smaller sub-intervals. Any such decomposition can be broken down into a set of more elementary decompositions involving at most four intervals. More precisely, given four successive intervals $a, b, c,$ and $d$, we will be able to decompose $\eta(ab,c,d),\eta(a,bc,d),$ and $\eta(a,b,cd)$ into $\eta(a,b,c)$ and $\eta(b,c,d)$ [Prop.~\ref{prop:rule2}]. This decomposition can be applied iteratively.

To compute the five aforementioned $\eta$s defined over $(a,b,c,d)$, we consider the regions shown in Fig.~\ref{fig:dune-abcd}. From these regions we can make five conformal rulers
\begin{equation}\label{eq:five-eb}
    \dune(a,b,c),\quad \dune(b,c,d),\quad \dune(ab,c,d),\quad \dune(a,b,cd),\quad \dune(a,bc,d),
\end{equation} 
from which we can compute five cross-ratio candidates
\begin{equation}\label{eq:5-etas}
    \eta(a,b,c),\quad \eta(b,c,d),\quad \eta(ab,c,d),\quad \eta(a,b,cd),\quad \eta(a,bc,d),
\end{equation}
respectively. 

\begin{figure}[h]
    \centering
    \includegraphics[width=0.46\textwidth]{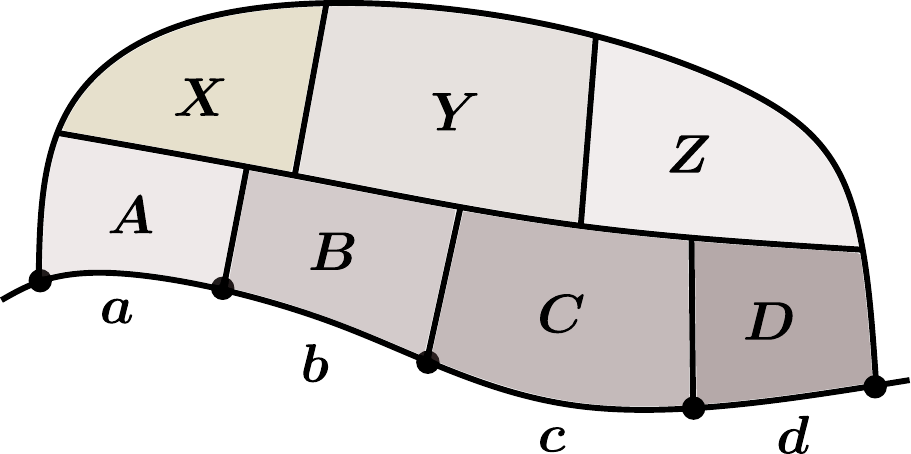}
    \caption{An edge interval partitioned into $a, b, c$ and $d$. Also shown is the associated 7-partite region $(A,X,B,Y,C,Z,D)$ near the edge, which allows us to compute various quantities and derive the decomposition relations [e.g., Prop.~\eqref{eq:cross-ratios-r3}].}
    \label{fig:dune-abcd}
\end{figure}

\begin{Proposition}[Decomposition relations]\label{prop:rule2}
Suppose $\ket{\Psi}$ satisfies the bulk \textbf{A1} and $\fc(\dune)_{\ket{\Psi}}>0$ is stationary for all five conformal rulers in Eq.~\eqref{eq:five-eb} associated with 
Fig.~\ref{fig:dune-abcd}. Then, the following relations hold:
  \begin{align}  
    \eta(ab,c,d) &= \frac{\eta(b,c,d)}{1-\eta(a,b,c)}, \label{eq:cross-ratios-r1}\\ 
    \eta(a,b,cd) &= \frac{\eta(a,b,c)}{1-\eta(b,c,d)}, \label{eq:cross-ratios-r2}\\ 
    \eta(a,bc,d) &= \frac{\eta(a,b,c)\, \eta(b,c,d)}{(1-\eta(a,b,c))(1-\eta(b,c,d))} = \eta(ab,c,d)\eta(a,b,cd)\label{eq:cross-ratios-r3}.
\end{align} 
\end{Proposition}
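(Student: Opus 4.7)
The plan is to prove the three decomposition relations by combining Proposition~\ref{Prop:edge-J}---which guarantees that each vector fixed-point equation $\KD(x)\ket{\Psi}\propto\ket{\Psi}$ has a unique solution $x=\eta(\dune)$---with a set of operator identities linking the modular Hamiltonians of the five rulers in Eq.~\eqref{eq:five-eb}. The operator identities, valid on $\ket{\Psi}$, follow from the bulk invariance of $\hat{I}$ and $\hat{\Delta}$ (Appendix~\ref{appendix:invariant}), which is a consequence of bulk~\textbf{A1}. The benefit of using uniqueness is that it suffices to \emph{verify} the candidate cross-ratios; we do not need to derive them from scratch.

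First, I derive the basic operator identity on $\ket{\Psi}$:
\[
\hat{\Delta}_{ab,c,d}\ket{\Psi} \;=\; \bigl(\hat{\Delta}_{b,c,d}-\hat{I}_{a,b,c}\bigr)\ket{\Psi}.
\]
This follows by directly computing the operator difference $\hat{\Delta}_{ab,c,d}-\hat{\Delta}_{b,c,d}=-\hat{I}(AX{:}C|BY)$, and then applying the bulk-invariance statement $\hat{I}(AX{:}C|BY)\ket{\Psi}=\hat{I}_{a,b,c}\ket{\Psi}$ (since both sides share the edge-anchoring $(a,b,c)$). Analogous calculations give the mirror identity $\hat{\Delta}_{a,b,cd}\ket{\Psi}=(\hat{\Delta}_{a,b,c}-\hat{I}_{b,c,d})\ket{\Psi}$, together with the companion relations $\hat{I}_{ab,c,d}\ket{\Psi}=(\hat{I}_{b,c,d}+\hat{I}_{a,bc,d})\ket{\Psi}$ and $\hat{I}_{a,b,cd}\ket{\Psi}=(\hat{I}_{a,b,c}+\hat{I}_{a,bc,d})\ket{\Psi}$, as well as a corresponding identity for $\hat{\Delta}_{a,bc,d}$.

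Second, I substitute these operator identities into the vector fixed-point equations for $\dune(ab,c,d)$, $\dune(a,b,cd)$, and $\dune(a,bc,d)$. Each such equation, after using the vector equations for $\dune(a,b,c)$ and $\dune(b,c,d)$ to eliminate the $\hat{\Delta}$'s, reduces to a single linear constraint on the three vectors $\hat{I}_{a,b,c}\ket{\Psi}$, $\hat{I}_{b,c,d}\ket{\Psi}$, $\hat{I}_{a,bc,d}\ket{\Psi}$ modulo $\ket{\Psi}$. A short algebraic manipulation shows that the candidate values $\eta(ab,c,d)=\eta(b,c,d)/(1-\eta(a,b,c))$, $\eta(a,b,cd)=\eta(a,b,c)/(1-\eta(b,c,d))$, and $\eta(a,bc,d)=\eta(a,b,c)\eta(b,c,d)/[(1-\eta(a,b,c))(1-\eta(b,c,d))]$ produce three constraints that are proportional to one another; together with the uniqueness clause of Proposition~\ref{Prop:edge-J}, this forces each $\eta$ to equal its candidate value, proving Eqs.~\eqref{eq:cross-ratios-r1}, \eqref{eq:cross-ratios-r2}, and~\eqref{eq:cross-ratios-r3} jointly.

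The main obstacle is that the three vector fixed-point equations do not decouple, so the proportionality of the three candidate solutions must be established simultaneously with verifying that each candidate satisfies its own vector equation. Concretely, the whole system collapses to verifying the single operator identity
\[
-\eta(b,c,d)\,\hat{I}_{a,b,c}\;-\;\eta(a,b,c)\,\hat{I}_{b,c,d}\;+\;\bigl(1-\eta(a,b,c)-\eta(b,c,d)\bigr)\hat{I}_{a,bc,d} \;\propto\; \ket{\Psi},
\]
which is precisely the reduction of the $\dune(a,bc,d)$ vector equation with the product candidate for $\eta(a,bc,d)$. Once this is in hand, substituting this relation into the constraints derived from $\dune(ab,c,d)$ and $\dune(a,b,cd)$ recovers their respective vector equations with the quotient candidate values, so uniqueness pins them down. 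Should the direct joint verification prove delicate, an alternative route is to compute $\eta(a,bc,d)$ independently via the modular-commutator formula of Proposition~\ref{Prop:edge-J}, $\eta(a,bc,d)=-3J(A,BYC,D)/(\pi c_-)$, and use Markov-chain decompositions to relate $J(A,BYC,D)$ to $J(A,B,C)$ and $J(B,C,D)$.
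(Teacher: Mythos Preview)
Your primary strategy has a genuine gap. You correctly reduce all three candidate vector equations to the single relation
\[
-\eta(b,c,d)\,\hat{I}_{a,b,c}\ket{\Psi}-\eta(a,b,c)\,\hat{I}_{b,c,d}\ket{\Psi}+\bigl(1-\eta(a,b,c)-\eta(b,c,d)\bigr)\hat{I}_{a,bc,d}\ket{\Psi}\propto\ket{\Psi},
\]
but you never establish this relation. Uniqueness from Proposition~\ref{Prop:edge-J} says only that \emph{if} a value $x$ satisfies $\KD(x)\ket{\Psi}\propto\ket{\Psi}$ then $x$ equals the true $\eta$; it does not help you \emph{produce} such an $x$. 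What you actually know from stationarity is that the three vector equations hold with the \emph{a priori unknown} values $\eta(ab,c,d)$, $\eta(a,b,cd)$, $\eta(a,bc,d)$, and after your reduction these give three relations among $\hat{I}_{a,b,c}\ket{\Psi}$, $\hat{I}_{b,c,d}\ket{\Psi}$, $\hat{I}_{a,bc,d}\ket{\Psi}$ modulo $\ket{\Psi}$ whose coefficients involve those unknowns. To extract the unknowns you need a linear-independence statement across rulers (this is exactly the genericity condition of Assumption~\ref{ass:genericity}, used in the non-chiral proof of Proposition~\ref{prop:decomp-nonchiral}); per-ruler uniqueness is not enough. So the argument is circular: verifying any one candidate is equivalent to the displayed identity, and you have no independent way to prove that identity.

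Your fallback \emph{is} the paper's proof. The paper applies Proposition~\ref{Prop:edge-J} to the three large rulers to obtain $J(AB,C,D)=-\tfrac{\pi c_-}{3}\eta(ab,c,d)$ etc., uses the vector fixed-point equations on $\dune(a,b,c)$ and $\dune(b,c,d)$ to decompose $K_{ABC}\ket{\Psi}$ and $K_{BCD}\ket{\Psi}$, and then evaluates each modular commutator as a scalar identity in $\eta(a,b,c)$ and $\eta(b,c,d)$. Because modular commutators are scalars, no linear-independence hypothesis is needed, and the three decomposition relations drop out directly. Your operator-identity machinery is essentially the non-chiral route of Section~\ref{sec:emergence-nonchiral}, which works only once genericity is available; in the chiral setting the modular-commutator computation is the clean substitute, and that is what you should carry out rather than list as a contingency.
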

    
\begin{proof}
First of all, $\fc>0$ indicates $\eta\in(0,1)$ for all the five conformal rulers. From the stationarity conditions $\delta \centralcharge(a,b,c)=0$ and $\delta \centralcharge(b,c,d)=0$, one can obtain the following two vector fixed-point equations [Theorem~\ref{thm:equiv}]:
\begin{align}
  &\calK_{\dune(a,b,c)}(\eta(a,b,c))\ket{\Psi} \propto \ket{\Psi} \label{eq:vec-ABC}\\ 
  &\calK_{\dune(b,c,d)}(\eta(b,c,d))\ket{\Psi} \propto \ket{\Psi}. \label{eq:vec-BCD}
\end{align}
We also know from Prop.~\ref{Prop:edge-J} that\footnote{Here we utilize the stationarity conditions $\delta \centralcharge(a,b,cd)=0$, $\delta \centralcharge(ab,c,d)=0$ and $\delta \centralcharge(a,bc,d)=0$.}  
\begin{align}
    &J(AB,C,D) = i \bra{\Psi} [K_{ABC},K_{CD}]\ket{\Psi} = -\frac{\pi}{3}c_{-} \eta(ab,c,d) \label{eq:J-AB-C-D}\\
    &J(A,B,CD) = i \bra{\Psi} [K_{AB},K_{BCD}]\ket{\Psi} = -\frac{\pi}{3}c_{-} \eta(a,b,cd) \label{eq:J-A-B-CD} \\
    &J(A,BC,D) = i \bra{\Psi} [K_{ABC},K_{BCD}]\ket{\Psi} = -\frac{\pi}{3}c_{-} \eta(a,bc,d) \label{eq:J-A-BC-D}. 
\end{align}
Using these relations, we can prove our main claim.

The key idea is to use Eq.~\eqref{eq:vec-ABC} and Eq.~\eqref{eq:vec-BCD} to rewrite $K_{ABC}\ket{\Psi}$ and $K_{BCD}\ket{\Psi}$  
in terms of a linear combination of modular Hamiltonians over smaller regions, acting on $\ket{\Psi}$. Plugging in these expressions to Eq.~\eqref{eq:J-A-B-CD}, Eq.~\eqref{eq:J-AB-C-D}, and Eq.~\eqref{eq:J-A-BC-D}, the proof follows immediately. We discuss these in more detail below.

Using Eq.~\eqref{eq:vec-ABC} and Eq.~\eqref{eq:vec-BCD}, the following identities follow:
\begin{align}
    &K_{ABC}\ket{\Psi} = \left[ \frac{\eta(a,b,c)}{1-\eta(a,b,c)} \hat{\Delta}(AX,B,CY) + K_{AB}+K_{BC} - K_B + \alpha \right] \ket{\Psi},\label{eq:decom-K-ABC} \\ 
    &K_{BCD}\ket{\Psi} = \left[ \frac{\eta(b,c,d)}{1-\eta(b,c,d)} \hat{\Delta}(BY,C,DZ) + K_{BC}+K_{CD} - K_D + \beta \right] \ket{\Psi},\label{eq:decom-K-BCD}
\end{align}
where $\alpha,\beta$ are proportionality factors from Eq.~\eqref{eq:vec-ABC} and Eq.~\eqref{eq:vec-BCD}, which are unimportant for this argument.  We now use these decompositions to eliminate $K_{ABC}\ket{\Psi}$ and $K_{BCD}\ket{\Psi}$ in Eq.~\eqref{eq:J-AB-C-D}, Eq.~\eqref{eq:J-A-B-CD}, Eq.~\eqref{eq:J-A-BC-D}. The result is the three relations Eqs.~\eqref{eq:cross-ratios-r1}, \eqref{eq:cross-ratios-r2}, \eqref{eq:cross-ratios-r3}. Take Eq.~\eqref{eq:cross-ratios-r1} for an example. We obtain
\begin{equation}\begin{split}
    &i\bra{\Psi}[K_{ABC},K_{CD}]\ket{\Psi} = -\frac{\pi c_{-}}{3}\eta(ab,c,d) \\
    =&i\frac{\eta(a,b,c)}{1-\eta(a,b,c)}\bra{\Psi}[\hat{\Delta}(AX,B,CY),K_{CD}]\ket{\Psi}  
    + i\bra{\Psi}[K_{AB}+K_{BC}-K_{B},K_{CD}]\ket{\Psi} \\ 
    =&\frac{\eta(a,b,c)}{1-\eta(a,b,c)} i\bra{\Psi}[K_{CYB},K_{CD}]\ket{\Psi}+i\bra{\Psi}[K_{BC},K_{CD}]\ket{\Psi}.
\end{split}
\end{equation}
Noticing that 
\begin{equation}
     i\bra{\Psi}[K_{CYB},K_{CD}]\ket{\Psi}=i\bra{\Psi}[K_{BC},K_{CD}]\ket{\Psi} = - \frac{\pi c_{-}}{3}\eta(b,c,d), 
\end{equation}
one can obtain 
\begin{equation}
    \eta(ab,c,d) = \frac{\eta(b,c,d)}{1-\eta(a,b,c)}. 
\end{equation}
The other two relations Eq.~\eqref{eq:cross-ratios-r2}, Eq.~\eqref{eq:cross-ratios-r3} can be obtained similarly by plugging Eq.~\eqref{eq:decom-K-ABC} and Eq.~\eqref{eq:decom-K-BCD} into Eq.~\eqref{eq:J-A-B-CD} and Eq.~\eqref{eq:J-A-BC-D}.   
\end{proof}

Let us first make some remarks about the relations among Eq.~\eqref{eq:cross-ratios-r1}, Eq.~\eqref{eq:cross-ratios-r2} and Eq.~\eqref{eq:cross-ratios-r3}. First, Eq.~\eqref{eq:cross-ratios-r1} can be derived from Eq.~\eqref{eq:cross-ratios-r2} and vice versa. This is due to the simple fact that $\eta(\cdot, \cdot, \cdot)$ is invariant under the exchange of the first and the third argument. Second, in order to obtain Eq.~\eqref{eq:cross-ratios-r3}, one must make use of the complement relations [Prop.~\ref{prop:rule1}]. Let $e$ be the complement of $abcd$ on the edge. Because our assumptions are also satisfied near $e$, we obtain 
\begin{equation}
            \eta(cd,e,a) = \frac{\eta(d,e,a)}{1-\eta(c,d,e)} \quad \Rightarrow \quad \eta(a,bc,d) = \eta(ab,c,d)\eta(a,b,cd),
        \end{equation}
        where the $\Rightarrow$ is due to the complement relations:
        \begin{equation}
            \eta(cd,e,a) = \eta(a,b,cd),\quad \eta(a,e,d) = \eta(a,bc,d),\quad 1-\eta(c,d,e) = \eta(ab,c,d).
\end{equation}

What is the importance of these relations? The fact that they provide a way to relate $\eta$s among different regions is nice. However, more importantly, these relations turn out to be the defining properties of cross-ratios. This statement is a nontrivial observation made in the mathematics literature~\cite{labourie2008cross,PMIHES_2007__106__139_0}.\footnote{In the literature, the cross-ratio is often defined as a function on four points, which we regard as the endpoints defining our three intervals.
To make a complete specification of axioms for cross-ratio, beyond the decomposition and complement relations, one has to add further (i) symmetric property $\eta(a,b,c)=\eta(c,b,a)$, and (ii) $\eta(\emptyset,b,c)=0,\eta(a,b,c)=1$ if $a,b,c$ is the whole edge, both of which are indeed true for our $\eta(a,b,c)$.} We shall elucidate this in more detail in Section~\ref{subsec:map-to-circle}. 

\subsection{Emergence of conformal geometry}
\label{subsec:map-to-circle}

In Section~\ref{subsec:three-relations}, we derived two sets of relations among the quantum cross-ratios $\{\eta(a,b,c)\}$ computed for the coarse-grained intervals along the edge. Here we show that these relations allow us to construct a map $\varphi$ from the coarse-grained edge intervals to a set of intervals of a round circle, such that the quantum cross-ratios are equal to the geometric cross-ratios on the round circle [Prop.~\ref{prop:map-to-S1-exist}]. With this map $\varphi$, we can use the usual uniform metric on the circle to measure the sizes of the coarse-grained edge intervals.

Let us make some remarks about the setup. Recall that we are starting with a decomposition of the quantum many-body system into coarse-grained regions and intervals along the edge. We envision partitioning the edge into a set of elementary (indecomposable) intervals. The intervals we consider will be the union of these elementary intervals. Without loss of generality, given three succesive intervals $a, b,$ and $c$, we can assume that these intervals contain $n_a, n_b,$ and $n_c$ elementary intervals. We shall refer to the cross-ratio defined over those intervals as the $[n_a, n_b, n_c]$-type quantum cross-ratio. Similarly, we refer to the associated conformal ruler as $[n_a, n_b, n_c]$-type conformal ruler; see Fig.~\ref{fig:map-to-circle} for an example.

\begin{figure}[h]
    \centering
    \includegraphics[width = 0.9 \textwidth]{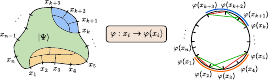}
    \caption{Mapping a chiral edge to a circle according to the cross-ratios. The interval associated with the yellow (blue) conformal ruler on the left figure measures a $[1,1,2]$-type ($[1,1,1]$-type) cross-ratio.} 
    \label{fig:map-to-circle}
\end{figure}

An important point is that the cross-ratios of larger intervals are determined completely by the smaller subintervals. Therefore, the $[1,1,1]$-type quantum cross-ratios can be thought of as the elementary cross-ratios from which all the other quantum cross-ratios are determined. From this point of view, the map $\varphi$ can be constructed by ensuring that $\varphi$ maps the endpoints of the intervals on the edge to a set of points on the circle in such a way that all the $[1,1,1]$-type quantum cross-ratios match the corresponding geometric cross-ratios of the mapped points on the round circle. Ensuring the matching for $[1,1,1]$-type quantum cross-ratios ensures the matching of all possible cross-ratios, because those cross-ratios are determined solely from the $[1,1,1]$-type cross-ratios, using the same equation [Prop.~\ref{prop:rule1}, Prop.~\ref{prop:rule2}].

We now formally describe this statement. Without loss of generality, consider a set of elementary intervals $\{a, b, \ldots \}$ on the edge. The endpoints of these intervals are denoted as $x_1, x_2, \ldots$ in the counterclockwise order. Conversely, we may denote an interval by the endpoints. For instance, $(x_i, x_j)$ would be an interval that starts at point $x_i$, goes counterclockwise, and ends at point $x_j$. 
\begin{Proposition}\label{prop:map-to-S1-exist}
    If the quantum cross-ratios $\eta$ computed from the elementary intervals satisfy (i) $\eta\in(0,1)$ and (ii) the relations in Prop.~\ref{prop:rule1} and Prop.~\ref{prop:rule2}, then there exists a map $\varphi$ from the set of endpoints $\{x_1,x_2,\ldots\}$ to a set of points on a circle: 
    \begin{equation}
        \varphi:x_i \to \varphi(x_i)\in S^1,
    \end{equation}
    such that the following properties hold: 
    \begin{enumerate}
    \item The set of points $\{\varphi(x_1),\varphi(x_2),\ldots \}$ on the circle has the same orientation (e.g., counterclockwise) as $\{x_1,x_2,\ldots\}$ along the physical edge. 
    \item For any three successive intervals $(a,b,c)$, the quantum cross-ratio
    $\eta(a,b,c)$ is equal to the geometric cross-ratio associated with  
    the successive intervals $(\varphi_a,\varphi_b,\varphi_c)$ on the circle, where $\varphi_a$ denotes $(\varphi(x_i),\varphi(x_j))$ for an interval $a = (x_i,x_j)$ under the map $\varphi$.   
\end{enumerate}
\end{Proposition}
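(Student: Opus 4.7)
The plan is to exploit the three-transitivity of the Möbius group on $S^1$: fix the images of any three endpoints arbitrarily and inductively determine the rest by matching $[1,1,1]$-type cross-ratios. Verifying that \emph{all} remaining quantum cross-ratios match their geometric counterparts then reduces to two classical facts about the chord cross-ratio on $S^1$: it satisfies precisely the same complement relations (Prop.~\ref{prop:rule1}) and decomposition relations (Prop.~\ref{prop:rule2}) as the quantum cross-ratio. Both are verified by direct algebra from the explicit formula $\eta_g(y_1,y_2,y_3,y_4) = \ell_{12}\ell_{34}/(\ell_{13}\ell_{24})$ in terms of chord lengths.

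\textbf{Construction.} Enumerate the $N$ endpoints as $x_1,\dots,x_N$ in counterclockwise order, with elementary intervals $a_j = (x_j,x_{j+1})$ (indices mod $N$). Pick three points $p_1,p_2,p_3 \in S^1$ in counterclockwise order and set $\varphi(x_j) := p_j$ for $j=1,2,3$. For each $k = 4,\dots,N$, given that $\varphi(x_{k-3}),\varphi(x_{k-2}),\varphi(x_{k-1})$ have already been placed, define $\varphi(x_k)$ to be the unique point on $S^1$ such that
\begin{equation*}
\eta_g\bigl(\varphi(x_{k-3}),\varphi(x_{k-2}),\varphi(x_{k-1}),\varphi(x_k)\bigr) = \eta(a_{k-3},a_{k-2},a_{k-1}).
\end{equation*}
Because the quantum cross-ratio lies in $(0,1)$, this $\varphi(x_k)$ exists uniquely and sits strictly on the open counterclockwise arc from $\varphi(x_{k-1})$ to $\varphi(x_{k-3})$, so orientation is preserved at each step.

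\textbf{Verification.} It remains to show that every $[n_a,n_b,n_c]$-type quantum cross-ratio coincides with the geometric cross-ratio of the four image endpoints. Since the decomposition relations (Prop.~\ref{prop:rule2}) are the same rational identities in both systems, it suffices to verify all $[1,1,1]$-type cross-ratios $\eta(a_j,a_{j+1},a_{j+2})$. The interior cases $1 \le j \le N-3$ match by construction. For the three wrap-around cases $j \in \{N-2,N-1,N\}$, I apply the complement relation $\eta(a,b,c) = \eta(a,d,c)$ (with $d$ the complement on the edge) to rewrite each wrap-around cross-ratio as the $\eta$ of a configuration in which the only wrap-around is absorbed into one long interior middle interval; then I iteratively apply Prop.~\ref{prop:rule2} to decompose that long interval into elementary pieces, expressing the wrap-around cross-ratio as a rational function of purely \emph{interior} $[1,1,1]$-cross-ratios. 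Since the identical algebraic manipulation is legal for geometric cross-ratios on $S^1$, matching on the interior forces matching at the wrap-around.

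\textbf{Main obstacle.} The most delicate step is the wrap-around closure in the verification: the inductive placement of $\varphi(x_k)$ is a purely local procedure that refers only to three previous points, so it is not a priori clear that the resulting cyclic configuration on $S^1$ is globally consistent. The resolution is entirely algebraic --- once one verifies that the complement and decomposition identities hold in both systems with the same functional form, the $N-3$ interior $[1,1,1]$-cross-ratios form a generating set that uniquely determines every other cross-ratio, so consistency around the loop is automatic. A secondary technical point --- that $\eta \in (0,1)$ pins $\varphi(x_k)$ to precisely the counterclockwise arc after $\varphi(x_{k-1})$ --- follows by inspection of the chord formula for $\eta_g$.
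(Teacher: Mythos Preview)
Your plan mirrors the paper's proof --- fix three image points, place the rest by matching $[1,1,1]$ cross-ratios, then close up via complement relations --- but the orientation step has a real gap.

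You assert that $\eta(a_{k-3},a_{k-2},a_{k-1})\in(0,1)$ places $\varphi(x_k)$ on the open counterclockwise arc from $\varphi(x_{k-1})$ to $\varphi(x_{k-3})$, and conclude ``orientation is preserved at each step.'' The first claim is correct, but that arc is the \emph{long} one: once $k\ge 5$ it already contains the earlier images $\varphi(x_1),\dots,\varphi(x_{k-4})$. Nothing you wrote excludes $\varphi(x_k)$ from landing between, say, $\varphi(x_1)$ and $\varphi(x_2)$, which would destroy the global cyclic order. (Concretely: if two consecutive elementary $\eta$'s were both near $1$, the local rule would push the next point past $\varphi(x_1)$; what actually forbids this is that the composite $\eta(a_1,a_2,a_3a_4)=\eta(a_1,a_2,a_3)/(1-\eta(a_2,a_3,a_4))$ must stay below $1$, and you never invoke any composite $\eta$.) The paper closes this gap by checking, at each inductive step, that a composite geometric cross-ratio anchored at $\varphi(x_1)$ equals the corresponding composite quantum cross-ratio --- the equality following from parallel application of the decomposition rules on both sides --- and hence lies in $(0,1)$, which pins $\varphi(x_k)$ to the arc from $\varphi(x_{k-1})$ to $\varphi(x_1)$. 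Note too that the geometric decomposition identities rely on Ptolemy's relation and thus on cyclic ordering of the four points involved, so the orientation induction and the decomposition argument must be run together; your verification step silently assumes this is already in hand.

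A smaller slip in the wrap-around verification: the identity you cite, $\eta(a,b,c)=\eta(a,d,c)$, swaps only the middle interval, so for $j=N-2$ and $j=N$ the outer interval $a_N=(x_N,x_1)$ survives and you have not reduced to an interior quadruple. The paper instead uses $\eta(a,b,c)=1-\eta(d,a,b)$ (or $1-\eta(b,c,d)$), which shifts the wrap-around into the complement position and leaves a triple with purely interior endpoints that decomposes cleanly via Prop.~\ref{prop:rule2}. Your choice does work for $j=N-1$.
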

\noindent
We defer the proof to Appendix~\ref{app:conformal-geometry-construction}. We briefly remark that the condition $\eta\in(0,1)$ is equivalent to the condition that $\fc>0$ [Appendix~\ref{appendix:property-c-eta}]. If $\fc=0$, the mapping to the circle is still valid in a topological sense. This point is also explained in Appendix~\ref{appendix:property-c-eta}.

The map $\varphi$ provides a distance measure for the intervals  modulo the orientation-preserving global conformal transformation $PSL(2,\mathbb{R})$. The global conformal symmetry is manifest because $\varphi$ is constructed in terms of the cross-ratios, which are invariant under $PSL(2, \mathbb{R})$.\footnote{In general, cross-ratios are invariant under $SL(2,\mathbb{R})$ transformations~\cite{needham2023visual}. In our construction, we explicitly fixed the orientation, therefore our $\varphi$ is $PSL(2,\mathbb{R})$ invariant.}

One may wonder if the more general class of transformations, such as the orientation-preserving diffeomorphisms of $S^1$ (denoted as $\textrm{Diff}_+(S^1)$) has any physical role. Formally, if we apply such a transformation to a chiral CFT groundstate, we obtain a special type of excited state: a coherent state~\cite{Hollands2019}. The cross-ratios of these excited states will be still defined, but deformed away from their values on the groundstate.

As such, it is natural to ask whether one can apply such a transformation to a microscopic wavefunction. Because such transformations are generated by the generators of the Virasoro algebra, our recent work that constructs such generators from the groundstate wavefunction~\cite{emergence-of-virasoro} can be employed for this purpose.  

An interesting possibility is to use conformal rulers to measure the cross-ratios of these transformed states. More precisely, we will obtain two maps $\varphi_0$ and $\varphi_1$, each corresponding to the map from the physical edge to $S^1$, with and without the transformation. From these maps, we can construct $\varphi_1\circ \varphi_0^{-1}$, which must be an element of $\textrm{Diff}_+(S^1)$. 
Alternatively, one may compute the expectation values of the commutators of the Virasoro generators constructed via the scheme in Ref.~\cite{emergence-of-virasoro},  which would be a simple function of quantum cross-ratios.

\subsection{Constant central charge and cross-ratios}
\label{subsec:const-c} 
By now, we have shown that a quantum state $\ket\Psi$ on a disk satisfying our three assumptions (bulk {\bf A1} [Assumption~\ref{ass:bulk_a1}], non-zero bulk modular commutator [Assumption~\ref{ass:nonzero_ccc}], and stationarity condition on the edge [Assumption~\ref{ass:stationarity}]) yield a set of $\eta(\dune)_{\ket\Psi}$ that defines a cross-ratio. This is due to the consistency relations we proved [Prop.~\ref{prop:rule1} and~\ref{prop:rule2}]. The main purpose of this Section is to prove an interesting consequence of this result: that $\fc(\dune)_{\ket\Psi}$ is a constant.

Here is the main result of this subsection. 
\begin{Proposition}[constant $\fc$]
    \label{prop:const-c}
    If a state $\ket\Psi$ on a disk satisfies (i) bulk {\bf A1}, (ii) non-zero bulk modular commutator condition, (iii) stationarity condition on the edge, then $\fc(\dune)_{\ket\Psi}$ takes the same value for every conformal ruler $\dune$ along the edge.
\end{Proposition}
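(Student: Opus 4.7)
The plan is to split the argument into an outer ``chaining'' step that uses only the complement relations plus a shared-ruler trick, and an inner step that upgrades Prop.~\ref{prop:rule1} via the decomposition relations together with auxiliary entropy identities. For the outer step, given any 5-partition $(a,b,c,d,e)$ of the edge, the five cyclic coarsenings $P_1=(a,b,c,de)$, $P_2=(a,b,cd,e)$, $P_3=(a,bc,d,e)$, $P_4=(ab,c,d,e)$, $P_5=(ea,b,c,d)$ are pairwise linked by shared 3-interval rulers: $P_1$ and $P_2$ both contain $\dune(a,b,cde)$, $P_2$ and $P_3$ both contain $\dune(a,bcd,e)$, and so on cyclically. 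Granted the inner step, each shared ruler carries the common $\fc$ of both of its parent 4-partitions, hence $\fc^{(P_1)}=\cdots=\fc^{(P_5)}$; iterating on finer refinements propagates the equality to every conformal ruler on the edge.

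For the inner step, I would show that within a 4-partition $(a,b,c,d)$, every ``merged'' 3-interval ruler (e.g.\ $\dune(a,b,cd)$, $\dune(ab,c,d)$, $\dune(a,bc,d)$) shares the common $\fc^{(P)}$ of Prop.~\ref{prop:rule1}. Take $\dune(a,b,cd)$: Prop.~\ref{prop:rule2} gives $\eta(a,b,cd)=\eta(a,b,c)/(1-\eta(b,c,d))$, which after taking logarithms and using $\Delta=-(\fc/6)\ln\eta$, $I=-(\fc/6)\ln(1-\eta)$ becomes
\begin{equation*}
\frac{\Delta(a,b,cd)}{\fc(a,b,cd)}+\frac{I(b,c,d)}{\fc(b,c,d)}=\frac{\Delta(a,b,c)}{\fc(a,b,c)}.
\end{equation*}
Since $\fc(a,b,c)=\fc(b,c,d)\equiv\fc^{(P)}$ by Prop.~\ref{prop:rule1}, if one can separately establish the entropy identity $\Delta(a,b,cd)+I(b,c,d)=\Delta(a,b,c)$, then the display above forces $\fc(a,b,cd)=\fc^{(P)}$ (the degenerate coincidence $\Delta(b,c,d)=I(a,b,c)$ can be excluded by varying the choice of refinement). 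Analogous entropy identities, one for each relation in Prop.~\ref{prop:rule2}, handle the other merged rulers.

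The technical heart of the proof is establishing the entropy identity $\Delta(a,b,cd)+I(b,c,d)=\Delta(a,b,c)$ (and its analogs). The plan is to promote it to an operator identity on $\ket{\Psi}$: substitute the modular-Hamiltonian decomposition Eq.~\eqref{eq:decom-K-ABC}---itself a consequence of the vector fixed-point equation on $\dune(a,b,c)$ via Theorem~\ref{thm:equiv}---into $(\hat\Delta(a,b,cd)+\hat I(b,c,d)-\hat\Delta(a,b,c))\ket{\Psi}$, and reduce the result to a pure scalar using bulk \textbf{A1} Markov chains for the portions of the rulers lying fully in the bulk; taking the expectation on $\ket{\Psi}$ then gives the desired entropy identity. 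The delicate point, which I expect to be the main obstacle, is that bulk \textbf{A1} alone supplies Markov chains only for fully-bulk regions, so the edge-anchored reductions must rely on the stationarity-based ``effective'' Markov structure coming from Theorem~\ref{thm:equiv}. The chirality hypothesis $c_-\neq 0$ enters via Prop.~\ref{Prop:edge-J}, which guarantees that the $\eta$ appearing in the operator decomposition is the same one solved from the entropy definition of $\fc$, so that the combination of operator-level and entropy-level identities is consistent.
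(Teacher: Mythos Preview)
There is a genuine structural gap. Your inner step invokes Prop.~\ref{prop:rule1} to get $\fc(a,b,c)=\fc(b,c,d)$, which requires $(a,b,c,d)$ to be a 4-partition of the \emph{entire} edge. But in that setting your ``merged'' rulers $\dune(a,b,cd)$, $\dune(ab,c,d)$, $\dune(a,bc,d)$ are not conformal rulers at all: their three intervals cover the whole edge, so the first and third intervals become adjacent, violating the topological requirement in the definition of $\dune$. The same defect kills your outer step's shared rulers like $\dune(a,b,cde)$. Conversely, if you take $(a,b,c,d)$ to be four \emph{local} successive intervals (as in Prop.~\ref{prop:rule2}) so that the merged rulers are legitimate, then Prop.~\ref{prop:rule1} no longer gives you $\fc(a,b,c)=\fc(b,c,d)$, and the chain never starts. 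You are trying to use the complement relations and the decomposition relations on the same 4-tuple, but they live in incompatible geometric settings.

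You also misdiagnose the hard part. The entropy identity $\Delta_{a,b,cd}+I_{b,c,d}=\Delta_{a,b,c}$ (for local intervals) is an elementary regrouping of entropy terms together with the bulk deformation invariance of $\Delta(\dune),I(\dune)$; it follows from bulk \textbf{A1} alone (Appendix~\ref{app:useful-vec-2}) and needs neither stationarity nor chirality. The paper's proof goes the other way around: once the cross-ratio relations are in hand, one takes a 5-partition $(a_1,\ldots,a_5)$ of the edge, uses purity to get $I_{a_{i-2},a_{i-1},a_i}+I_{a_i,a_{i+1},a_{i+2}}=\Delta_{a_{i-1},a_i,a_{i+1}}$, combines this with the cross-ratio identity $\eta_{a_i}=(1-\eta_{a_{i-1}})(1-\eta_{a_{i+1}})$, and deduces that each $(\fc)_{a_i}$ is a \emph{strict} convex combination of its two neighbours cyclically---which forces all five to coincide. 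Sliding the 5-partition pins down all $[1,1,1]$-type $\fc$'s, and the extension to general rulers then uses precisely the easy entropy identity you were worried about.
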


Prop.~\ref{prop:const-c} follows immediately from the fact that $\fc(\dune)_{\ket\Psi}$ is a constant for every $\dune$ if and only if the set of cross-ratios over every $\dune$ is a valid set of cross-ratios (in the sense of obeying the relations in Prop.~\ref{prop:rule1} and~\ref{prop:rule2}).
\begin{Proposition}\label{prop:const-c-implies}
For a state $\ket{\Psi}$ with bulk \textbf{A1} satisfied and $\fc(\dune)> 0$, $\forall \dune$, 
\begin{equation}
    \fc(\dune) = \textrm{const.} \quad \Leftrightarrow \quad \{\eta(\dune)\} \textrm{ is a valid set of cross-ratios.}  
\end{equation}
\end{Proposition}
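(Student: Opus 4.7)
The plan is to bridge constant $\fc$ and the decomposition relations via the canonical entropy identities on the 7-partite region of Fig.~\ref{fig:dune-abcd}. Direct expansion, combined with bulk \textbf{A1} deformation invariance, yields
\begin{align*}
\Delta_{ab,c,d}+I_{a,b,c}&=\Delta_{b,c,d},\\
\Delta_{a,b,cd}+I_{b,c,d}&=\Delta_{a,b,c},\\
\Delta_{a,bc,d}&=\Delta_{ab,c,d}+\Delta_{a,b,cd},
\end{align*}
and analogous $I$-identities $I_{ab,c,d}=I_{a,bc,d}+I_{b,c,d}$ and $I_{a,b,cd}=I_{a,bc,d}+I_{a,b,c}$. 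Together with the definitional relations $\Delta(\dune)=-\tfrac{\fc(\dune)}{6}\ln\eta(\dune)$ and $I(\dune)=-\tfrac{\fc(\dune)}{6}\ln(1-\eta(\dune))$ from Def.~\ref{def:c-eta}, these are the only algebraic inputs needed. The complement relations of Prop.~\ref{prop:rule1} hold automatically under bulk~\textbf{A1}; hence the only content to establish is the equivalence of constant $\fc$ with the \emph{decomposition} relations Eq.~\eqref{eq:cross-ratios-r1}--\eqref{eq:cross-ratios-r3}.

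The forward direction is a routine substitution: if $\fc(\dune)\equiv c$, plugging $\Delta=-c\ln\eta/6$ and $I=-c\ln(1-\eta)/6$ into the three $\Delta$-identities above and exponentiating reproduces the three decomposition formulas exactly.

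For the backward direction, I substitute the assumed decomposition relations into each of the five entropy identities. Each becomes a homogeneous linear equation in the five $\fc$'s of the four-interval configuration; for example, the first $\Delta$-identity yields
\[
(\fc(b,c,d)-\fc(ab,c,d))\ln\eta(b,c,d)=(\fc(a,b,c)-\fc(ab,c,d))\ln(1-\eta(a,b,c)),
\]
with both logs nonzero since $\fc>0$ forces $\eta\in(0,1)$. The constant vector $(1,1,1,1,1)$ is manifestly in the kernel of the resulting $5\times5$ system; the goal is to show the coefficient matrix has rank exactly $4$, so that all five $\fc$'s coincide. Chaining this local equality across overlapping four-interval configurations then propagates a single value of $\fc$ across the entire edge.

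The main obstacle is the rank statement for arbitrary (not necessarily generic) $\eta$'s, since for fine-tuned values of $\eta(a,b,c)$ and $\eta(b,c,d)$ the five equations could a priori become linearly dependent. To handle this robustly I would invoke Prop.~\ref{prop:map-to-S1-exist} and re-express each $\ln\eta$ and $\ln(1-\eta)$ as a linear combination of log chord lengths $\phi_X=\ln\ell_X$ on $S^1$, using the identification $\phi_X=\phi_{\overline X}$. Collecting the first identity by $\phi$'s then yields
\[
(\fc(ab,c,d)-\fc(a,b,c))(\phi_{ab}-\phi_{abc})+(\fc(ab,c,d)-\fc(b,c,d))(\phi_d-\phi_{cd})+(\fc(a,b,c)-\fc(b,c,d))(\phi_b-\phi_{bc})=0,
\]
and freely varying the choice of elementary intervals along the edge gives an overdetermined family of such equations. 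Since the involved log chord ratios range over generically linearly independent values, all coefficient differences must vanish, giving $\fc(a,b,c)=\fc(b,c,d)=\fc(ab,c,d)$ and, by propagation along the edge, a single global value of $\fc$.
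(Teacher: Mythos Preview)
Your forward direction is essentially the paper's argument and is correct.

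Your backward direction is genuinely different from the paper's, and has a real gap. The paper does \emph{not} set up a local $5\times 5$ system. Instead it partitions the whole edge into five arcs $a_1,\dots,a_5$, uses purity together with bulk \textbf{A1} to get the global identity
\[
I_{a_{i-2},a_{i-1},a_i}+I_{a_i,a_{i+1},a_{i+2}}=\Delta_{a_{i-1},a_i,a_{i+1}},
\]
and combines it with the cross-ratio relations to obtain $\eta_{a_i}=(1-\eta_{a_{i-1}})(1-\eta_{a_{i+1}})$, whence each $(\fc)_{a_i}$ is a strict convex combination of its two neighbours $(\fc)_{a_{i\pm1}}$. On a cyclic index set this min--max squeeze forces all five values to coincide, with no rank computation and no genericity hypothesis. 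Sliding the 5-partition propagates equality of the $[1,1,1]$-type $\fc$'s around the edge, and the general $[n_a,n_b,n_c]$ case follows by the same $\Delta$-identity you wrote down.

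The gap in your argument is the step where you ``freely vary the choice of elementary intervals'' and conclude that the coefficients must vanish because the $\phi$-differences are ``generically linearly independent''. When you change $(a,b,c,d)$, the $\fc$-differences multiplying the $\phi$'s change as well: you are not accumulating multiple linear constraints on a \emph{fixed} set of unknowns, but rather producing one equation per configuration in which both the unknowns and the coefficients move. So linear independence of the $\phi$-vectors, even if it held, does not let you peel off the individual $\fc$-differences. Within a single $(a,b,c,d)$ you have at most five homogeneous equations in five $\fc$'s, and you never establish the rank-4 claim you flag as the main obstacle; the appeal to Prop.~\ref{prop:map-to-S1-exist} does not resolve it. The paper's convex-combination argument sidesteps this entirely.
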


\begin{proof}
We first prove the $\Rightarrow$ direction. First, we prove that the constant $\fc$ implies the complement relations [Prop.~\ref{prop:rule1}]. Consider a partition of the edge into four intervals, $a, b, c,$ and $d$ [Fig.~\ref{fig:eta-rule1}]. Due to the purity of the state, we get
\begin{equation}
        \Delta_{a,b,c} = I_{b,c,d} = \Delta_{c,d,a} = I_{d,a,b}. 
\end{equation}
Applying 
    \begin{equation}\label{eq:Delta-I-in-proof}
        \Delta = -\frac{\centralcharge}{6}\ln(\eta),\quad I = -\frac{\centralcharge}{6}\ln(1-\eta), 
    \end{equation}
for a constant $\centralcharge$, one can obtain the complement relations: 
\begin{equation}
        \eta(a,b,c) = 1-\eta(b,c,d) = \eta(c,d,a) = 1-\eta(d,a,b). 
\end{equation}

We now prove that the constant $\fc$ implies the decomposition relations [Prop.~\ref{prop:rule2}]. Consider a region shown in Fig.~\ref{fig:dune-abcd}, which contains five conformal rulers:
    \begin{equation}
        \dune(a,b,c),\quad \dune(b,c,d),\quad \dune(ab,c,d),\quad \dune(a,b,cd),\quad \dune(a,bc,d).
    \end{equation} 
    Using bulk \textbf{A1}, by the standard regrouping of entropy terms, we get
    \begin{equation}\begin{split} 
        &\Delta_{a,b,cd} = \Delta_{a,b,c} - I_{b,c,d}  \\ 
        &\Delta_{ab,c,d} = \Delta_{b,c,d} - I_{a,b,c} \\ 
        &\Delta_{a,bc,d} = \Delta_{a,b,cd} + \Delta_{ab,c,d}. 
    \end{split}
    \end{equation}
    Using Eq.~\eqref{eq:Delta-I-in-proof} and the fact that $\centralcharge$ is a constant, one can immediately obtain the decomposition rules: 
    \begin{align} 
        \eta(ab,c,d) &= \frac{\eta(b,c,d)}{1-\eta(a,b,c)} \label{eq:cross-ratios-r1-const-c}\\ 
        \eta(a,b,cd) &= \frac{\eta(a,b,c)}{1-\eta(b,c,d)} \label{eq:cross-ratios-r2-const-c}\\ 
        \eta(a,bc,d) &= \frac{\eta(a,b,c)\, \eta(b,c,d)}{(1-\eta(a,b,c))(1-\eta(a,b,c))} = \eta(ab,c,d)\eta(a,b,cd)\label{eq:cross-ratios-r3-const-c}.
    \end{align}

We now prove the $\Leftarrow$ direction. We first partition the whole edge into five intervals, $a_1, a_2, a_3, a_4,$ and $a_5$ 
\begin{figure}[h]
    \centering
    \includegraphics[width=0.7\columnwidth]{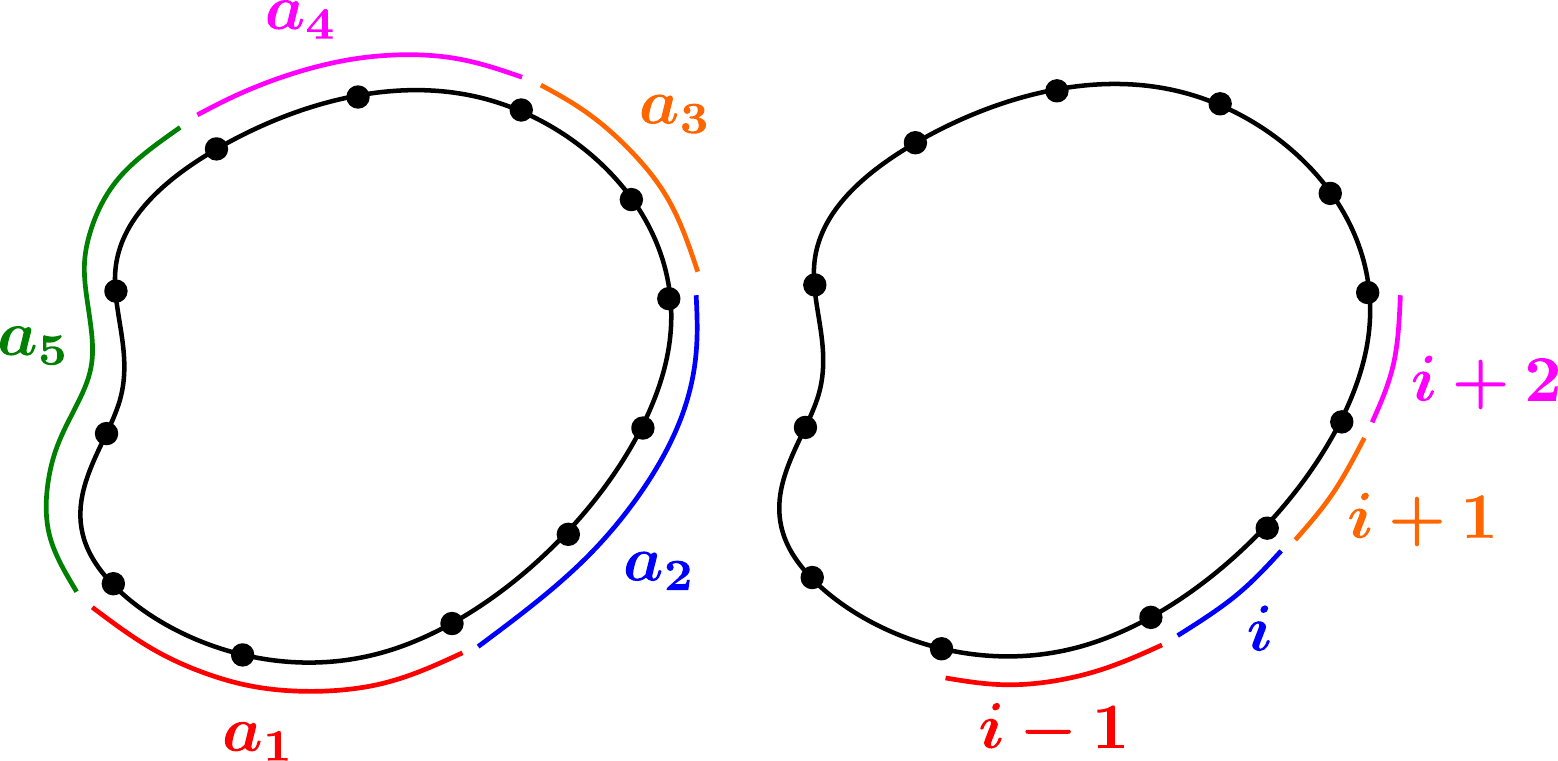}
    \caption{(Left) Five partition of the edge. We omit the conformal rulers anchored on these intervals. (Right) Choose $(a_1,a_2,a_3,a_4)=(i-1,i,i+1,i+2)$ and $a_5$ to be the complement of $(i-1)\cup i \cup (i+1) \cup (i+2)$, to conclude $\fc(i-1,i,i+1) = \fc(i,i+1,i+2),\forall i$.} 
    \label{fig:five-partition-edge}
\end{figure}
[Fig.~\ref{fig:five-partition-edge} (Left)], each interval might contain more than one coarse-grained interval. We use a short-hand notation $$((\centralcharge)_{a_i},\eta_{a_i}) \equiv (\centralcharge(a_{i-1},a_i,a_{i+1}),\eta(a_{i-1},a_i,a_{i+1})),$$ where the indices are taken over values modulo $5$. We first show $(\fc)_{a_i} = (\fc)_{a_j},i,j = 1,2,\cdots,5$. By bulk \textbf{A1} and purity of the state, we can first derive 
\begin{equation}\begin{split}
        &I_{a_{i-2},a_{i-1},a_i} + I_{a_i,a_{i+1},a_{i+2}}  = \Delta_{a_{i-1},a_i,a_{i+1}} \\ 
        \Rightarrow \quad & (\fc)_{a_{i-1}}\ln(1-\eta_{a_{i-1}}) + (\fc)_{a_{i+1}}\ln(1-\eta_{a_{i+1}}) = (\fc)_{a_i} \ln\eta_{a_i}. 
\end{split}
\end{equation}
Then, applying complement relations and decomposition relations, we can obtain 
\begin{equation}
        \eta_{a_i} = (1-\eta_{a_{i-1}})(1-\eta_{a_{i+1}}). 
\end{equation}
Therefore, 
\begin{equation}
        (\fc)_{a_i} = p_{a_i} (\fc)_{a_{i-1}} + (1-p_{a_i})(\fc)_{a_{i+1}},
\end{equation}
where 
\begin{equation}
        p_{a_i} = \frac{\ln(1-\eta_{a_{i-1}})}{\ln(1-\eta_{a_{i-1}})+\ln(1-\eta_{a_{i+1}})}\in (0,1)
\end{equation}
for all $i=1,2,3,4,5$. This means 
\begin{equation}
        \mathrm{min}\{(\fc)_{a_{i-1}},(\fc)_{a_{i+1}}\}\leq (\fc)_{a_i} \leq \mathrm{max}\{(\fc)_{a_{i-1}},(\fc)_{a_{i+1}}\},\quad \forall i,
\end{equation}
which constrains all $(\fc)_{a_i}$ to be the same. 

We remind reader that the proofs above is applicable to any 5-partition of the edge interval. Now we can apply it to specific cases. Let us consider the edge has $n$ smallest coarse-grained intervals, labeled by $i = 1,2,3,\cdots,n$. We can choose the 5-partition to be $a_1 = i-1,a_2 = i,a_3=i+1,a_4=i+2$ and $a_5$ be the complement of $(i-1)\cup i \cup (i+1) \cup (i+2)$ on the edge [Fig.~\ref{fig:five-partition-edge} (Right)]. Applying the proof above, we can conclude 
\begin{equation}
    \fc(i-1,i,i+1) = \fc(i,i+1,i+2),\quad \forall i = 1,\cdots,n.  
\end{equation}
That is, all the $\fc$ computed on the $[1,1,1]$-type intervals are the same. 

For $\fc$ on general $[n_a,n_b,n_c]$-type intervals, it can be shown that they are equal to the $[1,1,1]$-type as follows. Let us consider a $[2,1,1]$-type $\fc(ab,c,d)$ for an example\footnote{$(a,b,c,d)$ are four contiguous ``elementary'' coarse-grained intervals which can not be further divided into smaller ones.}. By bulk {\bf A1} and the definition of $(\fc,\eta)$, we can obtain 
\begin{equation}\begin{split}
    &\Delta_{ab,c,d} = \Delta_{b,c,d}-I_{a,b,c} \\ 
\Rightarrow &\quad \fc(ab,c,d)\ln(\eta(ab,c,d)) = \fc(b,c,d)\ln(\eta(b,c,d)) - \fc(a,b,c)\ln(1-\eta(a,b,c)). 
\end{split}
\end{equation}
Since we've shown $\fc(a,b,c) = \fc(b,c,d) = \fc$, then applying the cross-ratio decomposition relation to the $\eta(ab,c,d)$ on the LHS, we can see 
\begin{equation}
    \fc(ab,c,d)\ln \frac{\eta(b,c,d)}{1-\eta(a,b,c)} = \fc \ln \frac{\eta(b,c,d)}{1-\eta(a,b,c)},
\end{equation}
which implies $\fc(ab,c,d) = \fc$. Applying this argument repeatedly, one can conclude $\fc$ on any three contiguous intervals must be the same.  
\end{proof}

One may ask: since the cross-ratio properties of $\eta$s follow directly from $\centralcharge$ being constant, why not simply use constant $\centralcharge$ condition as the edge assumption instead of using the stationarity condition? The main reason is that the constant $\centralcharge$ condition is insufficient for the emergence of conformal symmetry. We find a non-CFT example in Appendix~\ref{appendix:exotic-6-site} which satisfies the constant $\centralcharge$ condition and has a set of $\eta$s that obey the relations of the geometric cross-ratios. Therefore, a stronger assumption, such as the stationarity condition or the vector fixed-point equation, is needed. Indeed, the non-CFT example is ruled out by such assumptions. 

\section{Emergence of conformal geometry: non-chiral edge}\label{sec:emergence-nonchiral}

In this section, we generalize our analysis to non-chiral edges.  That is, we drop the assumption that $c_-$ is nonzero and replace it with a different assumption [Assumption~\ref{ass:genericity}]. It should be noted that a non-chiral state can have a gapless edge described by a non-chiral CFT. A simple example can be constructed by stacking a 1+1D non-chiral CFT on the edge of a 2+1D state with gapped boundary. In examples like this, one can add some relevant perturbations on the edge to gap out the CFT. However, there are other cases in which the non-chiral CFT at the edge cannot be gapped out in such a way, such as in the $\nu=2/3$ fractional quantum Hall state~\cite{Levin2013}. Besides the chiral edges, the argument we present is expected to be applicable to all these cases with non-chiral gapless edges.

\begin{figure}[!h]
    \centering
    \includegraphics[width=0.4\columnwidth]{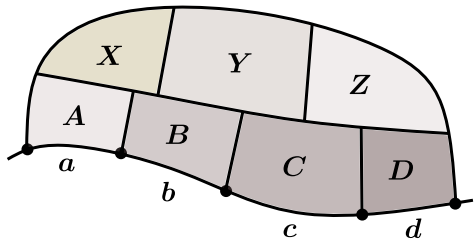}
    \caption{A 7-partite disk-like region $ (A,X,B,Y,C,Z,D)$ near the edge.}
    \label{fig:R-ABCD}
\end{figure}

The main operating assumptions of this Section are the bulk \textbf{A1} [Assumption~\ref{ass:bulk_a1}], the stationarity condition [Assumption~\ref{ass:stationarity}], and an additional assumption we introduce below. 
\begin{assumption}[Genericity condition]\label{ass:genericity}
    For a 7-partite region $ (A,X,B,Y,C,Z,D)$ of 
    any four successive intervals $(a,b,c,d)$, of the topology shown in Fig.~\ref{fig:R-ABCD},  we say a reference state $\ket{\Psi}$ satisfies the genericity condition if 
    the following three vectors 
    \begin{equation}     \hat{\Delta}_{a,b,c}\ket{\Psi},\hat{\Delta}_{b,c,d}\ket{\Psi},\ket{\Psi} 
    \end{equation}
    are linearly independent, where $\hat{\Delta}_{a,b,c},\hat{\Delta}_{b,c,d}$ denotes $\hat{\Delta}(AX,B,CY)$ and $\hat{\Delta}(BY,C,DZ)$. 
\end{assumption} 

 Let us make some remarks on the genericity condition: Firstly, because we are working under the bulk \textbf{A1} assumption, $\hat{\Delta}(AX,B,CY)\ket{\Psi}$ and $\hat{\Delta}(BY,C,DZ)\ket\Psi$ are invariant under the deformations of the subsystems in the bulk (acting on the global state $|\Psi\rangle$). Therefore, we shall specify these operators in terms of the 
 boundary intervals, i.e., $\hat{\Delta}_{a,b,c}\ket{\Psi},\hat{\Delta}_{b,c,d}\ket{\Psi}$. Secondly, we note that a state that satisfies bulk {\bf A1} and stationarity condition, non-zero $c_{-}$ from bulk modular commutators together with $\fc(\dune) > 0,\forall \dune$ implies that the state satisfies the genericity condition [Assumption~\ref{ass:genericity}]; see Section~\ref{sec:remark_genericity}. Because of this fact, the logical machinery we developed in this section is also applicable to the cases of chiral edges we discussed previously.   
 
With these assumptions, we first prove the uniqueness of the cross-ratio satisfying the vector-fixed point equation. 
\begin{Proposition}\label{prop:unique-nonchiral}
    If $\hat{\Delta}\ket{\Psi}$ is not proportional to $\ket{\Psi}$, and the solution to the equation
    \begin{equation}\label{eq:vector-eq-x}
        \KD(x)\ket{\Psi}\propto\ket{\Psi}
    \end{equation}
    exists, then the solution for $x$ is unique. 
\end{Proposition}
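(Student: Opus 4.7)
The plan is to argue by contradiction, exploiting the fact that $\KD(x)$ is affine in $x$. Suppose there exist two distinct scalars $x_1 \neq x_2$ with
\begin{equation}
    \KD(x_1)\ket{\Psi} = \lambda_1 \ket{\Psi}, \qquad \KD(x_2)\ket{\Psi} = \lambda_2 \ket{\Psi},
\end{equation}
for some $\lambda_1, \lambda_2 \in \mathbb{R}$. First, I would subtract these two equations. Since $\KD(x_1) - \KD(x_2) = (x_1 - x_2)(\hat{\Delta}(\dune) - \hat{I}(\dune))$, this yields
\begin{equation}
    (x_1 - x_2)\bigl(\hat{\Delta}(\dune) - \hat{I}(\dune)\bigr)\ket{\Psi} = (\lambda_1 - \lambda_2)\ket{\Psi}.
\end{equation}
Dividing by $x_1 - x_2 \neq 0$, we learn that $(\hat{\Delta}(\dune) - \hat{I}(\dune))\ket{\Psi}$ is proportional to $\ket{\Psi}$, i.e. $\hat{I}(\dune)\ket{\Psi} = \hat{\Delta}(\dune)\ket{\Psi} + \mu \ket{\Psi}$ for some scalar $\mu$.

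Next, I would substitute this back into the original eigenvalue relation at $x_1$. Replacing $\hat{I}(\dune)\ket{\Psi}$ in $x_1 \hat{\Delta}(\dune)\ket{\Psi} + (1-x_1)\hat{I}(\dune)\ket{\Psi} = \lambda_1 \ket{\Psi}$, all dependence on $\hat{I}(\dune)$ collapses and one finds
\begin{equation}
    \hat{\Delta}(\dune)\ket{\Psi} = \bigl(\lambda_1 - (1-x_1)\mu\bigr)\ket{\Psi}.
\end{equation}
This shows $\hat{\Delta}(\dune)\ket{\Psi}$ is proportional to $\ket{\Psi}$, directly contradicting the hypothesis of the proposition. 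Hence no two distinct solutions exist, and $x$ is unique.

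There is really no hard obstacle here; it is a short linear-algebra argument built on the observation that $\KD(x)$ depends affinely on $x$, so two distinct eigenvalue-type solutions would force both $\hat{I}(\dune)\ket{\Psi}$ and $\hat{\Delta}(\dune)\ket{\Psi}$ to lie in the one-dimensional subspace spanned by $\ket{\Psi}$. The only point that requires a small care is to verify that the scalars $\lambda_1, \lambda_2$ (the proportionality constants in $\KD(x_i)\ket\Psi \propto \ket\Psi$) are real and that the subtraction step is valid, but this is immediate from hermiticity of the modular Hamiltonians and the ambient complex vector-space structure. No further use of bulk \textbf{A1}, stationarity, or the genericity assumption is needed for this particular proposition; the hypothesis ``$\hat{\Delta}(\dune)\ket{\Psi} \not\propto \ket{\Psi}$'' plays the role that genericity (Assumption~\ref{ass:genericity}) or non-zero $c_-$ will supply in the downstream applications.
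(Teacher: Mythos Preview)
Your proof is correct and follows essentially the same contradiction strategy as the paper: assume two distinct solutions and use the affine dependence of $\KD(x)$ on $x$ to force $\hat{\Delta}\ket{\Psi}\propto\ket{\Psi}$. The only cosmetic difference is that the paper takes the weighted combination $(1-x_2)\KD(x_1)-(1-x_1)\KD(x_2)$ to eliminate $\hat{I}$ in a single step (after treating the case $x_i=1$ separately), whereas you first subtract to isolate $\hat{\Delta}-\hat{I}$ and then substitute back; your route has the minor advantage of avoiding the case split.
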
 

\begin{proof}
    Suppose there are two solutions, $x$ and $y$ and $x\neq y$, s.t. 
    \begin{equation}\label{eq:x-y-vec}
        \left[ x\hat{\Delta} + (1-x)\hat{I} \right]\ket{\Psi} \propto \ket{\Psi} ,\quad \left[ y\hat{\Delta} + (1-y)\hat{I} \right]\ket{\Psi} \propto \ket{\Psi}.
    \end{equation}
    If $x=1$ or $y=1$, one can directly obtain $\hat{\Delta}\ket{\Psi}\propto\ket{\Psi}$, which is a contradiction. If $x\neq 1$ and $y\neq 1$, by subtraction $(1-y)\KD(x)\ket{\Psi}-(1-x)\KD(y)\ket{\Psi}$, one can obtain 
    \begin{equation}
        (x-y)\hat{\Delta}\ket{\Psi} \propto \ket{\Psi}.
    \end{equation} 
    As $x-y\neq 0$, we obtain $\hat{\Delta}\ket{\Psi}\propto\ket{\Psi}$, which is still a contradiction. Therefore, the solution must be unique. 
\end{proof}
\noindent
Therefore, if the reference state $\ket{\Psi}$ satisfies the stationarity condition and genericity condition, there is a unique solution for $x$ to the equation $\KD(x)\ket{\Psi}\propto\ket{\Psi}$.

\subsection{Consistency relation of cross-ratios}

We now derive the consistency relations of $\eta(\dune)$. This extends the result of emergence of conformal geometry and constant $\fc$ to non-chiral cases, because the proofs of these results below will not require any assumption on chirality. In fact, the original proof of complement relations (Prop.~\ref{prop:rule1}) only requires bulk {\bf A1} and pure state condition, so it directly applies to non-chiral cases. Therefore, to prove the consistency relations of $\eta(\dune)$, we only need to derive the decomposition relations.

\begin{Proposition}[Decomposition relations, non-chiral]
   Consider a 7-partite region shown Fig.~\ref{fig:R-ABCD}, which contains five conformal rulers 
   \begin{equation}
    \dune(a,b,c),\quad\dune(b,c,d),\quad\dune(ab,c,d),\quad\dune(a,b,cd),\quad\dune(a,bc,d). 
   \end{equation}
   If $\ket{\Psi}$ with bulk \textbf{A1} satisfies the stationarity condition and the genericity condition \ref{ass:genericity}, then 
   \begin{align} 
    \eta(ab,c,d) &= \frac{\eta(b,c,d)}{1-\eta(a,b,c)} \label{eq:cross-ratios-r1-nc}\\ 
    \eta(a,b,cd) &= \frac{\eta(a,b,c)}{1-\eta(b,c,d)} \label{eq:cross-ratios-r2-nc}\\ 
    \eta(a,bc,d) &= \frac{\eta(a,b,c)\, \eta(b,c,d)}{(1-\eta(a,b,c))(1-\eta(a,b,c))} = \eta(ab,c,d)\eta(a,b,cd)\label{eq:cross-ratios-r3-nc}.
\end{align} 
\label{prop:decomp-nonchiral}
\end{Proposition}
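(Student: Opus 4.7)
The strategy is to mimic the route of Prop.~\ref{prop:rule2} (the chiral case), but with a crucial substitution: where the chiral proof extracted scalar identities by pairing the vector fixed--point equation with a modular commutator proportional to $c_-$, here $c_-$ may vanish, so we must work directly at the level of state--valued equations and use the genericity assumption to read off relations coefficient--wise. Schematically, the plan is: (i) convert the stationarity of $\fc$ at $\dune(a,b,c)$ and $\dune(b,c,d)$ into decompositions of $K_{ABC}\ket{\Psi}$ and $K_{BCD}\ket{\Psi}$; (ii) substitute these into the vector fixed--point equations at the three larger conformal rulers $\dune(ab,c,d), \dune(a,b,cd), \dune(a,bc,d)$; (iii) invoke genericity to pin down $\eta(ab,c,d), \eta(a,b,cd), \eta(a,bc,d)$.

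More concretely, Theorem~\ref{thm:equiv} and stationarity at $\dune(a,b,c), \dune(b,c,d)$ yield the decompositions \eqref{eq:decom-K-ABC} and \eqref{eq:decom-K-BCD} of $K_{ABC}\ket{\Psi}$ and $K_{BCD}\ket{\Psi}$ in terms of $\hat{\Delta}_{a,b,c}$, $\hat{\Delta}_{b,c,d}$, and modular Hamiltonians supported on smaller subregions. Next, stationarity at $\dune(ab,c,d)$ gives
\begin{equation}
\big[\eta(ab,c,d)\,\hat{\Delta}_{ab,c,d} + (1-\eta(ab,c,d))\,\hat{I}_{ab,c,d}\big]\ket{\Psi} = \mu\ket{\Psi}
\end{equation}
for some scalar $\mu$. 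Expanding $\hat{\Delta}_{ab,c,d}, \hat{I}_{ab,c,d}$ into modular Hamiltonians, then substituting \eqref{eq:decom-K-ABC} to eliminate every appearance of $K_{ABC}\ket{\Psi}$, and using bulk \textbf{A1} (via the Markov property) to rewrite modular Hamiltonians on bulk--extended regions as sums of modular Hamiltonians on overlapping subregions, collapses the equation into the form
\begin{equation}
\alpha_1\,\hat{\Delta}_{a,b,c}\ket{\Psi} + \beta_1\,\hat{\Delta}_{b,c,d}\ket{\Psi} + \gamma_1\,\ket{\Psi} = 0,
\end{equation}
where $\alpha_1, \beta_1$ are explicit rational functions of $\eta(a,b,c), \eta(b,c,d), \eta(ab,c,d)$. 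By the genericity Assumption~\ref{ass:genericity}, the three vectors are linearly independent, so $\alpha_1 = \beta_1 = 0$; solving this algebraic system yields Eq.~\eqref{eq:cross-ratios-r1-nc}. The same procedure applied to $\dune(a,b,cd)$ (substituting \eqref{eq:decom-K-BCD} rather than \eqref{eq:decom-K-ABC}) gives Eq.~\eqref{eq:cross-ratios-r2-nc}. Relation \eqref{eq:cross-ratios-r3-nc} then follows by composing Eqs.~\eqref{eq:cross-ratios-r1-nc} and \eqref{eq:cross-ratios-r2-nc} with the complement relation of Prop.~\ref{prop:rule1}, exactly as in the remark following the chiral Prop.~\ref{prop:rule2}; alternatively, it can be obtained by applying Steps (ii)--(iii) to $\dune(a,bc,d)$ directly.

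The main obstacle is the algebraic bookkeeping in Step (ii). One must carefully identify how $\hat{\Delta}$ and $\hat{I}$ on the larger conformal rulers decompose into the ingredients $\hat{\Delta}_{a,b,c}, \hat{\Delta}_{b,c,d}$, bulk--local Markov terms that annihilate $\ket{\Psi}$ (up to scalars), and terms proportional to $\ket{\Psi}$ itself. The key technical inputs are bulk \textbf{A1} together with the invariance of $\hat{\Delta}(\dune)\ket{\Psi}$ and $\hat{I}(\dune)\ket{\Psi}$ under bulk deformations of the underlying conformal ruler (Appendix~\ref{appendix:invariant}); these together guarantee that the terms which are not proportional to $\hat{\Delta}_{a,b,c}\ket{\Psi}$, $\hat{\Delta}_{b,c,d}\ket{\Psi}$, or $\ket{\Psi}$ reorganize into quantum Markov chain contributions that vanish on $\ket{\Psi}$. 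Once this collapse is done cleanly, the conclusion is immediate from linear independence --- no further dynamical input (in particular, no assumption on $c_-$) is required, which is precisely why the argument extends to the non--chiral setting.
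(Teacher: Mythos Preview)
Your overall strategy is the right one, and in spirit it matches the paper's proof. But there is a concrete gap in Step (ii) that you should not gloss over. When you write down the vector fixed--point equation at $\dune(ab,c,d)$,
\[
\big[\eta_4\,\hat{\Delta}_{ab,c,d} + (1-\eta_4)\,\hat{I}_{ab,c,d}\big]\ket{\Psi} = \mu\ket{\Psi},
\]
the operator $\hat{I}_{ab,c,d} = \hat{I}(AB:D|C) = K_{ABC}+K_{CD}-K_C-K_{ABCD}$ contains $K_{ABCD}$. Substituting Eq.~\eqref{eq:decom-K-ABC} eliminates $K_{ABC}\ket{\Psi}$, and bulk \textbf{A1} handles the bulk--extended pieces of $\hat{\Delta}_{ab,c,d}$, but nothing in your toolkit eliminates $K_{ABCD}\ket{\Psi}$: it is anchored on four edge intervals and has no Markov decomposition available. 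So the equation does \emph{not} collapse to $\alpha_1\hat{\Delta}_{a,b,c}\ket{\Psi}+\beta_1\hat{\Delta}_{b,c,d}\ket{\Psi}+\gamma_1\ket{\Psi}=0$; instead it reduces to
\[
\big(K_{ABCD}-K_{AB}-K_{BC}-K_{CD}+K_B+K_C\big)\ket{\Psi}
= \alpha_1\hat{\Delta}_{a,b,c}\ket{\Psi}+\beta_1\hat{\Delta}_{b,c,d}\ket{\Psi}+\gamma_1\ket{\Psi},
\]
with a fixed nonzero left--hand side. Genericity then tells you only that the coefficients $\alpha_1,\beta_1,\gamma_1$ are uniquely determined by that vector, not that they vanish. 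A single large--ruler equation therefore cannot by itself yield Eq.~\eqref{eq:cross-ratios-r1-nc}.

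The fix --- and this is exactly what the paper does --- is to observe that the \emph{same} left--hand side arises from each of the three large--ruler equations (at $\dune(a,b,cd)$, $\dune(ab,c,d)$, $\dune(a,bc,d)$), because each is just a rearrangement of its vector fixed--point equation solving for $K_{ABCD}\ket{\Psi}$. Equating any two of the resulting right--hand sides cancels $K_{ABCD}$ and produces a genuine relation of the form $(\alpha_1-\alpha_1')\hat{\Delta}_{a,b,c}\ket{\Psi}+(\beta_1-\beta_1')\hat{\Delta}_{b,c,d}\ket{\Psi}+(\gamma_1-\gamma_1')\ket{\Psi}=0$, to which genericity applies. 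Matching the coefficients pairwise across the three decompositions then gives Eqs.~\eqref{eq:cross-ratios-r1-nc}--\eqref{eq:cross-ratios-r3-nc} simultaneously. Once you reorganize your argument this way, the rest of your outline (including the bookkeeping with $\hat{\Delta}_3=\hat{\Delta}_1-\hat{I}_2$, etc., and the replacement of $\hat{I}_1,\hat{I}_2$ via the small--ruler vector equations) goes through and coincides with the paper's Appendix~\ref{app:proof-non-chiral}.
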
  

We sketch the main idea behind the proof of Prop.~\ref{prop:decomp-nonchiral} below, leaving the detailed proof in Appendix~\ref{app:proof-non-chiral}.
Our idea is to use the consistency of the decompositions derived from vector fixed-point equations. Vector fixed-point equations allow one to decompose a modular Hamiltonian that is anchored on a large edge interval to those on smaller edge intervals when they act on the reference state. More explicitly, on a $\dune(a,b,c)$ shown in Fig.~\ref{fig-dune}, one can write 
\begin{equation}
    \Tilde{K}_{ABC}\ket{\Psi} = \left( \frac{\eta}{1-\eta} \Tilde{\Delta}_{a,b,c} + \Tilde{K}_{AB} + \Tilde{K}_{BC}-\Tilde{K}_B \right) \ket{\Psi},
\end{equation}
where the ``$\sim$'' above each operator stands for the operator with its expectation value under $\ket{\Psi}$ subtracted: $\Tilde{O} = O - \bra{\Psi}O\ket{\Psi}$. For $\Tilde{\Delta}_{a,b,c}$, we only specify the edge intervals because it is invariant under the deformation of its support in the bulk [Appendix~\ref{appendix:invariant}]. Diagrammatically, this decomposition can be represented as 
\begin{equation}\label{eq:KABC-decomp}
\vcenter{\hbox{\begin{tikzpicture}
    \setlength{\ml}{1 cm}
    \coordinate (a) at (0,0);
    \coordinate (b) at (1\ml,0);
    \coordinate (c) at (2\ml,0);
    \coordinate (d) at (3\ml,0);
    \draw[|-|] (a) to (b);
    \draw[-|] (b) to (c);
    \draw[-|] (c) to (d);
    \coordinate (a1) at (1.5\ml,0.6\ml);
    \draw[blue,very thick] (a) to[out=90,in=180] (a1) to[out=0,in=90] (d);
\end{tikzpicture}}} \quad = \quad  \vcenter{\hbox{
\begin{tikzpicture}
    \setlength{\ml}{1 cm}
    \coordinate (a) at (0,0);
    \coordinate (b) at (1\ml,0);
    \coordinate (c) at (2\ml,0);
    \coordinate (d) at (3\ml,0);
    \draw[|-|] (a) to (b);
    \draw[-|] (b) to (c);
    \draw[-|] (c) to (d);
    \coordinate (a1) at (1\ml,1.5\ml);
    \coordinate (b1) at (2\ml,1.5\ml);
    \coordinate (c1) at (1.5\ml,1\ml);
    \coordinate (d1) at (2.5\ml,-0.5\ml);
    \coordinate (a2) at (1\ml,0.5\ml);
    \coordinate (b2) at (2\ml,0.5\ml);
    \coordinate (c2) at (1.5\ml,-0.5\ml);
    \draw[red,very thick] (a) to[out=90,in=180] (a1) to (b1) to[out=0,in=90] (d);
    \draw[red,very thick] (b) to[out=90,in=180] (c1) to[out=0,in=90] (c);
    \draw[red,very thick] (c1) to (1.5\ml,1.5\ml);
    \draw[blue,very thick] (a) to[out=90,in=180] (a2) to[out=0,in=90] (c);
    \draw[blue,very thick] (b) to[out=90,in=180] (b2) to[out=0,in=90] (d);
    \draw[orange,very thick] (b) to[out=-90,in=180] (c2) to[out=0,in=-90] (c);
    \node[red] at (0.5\ml,2\ml) {$\displaystyle\frac{\eta}{1-\eta}$};
    \node[blue] at (0.5\ml,0.7\ml) {$+1$};
    \node[blue] at (2.5\ml,0.7\ml) {$+1$};
    \node[orange] at (1.5\ml,-0.7\ml) {$-1$};
\end{tikzpicture}}}
\end{equation}
where the black line segments stand for the coarse-grained intervals, a single line that goes above or below an interval stands for $\Tilde{K}_X\ket{\Psi}$ with $X$ being a disk anchored at the interval, and the combination of red lines stands for $\Tilde{\Delta}\ket{\Psi}$. We shall also draw the lines below the black line for; see Eq.~\eqref{eq:Delta-decomp1} for an example. Each diagram represents the sum of all the terms appearing in the diagram, with the coefficients specified nearby. 

We now sketch the proof using this diagrammatic notation: On the 7-partite region shown in Fig.~\ref{fig:R-ABCD}, one can write down three vector fixed-point equations on $\dune(a,b,cd)$, $\dune(ab,c,d)$, $\dune(a,bc,d)$ which all allow us to decompose $\Tilde{K}_{ABCD}\ket{\Psi}$ in multiple way, all of which ought to be the same. The consistency of these three results imply the cross-ratio relations in Prop.~\ref{prop:decomp-nonchiral}.

\begin{figure}[!h]
    \centering
    \begin{tikzpicture}
    \setlength{\ml}{1 cm}
    \definecolor{mgreen}{rgb}{0.2,0.7,0.5}
    
    \coordinate (a) at (0,0);
    \coordinate (b) at (1\ml,0);
    \coordinate (c) at (2\ml,0);
    \coordinate (d) at (3\ml,0);
    \coordinate (e) at (4\ml,0);
    \draw[|-|] (a) to (b);
    \draw[-|] (b) to (c);
    \draw[-|] (c) to (d);
    \draw[-|] (d) to (e);

    \coordinate (a1) at (1\ml,0.5\ml);
    \coordinate (b1) at (3\ml,0.5\ml);

    \draw[blue,very thick] (a) to[out=90,in=180] (a1) to (b1) to[out=0,in=90] (e);
    \node at (0.5\ml,-0.3\ml) {$a$};
    \node at (1.5\ml,-0.3\ml) {$b$};
    \node at (2.5\ml,-0.3\ml) {$c$};
    \node at (3.5\ml,-0.3\ml) {$d$};
    \node at (2\ml,1\ml) {$\Tilde{K}_{ABCD}$};

    \coordinate (start) at (1\ml,-0.5\ml);
    \coordinate (end) at (-3\ml,-2\ml);
    \draw[double equal sign distance] (0\ml,-0.5\ml) -- (-2.5\ml,-2\ml);
    \node at (-3\ml,-1.3\ml) {Decomp. 1};
    \draw[double equal sign distance] (2\ml,-0.5\ml) -- (2\ml,-2\ml);
    \node at (3.2\ml,-1.3\ml) {Decomp. 2};
    \draw[double equal sign distance] (4\ml,-0.5\ml) -- (7\ml,-2\ml);
    \node at (7.2\ml,-1.3\ml) {Decomp. 3};
    
    \begin{scope}[shift={(-5\ml, -4\ml)}]
    \coordinate (a) at (0,0);
    \coordinate (b) at (1\ml,0);
    \coordinate (c) at (2\ml,0);
    \coordinate (d) at (3\ml,0);
    \coordinate (e) at (4\ml,0);
    \draw[|-|] (a) to (b);
    \draw[-|] (b) to (c);
    \draw[-|] (c) to (d);
    \draw[-|] (d) to (e);
    \coordinate (a3) at (1\ml,1.5\ml);
    \coordinate (b3) at (3\ml,1.5\ml);
    \coordinate (c3) at (1.5\ml,1\ml);
    \coordinate (d3) at (1.5\ml,1.5\ml);
    \draw[red,very thick] (a) to[out=90,in=180] (a3) to (b3) to[out=0,in=90] (e);
    \draw[red,very thick] (b) to[out=90,in=180] (c3) to[out=0,in=90] (c);
    \draw[red,very thick] (c3) to (d3);

    \coordinate (a1) at (1\ml,0.4\ml);
    \coordinate (b1) at (2.5\ml,0.6\ml);
    \coordinate (c1) at (1.5\ml,-0.5\ml);
    \draw[blue,very thick] (a) to[out=90,in=180] (a1) to[out=0,in=90] (c);
    \draw[mgreen,very thick] (b) to[out=90,in=180] (b1) to[out=0,in=90] (e);
    \draw[orange,very thick] (b) to[out=-90,in=180] (c1) to[out=0,in=-90] (c);

    \node[red] at (0.5\ml,2\ml) {$\displaystyle\frac{\eta_3}{1-\eta_3}$};
    \node[blue] at (0.5\ml,0.7\ml) {$+1$};
    \node[mgreen] at (2.5\ml,0.9\ml) {$+1$};
    \node[orange] at (1.2\ml,-0.7\ml) {$-1$};
    \end{scope}

    \begin{scope}[shift={(0\ml, -4\ml)}]
    \coordinate (a) at (0,0);
    \coordinate (b) at (1\ml,0);
    \coordinate (c) at (2\ml,0);
    \coordinate (d) at (3\ml,0);
    \coordinate (e) at (4\ml,0);
    \draw[|-|] (a) to (b);
    \draw[-|] (b) to (c);
    \draw[-|] (c) to (d);
    \draw[-|] (d) to (e);
    \coordinate (a3) at (1\ml,1.5\ml);
    \coordinate (b3) at (3\ml,1.5\ml);
    \coordinate (c3) at (2.5\ml,1\ml);
    \coordinate (d3) at (2.5\ml,1.5\ml);
    \draw[red,very thick] (a) to[out=90,in=180] (a3) to (b3) to[out=0,in=90] (e);
    \draw[red,very thick] (c) to[out=90,in=180] (c3) to[out=0,in=90] (d);
    \draw[red,very thick] (c3) to (d3);

    \coordinate (a1) at (1.5\ml,0.6\ml);
    \coordinate (b1) at (3\ml,0.4\ml);
    \coordinate (c1) at (2.5\ml,-0.5\ml);
    \draw[mgreen,very thick] (a) to[out=90,in=180] (a1) to[out=0,in=90] (d);
    \draw[blue,very thick] (c) to[out=90,in=180] (b1) to[out=0,in=90] (e);
    \draw[orange,very thick] (c) to[out=-90,in=180] (c1) to[out=0,in=-90] (d);

    \node[red] at (0.5\ml,2\ml) {$\displaystyle\frac{\eta_4}{1-\eta_4}$};
    \node[blue] at (3.5\ml,0.7\ml) {$+1$};
    \node[mgreen] at (1.5\ml,0.9\ml) {$+1$};
    \node[orange] at (2.2\ml,-0.7\ml) {$-1$};
    \end{scope}

    \begin{scope}[shift={(5\ml, -4\ml)}]
    \coordinate (a) at (0,0);
    \coordinate (b) at (1\ml,0);
    \coordinate (c) at (2\ml,0);
    \coordinate (d) at (3\ml,0);
    \coordinate (e) at (4\ml,0);
    \draw[|-|] (a) to (b);
    \draw[-|] (b) to (c);
    \draw[-|] (c) to (d);
    \draw[-|] (d) to (e);
    \coordinate (a3) at (1\ml,1.5\ml);
    \coordinate (b3) at (3\ml,1.5\ml);
    \coordinate (c3) at (2\ml,1\ml);
    \coordinate (d3) at (2\ml,1.5\ml);
    \draw[red,very thick] (a) to[out=90,in=180] (a3) to (b3) to[out=0,in=90] (e);
    \draw[red,very thick] (b) to[out=90,in=180] (c3) to[out=0,in=90] (d);
    \draw[red,very thick] (c3) to (d3);

    \coordinate (a1) at (1.5\ml,0.6\ml);
    \coordinate (b1) at (2.5\ml,0.6\ml);
    \coordinate (c1) at (2\ml,-0.5\ml);
    \draw[blue,very thick] (a) to[out=90,in=180] (a1) to[out=0,in=90] (d);
    \draw[mgreen,very thick] (b) to[out=90,in=180] (b1) to[out=0,in=90] (e);
    \draw[orange,very thick] (b) to[out=-90,in=180] (c1) to[out=0,in=-90] (d);

    \node[red] at (0.5\ml,2\ml) {$\displaystyle\frac{\eta_5}{1-\eta_5}$};
    \node[blue] at (0.6\ml,0.8\ml) {$+1$};
    \node[mgreen] at (3.4\ml,0.8\ml) {$+1$};
    \node[orange] at (1.2\ml,-0.7\ml) {$-1$};
    \end{scope}
\end{tikzpicture}
    \caption{Decomposition of $\Tilde{K}_{ABCD}\ket{\Psi}$ in three ways.}
    \label{fig:KABCD-decomp}
\end{figure}
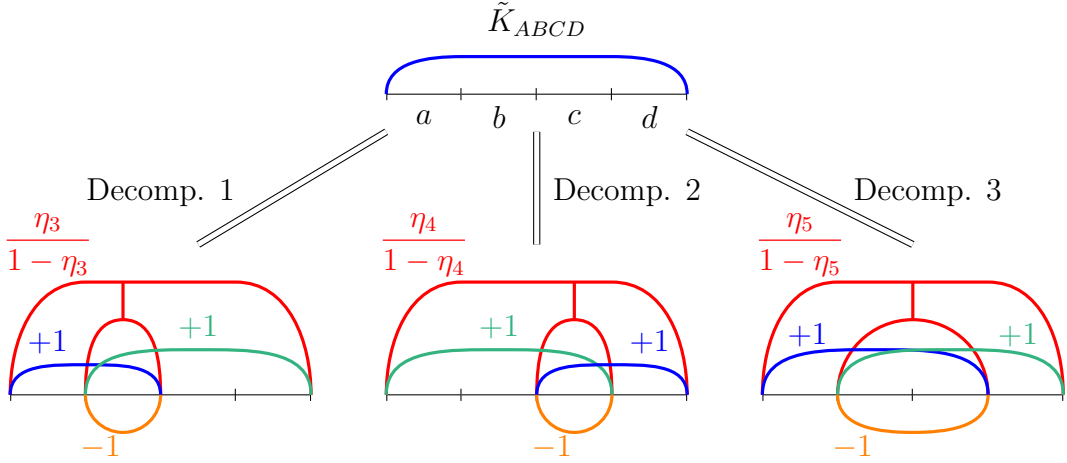

The computation involved can be sketched as follows. For simplicity, we index the intervals as 
   \begin{equation}
    (a,b,c)\to 1,\quad (b,c,d)\to 2,\quad (a,b,cd)\to 3,\quad (ab,c,d)\to 4,\quad (a,bc,d)\to 5. 
\end{equation}
First, we use the vector fixed-point equations on $\dune(a,b,cd),\dune(ab,c,d),\dune(a,bc,d)$ to decompose $\Tilde{K}_{ABCD}$ in three ways as shown in Fig.~\ref{fig:KABCD-decomp}. Second, we decompose $\Tilde{K}_{ABC}\ket{\Psi}$ and  $\Tilde{K}_{BCD}\ket{\Psi}$ (denoted by the blue and green lines above the three intervals $abc$ and $bcd$) using the vector fixed-point equations on $\dune(a,b,c),\dune(b,c,d)$ as Eq.~\eqref{eq:KABC-decomp}. Third, we decompose $\Tilde{\Delta}_{ab,c,d}\ket{\Psi}$, $\Tilde{\Delta}_{ab,c,d}\ket{\Psi}$ and $\Tilde{\Delta}_{ab,c,d}\ket{\Psi}$ (denoted by the combinations of red lines in Fig.~\ref{fig:KABCD-decomp}) into $\Tilde{\Delta}_{a,b,c}\ket{\Psi}$ and $\Tilde{\Delta}_{b,c,d}\ket{\Psi}$ using bulk {\bf A1} and vector fixed-point equation on $\dune(a,b,c)$ and $\dune(b,c,d)$. The end result of this computation is the following [Eq.~\eqref{eq:Delta-decomp1},~\eqref{eq:Delta-decomp2},~\eqref{eq:Delta-decomp3}]. 
\begin{equation}
    \vcenter{\hbox{\begin{tikzpicture}
    \setlength{\ml}{1 cm}
    \coordinate (a) at (0,0);
    \coordinate (b) at (1\ml,0);
    \coordinate (c) at (2\ml,0);
    \coordinate (d) at (3\ml,0);
    \coordinate (e) at (4\ml,0);
    \draw[|-|] (a) to (b);
    \draw[-|] (b) to (c);
    \draw[-|] (c) to (d);
    \draw[-|] (d) to (e);
    \coordinate (a3) at (1\ml,1.5\ml);
    \coordinate (b3) at (3\ml,1.5\ml);
    \coordinate (c3) at (1.5\ml,1\ml);
    \coordinate (d3) at (1.5\ml,1.5\ml);
    \draw[red,very thick] (a) to[out=90,in=180] (a3) to (b3) to[out=0,in=90] (e);
    \draw[red,very thick] (b) to[out=90,in=180] (c3) to[out=0,in=90] (c);
    \draw[red,very thick] (c3) to (d3);
    \end{tikzpicture}}} \quad = \quad  \vcenter{\hbox{\begin{tikzpicture}
    \setlength{\ml}{1 cm}
    \coordinate (a) at (0,0);
    \coordinate (b) at (1\ml,0);
    \coordinate (c) at (2\ml,0);
    \coordinate (d) at (3\ml,0);
    \coordinate (e) at (4\ml,0);
    \draw[|-|] (a) to (b);
    \draw[-|] (b) to (c);
    \draw[-|] (c) to (d);
    \draw[-|] (d) to (e);
    \coordinate (a1) at (1.5\ml,1\ml);
    \coordinate (b1) at (1.5\ml,0.5\ml);
    \coordinate (c1) at (2.5\ml,-1\ml);
    \coordinate (d1) at (2.5\ml,-0.5\ml);
    \draw[magenta,very thick] (a) to[out=90,in=180] (a1) to[out=0,in=90] (d);
    \draw[magenta,very thick] (b) to[out=90,in=180] (b1) to[out=0,in=90] (c);
    \draw[magenta,very thick] (a1) to (b1);
    \draw[purple,very thick] (b) to[out=-90,in=180] (c1) to[out=0,in=-90] (e);
    \draw[purple,very thick] (c) to[out=-90,in=180] (d1) to[out=0,in=-90] (d);
    \draw[purple,very thick] (c1) to (d1);
    \node[magenta] at (0.5\ml,1.2\ml) {$+1$};
    \node[purple] at (3.8\ml,-1.2\ml) {$\displaystyle \frac{\eta_2}{1-\eta_2} $};
\end{tikzpicture}}} \label{eq:Delta-decomp1}
\end{equation}
\begin{equation}
\vcenter{\hbox{\begin{tikzpicture}
    \setlength{\ml}{1 cm}
    \coordinate (a) at (0,0);
    \coordinate (b) at (1\ml,0);
    \coordinate (c) at (2\ml,0);
    \coordinate (d) at (3\ml,0);
    \coordinate (e) at (4\ml,0);
    \draw[|-|] (a) to (b);
    \draw[-|] (b) to (c);
    \draw[-|] (c) to (d);
    \draw[-|] (d) to (e);
    \coordinate (a3) at (1\ml,1.5\ml);
    \coordinate (b3) at (3\ml,1.5\ml);
    \coordinate (c3) at (2.5\ml,1\ml);
    \coordinate (d3) at (2.5\ml,1.5\ml);
    \draw[red,very thick] (a) to[out=90,in=180] (a3) to (b3) to[out=0,in=90] (e);
    \draw[red,very thick] (c) to[out=90,in=180] (c3) to[out=0,in=90] (d);
    \draw[red,very thick] (c3) to (d3);
\end{tikzpicture}}} \quad = \quad \vcenter{\hbox{\begin{tikzpicture}
    \setlength{\ml}{1 cm}
    \coordinate (a) at (0,0);
    \coordinate (b) at (1\ml,0);
    \coordinate (c) at (2\ml,0);
    \coordinate (d) at (3\ml,0);
    \coordinate (e) at (4\ml,0);
    \draw[|-|,very thick] (a) to (b);
    \draw[-|,very thick] (b) to (c);
    \draw[-|,very thick] (c) to (d);
    \draw[-|,very thick] (d) to (e);
    \coordinate (a1) at (1.5\ml,1\ml);
    \coordinate (b1) at (1.5\ml,0.5\ml);
    \coordinate (c1) at (2.5\ml,-1\ml);
    \coordinate (d1) at (2.5\ml,-0.5\ml);
    \draw[magenta,very thick] (a) to[out=90,in=180] (a1) to[out=0,in=90] (d);
    \draw[magenta,very thick] (b) to[out=90,in=180] (b1) to[out=0,in=90] (c);
    \draw[magenta,very thick] (a1) to (b1);
    \draw[purple,very thick] (b) to[out=-90,in=180] (c1) to[out=0,in=-90] (e);
    \draw[purple,very thick] (c) to[out=-90,in=180] (d1) to[out=0,in=-90] (d);
    \draw[purple,very thick] (c1) to (d1);
    \node[magenta] at (0.5\ml,1.5\ml) {$\displaystyle\frac{\eta_1}{1-\eta_1}$};
    \node[purple] at (3.8\ml,-1\ml) {$+1$};
\end{tikzpicture}}} \label{eq:Delta-decomp2}    
\end{equation}
\begin{equation}
    \vcenter{\hbox{\begin{tikzpicture}
    \setlength{\ml}{1 cm}
    \coordinate (a) at (0,0);
    \coordinate (b) at (1\ml,0);
    \coordinate (c) at (2\ml,0);
    \coordinate (d) at (3\ml,0);
    \coordinate (e) at (4\ml,0);
    \draw[|-|] (a) to (b);
    \draw[-|] (b) to (c);
    \draw[-|] (c) to (d);
    \draw[-|] (d) to (e);
    \coordinate (a3) at (1\ml,1.5\ml);
    \coordinate (b3) at (3\ml,1.5\ml);
    \coordinate (c3) at (2\ml,1\ml);
    \coordinate (d3) at (2\ml,1.5\ml);
    \draw[red,very thick] (a) to[out=90,in=180] (a3) to (b3) to[out=0,in=90] (e);
    \draw[red,very thick] (b) to[out=90,in=180] (c3) to[out=0,in=90] (d);
    \draw[red,very thick] (c3) to (d3);
\end{tikzpicture}}} \quad = \quad \vcenter{\hbox{\begin{tikzpicture}
    \setlength{\ml}{1 cm}
    \coordinate (a) at (0,0);
    \coordinate (b) at (1\ml,0);
    \coordinate (c) at (2\ml,0);
    \coordinate (d) at (3\ml,0);
    \coordinate (e) at (4\ml,0);
    \draw[|-|,very thick] (a) to (b);
    \draw[-|,very thick] (b) to (c);
    \draw[-|,very thick] (c) to (d);
    \draw[-|,very thick] (d) to (e);
    \coordinate (a1) at (1.5\ml,1\ml);
    \coordinate (b1) at (1.5\ml,0.5\ml);
    \coordinate (c1) at (2.5\ml,-1\ml);
    \coordinate (d1) at (2.5\ml,-0.5\ml);
    \draw[magenta,very thick] (a) to[out=90,in=180] (a1) to[out=0,in=90] (d);
    \draw[magenta,very thick] (b) to[out=90,in=180] (b1) to[out=0,in=90] (c);
    \draw[magenta,very thick] (a1) to (b1);
    \draw[purple,very thick] (b) to[out=-90,in=180] (c1) to[out=0,in=-90] (e);
    \draw[purple,very thick] (c) to[out=-90,in=180] (d1) to[out=0,in=-90] (d);
    \draw[purple,very thick] (c1) to (d1);
    \node[magenta] at (0.5\ml,1.5\ml) {$\displaystyle\frac{1}{1-\eta_1}$};
    \node[purple] at (3.8\ml,-1.2\ml) {$\displaystyle\frac{1}{1-\eta_2}$};
\end{tikzpicture}}} \label{eq:Delta-decomp3}
\end{equation}

Now, we successfully decompose $\Tilde{K}_{ABCD}\ket{\Psi}$ into a linear combination of $\Tilde{\Delta}_{a,b,c}\ket{\Psi}$, $\Tilde{\Delta}_{b,c,d}\ket{\Psi}$, and some extra terms which are the same for all three decompositions. Moving these extra terms to one side of the equation, we obtain the following.
\begin{equation}\label{eq:three-decomposition-result}
    \begin{tikzpicture}
    \setlength{\ml}{1 cm}
    \definecolor{mgreen}{rgb}{0.2,0.7,0.5}
    \begin{scope}[shift={(0, 0.5\ml)}]
    \coordinate (a) at (0,0);
    \coordinate (b) at (1\ml,0);
    \coordinate (c) at (2\ml,0);
    \coordinate (d) at (3\ml,0);
    \coordinate (e) at (4\ml,0);
    \draw[|-|] (a) to (b);
    \draw[-|] (b) to (c);
    \draw[-|] (c) to (d);
    \draw[-|] (d) to (e);

    \coordinate (a1) at (1\ml,1\ml);
    \coordinate (b1) at (3\ml,1\ml);
    \draw[blue,very thick] (a) to[out=90,in=180] (a1) to (b1) to[out=0,in=90] (e);

    \coordinate (a2) at (1\ml, -0.5\ml);
    \coordinate (b2) at (2\ml, -0.5\ml);
    \coordinate (c2) at (3\ml, -0.5\ml);

    \draw[orange,very thick] (a) to[out=-90,in=180] (a2) to[out=0,in=-90] (c);
    \draw[brown,very thick] (b) to[out=-90,in=180] (b2) to[out=0,in=-90] (d);
    \draw[orange,very thick] (c) to[out=-90,in=180] (c2) to[out=0,in=-90] (e);

    \coordinate (a3) at (1.5\ml, 0.5\ml);
    \coordinate (b3) at (2.5\ml, 0.5\ml);

    \draw[blue,very thick] (b) to[out=90,in=180] (a3) to[out=0,in=90] (c); 
    \draw[blue,very thick] (c) to[out=90,in=180] (b3) to[out=0,in=90] (d); 

    \node[blue] at (2\ml,1.2\ml) {$+1$};
    \node[blue] at (1\ml,0.6\ml) {$+1$};
    \node[blue] at (3\ml,0.6\ml) {$+1$};
    \node[orange] at (1\ml,-0.7\ml) {$-1$};
    \node[brown] at (2\ml,-0.7\ml) {$-1$};
    \node[orange] at (3\ml,-0.7\ml) {$-1$};
    \end{scope} 
    
    \draw[double equal sign distance] (0\ml,-0.5\ml) -- (-2.5\ml,-2\ml);
    \node at (-3\ml,-1.3\ml) {Decomp. 1};
    \draw[double equal sign distance] (2\ml,-0.5\ml) -- (2\ml,-2\ml);
    \node at (3.2\ml,-1.3\ml) {Decomp. 2};
    \draw[double equal sign distance] (4\ml,-0.5\ml) -- (7\ml,-2\ml);
    \node at (7.2\ml,-1.3\ml) {Decomp. 3};
    
    \begin{scope}[shift={(-5\ml, -4.2\ml)}]
    \coordinate (a) at (0,0);
    \coordinate (b) at (1\ml,0);
    \coordinate (c) at (2\ml,0);
    \coordinate (d) at (3\ml,0);
    \coordinate (e) at (4\ml,0);
    \draw[|-|,very thick] (a) to (b);
    \draw[-|,very thick] (b) to (c);
    \draw[-|,very thick] (c) to (d);
    \draw[-|,very thick] (d) to (e);
    \coordinate (a1) at (1.5\ml,1\ml);
    \coordinate (b1) at (1.5\ml,0.5\ml);
    \coordinate (c1) at (2.5\ml,-1\ml);
    \coordinate (d1) at (2.5\ml,-0.5\ml);
    \draw[magenta,very thick] (a) to[out=90,in=180] (a1) to[out=0,in=90] (d);
    \draw[magenta,very thick] (b) to[out=90,in=180] (b1) to[out=0,in=90] (c);
    \draw[magenta,very thick] (a1) to (b1);
    \draw[purple,very thick] (b) to[out=-90,in=180] (c1) to[out=0,in=-90] (e);
    \draw[purple,very thick] (c) to[out=-90,in=180] (d1) to[out=0,in=-90] (d);
    \draw[purple,very thick] (c1) to (d1);
    \node[magenta] at (1.5\ml,1.6\ml) {$\displaystyle\frac{\eta_3}{1-\eta_3}$};
    \node[purple] at (2.5\ml,-1.7\ml) {$\displaystyle\frac{\eta_2}{(1-\eta_3)(1-\eta_2)}$};
    \end{scope}

    \begin{scope}[shift={(0\ml, -4.2\ml)}]
     \coordinate (a) at (0,0);
    \coordinate (b) at (1\ml,0);
    \coordinate (c) at (2\ml,0);
    \coordinate (d) at (3\ml,0);
    \coordinate (e) at (4\ml,0);
    \draw[|-|,very thick] (a) to (b);
    \draw[-|,very thick] (b) to (c);
    \draw[-|,very thick] (c) to (d);
    \draw[-|,very thick] (d) to (e);
    \coordinate (a1) at (1.5\ml,1\ml);
    \coordinate (b1) at (1.5\ml,0.5\ml);
    \coordinate (c1) at (2.5\ml,-1\ml);
    \coordinate (d1) at (2.5\ml,-0.5\ml);
    \draw[magenta,very thick] (a) to[out=90,in=180] (a1) to[out=0,in=90] (d);
    \draw[magenta,very thick] (b) to[out=90,in=180] (b1) to[out=0,in=90] (c);
    \draw[magenta,very thick] (a1) to (b1);
    \draw[purple,very thick] (b) to[out=-90,in=180] (c1) to[out=0,in=-90] (e);
    \draw[purple,very thick] (c) to[out=-90,in=180] (d1) to[out=0,in=-90] (d);
    \draw[purple,very thick] (c1) to (d1);
    \node[magenta] at (1.5\ml,1.6\ml) {$\displaystyle \frac{\eta_1}{(1-\eta_4)(1-\eta_1)} $};
    \node[purple] at (2.5\ml,-1.7\ml) {$\displaystyle \frac{\eta_4}{1-\eta_4} $};
    \end{scope}

    \begin{scope}[shift={(5\ml, -4.2\ml)}]
    \coordinate (a) at (0,0);
    \coordinate (b) at (1\ml,0);
    \coordinate (c) at (2\ml,0);
    \coordinate (d) at (3\ml,0);
    \coordinate (e) at (4\ml,0);
    \draw[|-|,very thick] (a) to (b);
    \draw[-|,very thick] (b) to (c);
    \draw[-|,very thick] (c) to (d);
    \draw[-|,very thick] (d) to (e);
    \coordinate (a1) at (1.5\ml,1\ml);
    \coordinate (b1) at (1.5\ml,0.5\ml);
    \coordinate (c1) at (2.5\ml,-1\ml);
    \coordinate (d1) at (2.5\ml,-0.5\ml);
    \draw[magenta,very thick] (a) to[out=90,in=180] (a1) to[out=0,in=90] (d);
    \draw[magenta,very thick] (b) to[out=90,in=180] (b1) to[out=0,in=90] (c);
    \draw[magenta,very thick] (a1) to (b1);
    \draw[purple,very thick] (b) to[out=-90,in=180] (c1) to[out=0,in=-90] (e);
    \draw[purple,very thick] (c) to[out=-90,in=180] (d1) to[out=0,in=-90] (d);
    \draw[purple,very thick] (c1) to (d1);
    \node[magenta] at (1.5\ml,1.6\ml) {$\displaystyle \frac{\eta_5}{1-\eta_5}+\frac{\eta_1}{(1-\eta_5)(1-\eta_1)} $};
    \node[purple] at (2.5\ml,-1.7\ml) {$\displaystyle \frac{\eta_5}{1-\eta_5}+\frac{\eta_2}{(1-\eta_5)(1-\eta_2)} $};
    \end{scope}
\end{tikzpicture}
\end{equation}
The three vectors in the second row in Eq.~\eqref{eq:three-decomposition-result} shall equal to each other, as they are all equal to the vector in the first row in Eq.~\eqref{eq:three-decomposition-result}. 
Since $\hat{\Delta}_{a,b,c}\ket{\Psi}$, $\hat{\Delta}_{b,c,d}\ket{\Psi}$ and $\ket{\Psi}$ are linearly independent [Assumption~\ref{ass:genericity}], the two vectors $\Tilde{\Delta}_{a,b,c}\ket{\Psi}$ and $\Tilde{\Delta}_{b,c,d}\ket{\Psi}$ are linearly independent, and therefore the coefficients for each of those vectors in the three decompositions should equal to each other. This leads to the following identities:
\begin{align}
        &\frac{\eta_3}{1-\eta_3} = \frac{\eta_1}{(1-\eta_4)(1-\eta_1)} = \frac{\eta_5}{1-\eta_5} + \frac{\eta_1}{(1-\eta_5)(1-\eta_1)},  \\ 
        &\frac{\eta_2}{(1-\eta_3)(1-\eta_2)}  = \frac{\eta_4}{1-\eta_4}= \frac{\eta_5}{1-\eta_5} + \frac{\eta_2}{(1-\eta_5)(1-\eta_2)}.
\end{align}
We can then solve for $\eta_3,\eta_4,\eta_5$:  
\begin{equation}
        \eta_3 = \frac{\eta_1}{1-\eta_2},\quad \eta_4 = \frac{\eta_2}{1-\eta_1},\quad \eta_5 = \frac{\eta_1\eta_2}{(1-\eta_1)(1-\eta_2)}. 
\end{equation}
These relations are precisely the decomposition relations in Prop.~\ref{prop:rule2}.

Thus we proved the decomposition rule of cross-ratios. Since the complement rule [Prop.~\ref{prop:rule1}] is also satisfied, we can conclude the emergence of conformal geometry, even for the non-chiral edge [Section~\ref{subsec:map-to-circle}]. In particular, the fact that $\fc$ is a constant also follows.

\subsection{Remarks on the genericity condition and nonzero $\fc$}
\label{sec:remark_genericity}
In this Section, we make several remarks on the genericity condition. Firstly, the genericity condition implies $\fc(\dune)> 0,\forall \dune$. This is because if for a $\dune$ that anchors at three successive intervals $(a,b,c)$, $\fc(\dune)=0$, then this implies $\hat{\Delta}_{a,b,c}\ket{\Psi}=0$ or $\hat{I}_{a,b,c}\ket{\Psi}=0$. Notice by bulk {\bf A1} and pure state condition, $\hat{I}_{a,b,c}\ket{\Psi} = \hat{\Delta}_{b,c,d}\ket{\Psi}$, where $d$ is the complement of the interval $abc$ on the edge. 
Therefore, $\fc(a,b,c)=0$ implies $\hat{\Delta}_{a,b,c}\ket{\Psi}=0$ or $\hat{\Delta}_{b,c,d}\ket{\Psi}=0$, which violates the genericity condition. 

Secondly, for a state satisfying bulk {\bf A1} and the stationarity condition with $\fc (\dune)> 0$ for all conformal rulers $\dune$, if $c_{-}\neq 0$ the genericity condition [Assumption~\ref{ass:genericity}] is satisfied. This can be proved by first computing
\begin{equation}\label{eq:commu-deltas}
    i\bra{\Psi}[\hat{\Delta}_{a,b,c},\hat{\Delta}_{b,c,d}]\ket{\Psi} = \frac{\pi c_{-}}{3}(1-\eta(a,b,c)-\eta(b,c,d))
\end{equation}
via Prop.~\ref{Prop:edge-J}, where $\eta(a,b,c),\eta(b,c,d)$ are quantum cross-ratios and in the range of $(0,1)$. Note the following fact:
\begin{equation}\label{eq:eta-sum-neq1}
    \eta(a,b,c) + \eta(b,c,d)\neq 1.
\end{equation}
This must be true because otherwise the following equation holds:
\begin{equation}\label{eq:eta-a-b-c-d-e}
    \eta(b,c,d) = 1-\eta(a,b,c) = \eta(b,c,de) = \frac{\eta(b,c,d)}{1-\eta(c,d,e)},
\end{equation}
where the second equal sign follows from the complement relation [Prop.~\ref{prop:rule1}] and the third equal sign follows from the decomposition relations [Prop.~\ref{prop:rule2}]. Eq.~\eqref{eq:eta-a-b-c-d-e} implies $\eta(c,d,e) = 0$, which contradicts to $\fc(c,d,e) > 0$. Hence Eq.~\eqref{eq:eta-sum-neq1} is proved. Therefore, if the genericity condition is violated on the region anchored at intervals $(a,b,c,d)$ in Fig.~\ref{fig:R-ABCD}, the commutator in Eq.~\eqref{eq:commu-deltas} should vanish, which subsequently implies that $c_{-} = 0$.\footnote{We note that $c_{-}$ here is computed from the bulk modular commutator in Prop.~\ref{Prop:edge-J}.} Therefore, for any state that satisfies bulk {\bf A1}, the stationarity condition, and $\fc > 0$, the genericity condition is strictly weaker than the $c_{-}\neq 0$ condition. 

We note that the we currently do not have a simple physical motivation for the genericity condition. As such, we leave it as an open problem to provide a physical meaning to this condition. 
One possible alternative is the following condition. For any two thickened successive interval $(A,B)$, there exists an infinitesimal unitary transformation on $AB$, 
that changes $(S_A-S_B)_{|\Psi\rangle}$. 
This condition implies the genericity condition. (We omit the proof.) We mention this particular condition because (i) it suggests that the genericity condition is not very strong; (ii) it may be possible to relate this condition to a more physically reasonable condition. 

\section{Discussion}\label{sec:discussions}
\subsection{Summary}

In this work, we derived the emergence of conformal geometry on gapless systems from a few locally-checkable conditions on a quantum state $|\Psi\rangle$. The physical setups include 2+1D chiral systems with an edge and also non-chiral counterparts, which can include 1+1D CFT. This work generalizes the entanglement bootstrap approach to the context of gapless systems. The bulk assumptions we took, such as bulk {\bf A1} and nonzero bulk modular commutator, are already known~\cite{shi2020fusion,Kim2021}. Our main finding is an assumption about the edge from which the conformal geometry emerges: the stationarity condition ($\delta\centralcharge_{|\Psi\rangle}=0)$. We also showed that this condition is equivalent to a locally-checkable vector fixed-point equation ($\KD(\eta)|\Psi\rangle \propto |\Psi\rangle$), similar to the one studied in \cite{Lin2023}.  

What is perhaps most remarkable is that the conformal geometry emerged from these assumptions, even without putting in the distance measure. Even without making any assumptions about the distance metric or the symmetry, we obtained the set of cross-ratios, which further enable us to assign a distance measure to the physical edge up to global transformations. 
This was derived from our assumptions, phrased in terms of quantities that explicitly cancel out the UV contribution, leaving only the contributions from the IR. 

The main workhorse behind this derivation are the quantum information-theoretic quantities we defined. We were able to define the quantum cross-ratio candidate $\eta$ and the central charge candidate $\centralcharge$ in terms of the linear combination of entanglement entropies [Eq.~\eqref{eq:def-c-eta-intro}]. Under the stationarity condition [Assumption~\ref{ass:stationarity}] and the bulk assumptions [Assumption~\ref{ass:bulk_a1} and~\ref{ass:nonzero_ccc}], we showed that three different reasonable choices of $\eta$ from the CFT point of view can be shown to be exactly the same, even without making any explicit assumption about the underlying effective field theory [Prop.~\ref{Prop:edge-J}]. We thus found three different ways to certify if the edge has a valid set of cross-ratios. One of these approaches, i.e., the vector fixed-point equation, is particularly useful because it implies a locality property of modular Hamiltonians. Namely, the modular Hamiltonian of a large disk that touches the edge can be decomposed into smaller pieces (at least when acting on a certain low-energy subspace).

\subsection{Further remarks and directions}

An interesting object in our work is our quantum-information theoretic definition of the ``central charge'' $\centralcharge$. If the wavefunction is the fixed-point with respect to the variation of this quantity ($\delta(\centralcharge)_{|\Psi\rangle}=0$), the conformal geometry emerges. Moreover, precisely under the same condition, $\centralcharge$ attains a constant value everywhere in the system. These results are evocative of the properties of the famous $c$-functions \cite{Zamolodchikov1986,Casini2007} in 1+1D relativistic quantum field theories. On that ground, we may speculate that our $\centralcharge$ can provide further insight into states near the fixed-point.

A natural question is whether $\centralcharge$ can be a meaningful quantity in the study of RG flows. To that end, we speculate that our $\centralcharge$ is a $c$-function in a context to be made precise. If we specialize to the groundstate of a relativistic CFT in 1+1D, there is a limiting choice of intervals for which our quantity $\centralcharge$ is closely related to the RG monotone of Casini and Huerta \cite{Casini2007} [Example~\ref{exmp:CH-limit}]. 
We may ask: is $\centralcharge(a,b,c)$ monotonic under RG if the state $|\Psi\rangle$ is a relativistic QFT groundstate for every choice of $a,b,c$? How about contexts without Lorentz symmetry, including contexts considered in~\cite{Swingle2013}? Can we develop a well-defined notion of RG for an edge without distance measure?  

There is an intriguing analog between the equivalence between stationarity and vector fixed-point equation
\begin{equation}
    \delta \centralcharge_{|\Psi\rangle} =0  \quad \Leftrightarrow \quad \KD(\eta) |\Psi\rangle \propto |\Psi\rangle,
\end{equation}
and the well-known fact in classical mechanics: stationarity of action is equivalent to the (often vector or tensor) equation of motion associated with it. 
The stationarity of action was mysterious in classical mechanics (for example, it was initially motivated as ``minimizing God's displeasure" \cite{freund2007passion}), but it is demystified in quantum theory: only the stationary path contributes significantly to the path integral (in the $\hbar \to 0$ limit).
Can we hope for an analogous next-level understanding\footnote{In the context of 1+1D CFT, there are many speculations that such an equation should be the equation of motion for string theory \cite{Friedan, Polyakov:1986zp,Banks:1987qs, Tseytlin:1986ws, Tseytlin:1987bz}.  That is, since 2D CFTs (with appropriate values of the central charge, including worldsheet ghosts) correspond to (perturbative) string vacua, perhaps the off-shell configuration space of (perturbative) string theory is somehow related to a `space of quantum field theories' on the worldsheet. } on why $\delta\centralcharge =0$ can be a robust phenomenon\footnote{For chiral theory, stationarity of $\centralcharge$ does appear to be robust from our numerical study; see Appendix~\ref{appendix:p+ip}. We admit that this is a puzzling surprise.} for chiral edges in nature?  

Furthermore, is it true that every gapped phase in 2+1D admits a conformal edge? Previously, it is believed by many \cite{Qi-Katsura-Ludwig2012,Tu2013,Kong:2019byq,Kong:2019cuu,2018NuPhB.927..140K} that chiral edge should be conformal, with connections between CFT and models of chiral gapped wavefunctions in explicit models~\cite{moore1991nonabelions,wen1994chiral,Nielsen2012,Sopenko2023}\footnote{In the context of relativistic field theory, assuming Poincar\'e symmetry and scale invariance in 1+1 dimensions implies conformal invariance \cite{Polchinski:1987dy}.  But, in $D>1+1$, even in this more restricted context,  the conclusion is not clear \cite{Nakayama:2013is}.}.
Our study adds to this story by opening the door to a serious answer on how much nature likes this design and if quantum entanglement gives rise to it. 

Notably, two types of ``central charges'' are captured by our framework, which are computable from a single wavefunction. One is the total central charge $\centralcharge$ computed from the edge, and the other is the chiral central charge $c_-$ computed from the bulk. Based on physical intuition, we would expect $\centralcharge \geq |c_-|$. A special case of this problem is to show that if $\centralcharge = 0$, then $c_-= 0$. It is desirable to prove this based on our formalism. A potential usage of $\fc$ is a criterion for an ungappable edge. If at a point in the space of states where $\fc$ is minimized and non-zero, this could possibly imply the edge is ungappable. 

The bulk entanglement bootstrap axioms can be rephrased as the statement that the bulk reaches the global minimum of a certain entropy combination $\Delta=0$. (Recall $\Delta\ge 0$ by strong subadditivity.) Relatedly, the stationarity condition says $\centralcharge$ defined using entropy combination near the edge is a critical point (such as a saddle point or a local extremum). 
This makes us wonder if entanglement bootstrap assumptions should be thought of as assumptions about critical points. Can this view suggest new generalizations to broader contexts, either gapped or gapless?  Can our approach work for higher-dimensional robust gapless edges protected by a nontrivial bulk? How about finite temperature phase transitions? Generalizations to contexts with symmetries (such as charge conservation) are also foreseeable. 
Are there physical systems where a set of generalized cross-ratios emerges, which violate one of the rules of ordinary cross-ratio (see  \cite{PMIHES_2007__106__139_0} for relaxed rules)? How about systems with a conformal boundary condition~\cite{cardy1989boundary,
affleck1991universal,Cogburn2023}?
What happens if a chiral edge passes through a domain wall between two gapped chiral phases (such as that shown in Fig.~6 of Ref.~\cite{Long2023})?  

Deriving the emergence of conformal geometry is the first step in understanding the emergence of conformal symmetry. For future work, we would like to make progress in understanding the full dynamical structure of conformal symmetry.
This includes the description of the evolution of the modular flow. The simplest context of such investigation is a ``purely chiral" edge, which has only left-moving modes but not right-moving modes (phenomena related to our companion paper~\cite{emergence-of-virasoro}). There we expect $|c_-| = \centralcharge$. In that context, it is possible that the edge is stationary even on coherent states obtained by applying good modular flows. 
We also anticipate understanding the primary states of the edge CFT. This is closely related to the consistency between the edge and the bulk. In what sense can we expect a bulk anyon to correspond to a primary field of the emergent edge CFT? Can we constrain the value of chiral central charge using bulk anyon data in our approach? We expect the other axiom {\bf A0} and some of the machinery of bulk entanglement bootstrap to play a role in these further studies, but this problem is largely open.

\vfill\eject
{\bf Acknowledgment.} 
This work was supported in part by
funds provided by the U.S. Department of Energy
(D.O.E.) under cooperative research agreement 
DE-SC0009919, 
by the University of California Laboratory Fees Research Program, grant LFR-20-653926,
and by the Simons Collaboration on Ultra-Quantum Matter, which is a grant from the Simons Foundation (652264, JM). 
IK acknowledges supports from NSF under award number PHY-2337931.
JM received travel reimbursement from the Simons Foundation;
the terms of this arrangement have been reviewed and approved by the University of California, San Diego, in accordance with its conflict of interest policies.

\appendix
\renewcommand{\theequation}{\Alph{section}.\arabic{equation}}

\section{Table of notations}
\label{appendix:notation}

\begin{center}
\begin{tabular}{|c|c|}
\hline 
Notations & Meanings \\ \hline \hline 
$\rho_A$   & Reduced density matrix on region $A$ \\ 
$S(\rho_A)$ & $-\Tr (\rho_A \ln \rho_A)$, usually abbreviated as $S_A$ \\  
$K_A$ & Modular Hamiltonian, $-\ln\rho_A$ \\  
$\Delta(A,B,C)$ & $S_{AB}+S_{BC}-S_A-S_C$ \\ 
$\hat{\Delta}(A,B,C)$ & $K_{AB} + K_{BC} - K_A-K_C$ \\  
$I(A:C|B)$ & $S_{AB}+S_{BC}-S_{ABC}-S_B$ \\ 
$\hat{I}(A:C|B)$ & $K_{AB} + K_{BC} - K_B-K_{ABC}$ \\ 
$J(A,B,C)_{\rho}$ & Modular commutator, defined as $i\Tr\big(\rho_{ABC}[K_{AB},K_{BC}]\big)$ \\ \hline 
$\dune$ & Conformal ruler, combination of regions $(A,A',B,C,C')$ \\ 
$\dune(a,b,c)$ & Conformal ruler with $A,B,C$ anchored at $a,b,c$ on the edge \\ 
$\Delta(\dune)$ & $\Delta(AA',B,CC')$ for $\dune = (A,A',B,C,C')$ \\ 
$\hat{\Delta}(\dune)$ &$\hat{\Delta}(AA',B,CC')$ for $\dune = (A,A',B,C,C')$ \\ 
$\Delta_{a,b,c},\hat{\Delta}_{a,b,c}$ & Short-hand notation for $\Delta(\dune(a,b,c)),\hat{\Delta}(\dune(a,b,c))$ \\ 
$I(\dune)$ & $I(A:C|B)$ for $\dune = (A,A',B,C,C')$ \\ 
$\hat{I}(\dune)$ & $\hat{I}(A:C|B)$ for $\dune = (A,A',B,C,C')$ \\ 
$I_{a,b,c},\hat{I}_{a,b,c}$ & Short-hand notation for $I(\dune(a,b,c)),\hat{I}(\dune(a,b,c))$ \\ \hline 
$\fc(\dune)$ & Total central charge (candidate), defined in Eq.~\eqref{eq:def-c-eta} \\ 
$\fc(a,b,c)$ & Short-hand notation for $\fc(\dune(a,b,c))$ \\ 
$\ctot$ & Total central charge of a CFT \\ 
$c_{-}$ & Chiral central charge, from the 2+1D bulk state or edge CFT \\ \hline 
$\eta(\dune)$ & Quantum cross-ratio (candidate), defined in Eq.~\eqref{eq:def-c-eta} \\ 
$\eta(a,b,c)$ & Short-hand notation for $\eta(\dune(a,b,c))$ \\ 
$\eta_J$ & Cross-ratio candidate from edge modular commutators Eq.~\eqref{eq:def-eta-J} \\ 
$\eta_K$ & Cross-ratio candidate from solving a vector equation Eq.~\eqref{eq:eta-K} \\ 
$\eta_g$ & Geometric cross-ratios, see Fig.~\ref{fig:CFT-exmp}\\ \hline 
$\KD(x)$ & $x\hat{\Delta}(\dune) + (1-x) \hat{I}(\dune)$ for a $\dune$ \\ 
$h(x)$ & Binary entropy function: $-x\ln(x) - (1-x)\ln(1-x),x\in[0,1]$ \\ \hline 
$\varphi$ & A map from the physical edge to a circle in [Prop.~\ref{prop:map-to-S1-exist}] \\ 
$\varphi_a$ & Image of an edge interval $a$ under $\varphi$ in [Prop.~\ref{prop:map-to-S1-exist}] \\ \hline 
\end{tabular}
\end{center}

\section{Consequences of bulk A1}
\label{appendix:invariant}

In this Appendix, we first discuss invariance of the action of certain linear combinations of modular Hamiltonians acting on the reference state, under subsystem deformation in the bulk. Then we use this result to derive several vector equations which are utilized in the main text. 

\subsection{Deformation invariance}\label{app:D=0=deform}
Before we introduce the deformation invariance, we first derive a vector version of bulk {\bf A1} from the usual bulk {\bf A1} condition, namely
\begin{equation}\label{eq:A1-vector}
    \Delta(B,C,D)_{\ket\Psi} = 0 \quad \Rightarrow \quad \hat{\Delta}(B,C,D)\ket\Psi = 0,
\end{equation}
where $BCD$ is the region for bulk {\bf A1} Fig.~\ref{fig:axioms-borrowed}. 
This result directly follows from the stationarity property of $\Delta(B,C,D)_{\ket\Psi}$. Consider a norm-preserving perturbation $|\Psi\rangle +\epsilon |\Psi'\rangle$ for an infinitesimal $\epsilon\in \mathbb{R}$. (The norm-preserving condition implies that $\langle \Psi| \Psi'\rangle = 0$.) 
The linear-order variation in $\Delta(B, C, D)_{\ket\Psi}$ is $\epsilon\langle \Psi'|\hat{\Delta}(B,C,D) |\Psi\rangle + h.c.,$ which must be zero. (Otherwise $\Delta(B,C,D)$ may become negative, which is impossible due to SSA~\cite{Lieb1973}.) Choosing the perturbation as $i\epsilon |\Psi'\rangle$ instead, we get $i(\epsilon\langle \Psi'|\hat{\Delta}(B,C,D) |\Psi\rangle - h.c.)=0$. Therefore, $\langle \Psi'|\hat{\Delta}(B,C,D) |\Psi\rangle=0$ for any $|\Psi'\rangle$ orthogonal to $|\Psi\rangle$. We thus conclude $\hat{\Delta}(B,C,D)|\Psi\rangle \propto |\Psi\rangle$. Taking the inner product with $|\Psi\rangle$, we conclude that the constant of proportionality is zero, proving Eq.~\eqref{eq:A1-vector}. 

With this vector version bulk {\bf A1} derived, one can derive the invariance of action of certain linear combinations of modular Hamiltonians under bulk deformations. (These are called as good modular flow generators in \cite{emergence-of-virasoro}.) 

\begin{figure}[h]
    \centering
    \includegraphics{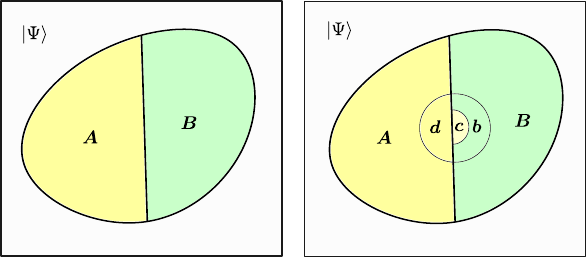}
    \caption{A bulk deformation: $A \to Ac, B \to B\setminus c$. Here we only require $c$ to be a small disk. $A$ and $B$ are not necessarily to be a disk.}
    \label{fig:deformation1}
\end{figure}
We shall consider the following simplest setup to present the proof for the deformation invariance. Applying this argument to other cases is straightforward. Consider a deformation along the boundary $A\to Ac, B \to B\setminus c$. We show that $(K_A-K_B)\ket\Psi$ is invariant under this deformation: 
\begin{equation}
    (K_A-K_B)\ket\Psi = (K_{Ac}- K_{B\setminus c})\ket{\Psi}.
\end{equation}
To show this, one can first partition the regions around $c$ as in Fig.~\ref{fig:deformation1}. Note that $bcd$ is exactly the region used in the formulation of bulk {\bf A1}, therefore 
\begin{equation}
    \Delta(b,c,d) = S_{bc}+S_{cd}-S_b-S_d = 0.
\end{equation}
This implies the following Markov condition: 
\begin{equation}
    I(A\setminus d:c|d) = 0,\quad I(B\setminus(cb):c|b)=0,
\end{equation}
which follows from SSA~\cite{Lieb1973}; see \cite{shi2020fusion}. With these Markov conditions, one can write 
\begin{equation}
    K_{Ac} |\psi\rangle = (K_A + K_{dc}-K_d) |\psi\rangle,\quad K_B |\psi\rangle = (K_{B\setminus c}+K_{cb}-K_b) |\psi\rangle.
\end{equation}
Using these expressions for $K_{Ac} |\psi\rangle$ and $K_B |\psi\rangle$, one can see 
\begin{equation}
    (K_{Ac}- K_{B\setminus c})\ket\Psi - (K_A-K_B)\ket\Psi = \hat{\Delta}(b,c,d)\ket\Psi = 0, 
\end{equation}
For the last equal sign, we utilize the vector version of bulk {\bf A1}, which, as we explained above, is a consequence of bulk {\bf A1}.  

One can apply this argument to other cases as well. For example, in the main text, we often aim to use the fact that $\hat{\Delta}(AA',B,CC')\ket{\Psi}$ and $\hat{I}(A:C|B)\ket{\Psi}$ computed from a $\dune = (A,A',B,C,C')$ is invariant under deformations of the subsystems in the bulk [Fig.~\ref{fig:dune-deform-appendix}].
\begin{figure}[h]
    \centering
    \includegraphics[width=0.78\textwidth]{fig-qc-submit/deformation2.pdf}
    \caption{Four examples of bulk deformation of a $\dune$: (i) $A\to A\setminus x, B \to Bx$; (ii) $A'\to A'w$; (iii) $A'\to A'y, C'\to C'\setminus y$; (iv) $C'\to C'z, B\to B\setminus z$. }
    \label{fig:dune-deform-appendix}
\end{figure}
The invariance is the reason why we can use the short-hand notation 
\begin{equation}\label{eq:def-Delta-I-abc}
    \hat{\Delta}_{a,b,c}\ket\Psi \equiv \hat{\Delta}(AA',B,CC')\ket{\Psi},\quad \hat{I}_{a,b,c}\ket\Psi \equiv \hat{I}(A:C|B)\ket\Psi
\end{equation}
for any conformal ruler $(A,A',B,C,C')$ anchored at three successive interval $(a,b,c)$ on the edge. The same conclusion holds true for the linear combinations of entanglement entropies $\Delta(AA',B,CC'),I(A:C|B)$ by taking an inner product with $\bra{\Psi}$.

\subsection{Vector equations}\label{app:useful-vec-2}
 
In this Appendix, we make use of the deformation invariance property to derive several useful vector equations. These equations are used in the proof of the decomposition relation of the quantum cross-ratio [Prop.~\ref{prop:decomp-nonchiral}]. Their scalar versions are used in the proof of Prop.~\ref{prop:const-c-implies}.

\begin{figure}[!h]
    \centering
    \includegraphics[width=0.4\columnwidth]{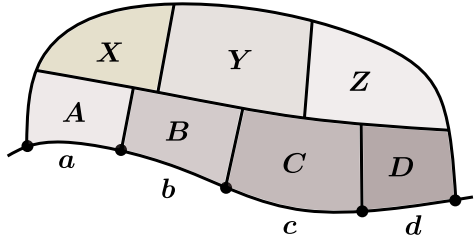}
    \caption{A combination of regions near the edge.}
    \label{fig:R-ABCD-app}
\end{figure}

Consider the region shown in Fig.~\ref{fig:R-ABCD-app}. We shall show 
\begin{equation}
\label{eq:delta_abcd}\begin{split} 
    &\hat{\Delta}_{a,b,cd}\ket\Psi = \hat{\Delta}_{a,b,c}\ket\Psi - \hat{I}_{b,c,d}\ket\Psi \\ 
    &\hat{\Delta}_{a,b,cd}\ket\Psi = \hat{\Delta}_{a,b,c}\ket\Psi - \hat{I}_{b,c,d}\ket\Psi \\ 
    &\hat{\Delta}_{a,b,cd}\ket\Psi = \hat{\Delta}_{ab,c,d}\ket\Psi + \hat{\Delta}_{a,b,cd}\ket\Psi, 
\end{split}
\end{equation}
where the notation $\hat{\Delta}_{a,b,c}$ and $\hat{I}_{b,c,d}$ follow the convention in Eq.~\eqref{eq:def-Delta-I-abc}. This follows straightforwardly from 
the bulk {\bf A1}. We will explicitly show the first equation in Eq.~\eqref{eq:delta_abcd}. The rest follows from a similar calculation. For the first equation, we can write $\hat{\Delta}_{a,b,cd}$ explicitly from the conformal ruler $\mathbb{D} = (A,X,B,CD,YZ)$: 
\begin{equation}
    \hat{\Delta}(AX,B,CDYZ) = \hat{\Delta}(AX,B,CY) - \hat{I}(B:ZD|CY).
\end{equation}
Note that this is an operator equation and follows from a rearrangement of the expansion of the terms on the left hand side. The $\hat{\Delta}(AX,B,CY)$ is already in the standard form of $\hat{\Delta}_{a,b,c}$ in $\dune(A,X,B,C,Y)$. With bulk {\bf A1}, one can deform the action of $\hat{I}(B:ZD|CY)$ into 
\begin{equation}
    \hat{I}(B:ZD|CY)\ket\Psi = \hat{I}(B:D|C)\ket\Psi,
\end{equation}
which is $\hat{I}_{b,c,d}$ for $\dune = (B,Y,C,D,Z)$.

\section{General properties of $\fc$ and $\eta$} 
\label{appendix:property-c-eta}

In this Appendix, we discuss some general properties about $\fc$ and $\eta$, treating them as functions of two non-negative numbers $\Delta,I\geq 0$. For $\Delta,I> 0$, they are defined as the (unique) solutions to the following equations
\begin{align}
    &e^{-6\Delta/\fc} + e^{-6I/\fc} = 1,  \label{eq:def-c-app-prop}\\ 
    &\frac{\Delta}{I} = \frac{\ln \eta}{\ln(1-\eta)}. \label{eq:def-eta-app-prop}
\end{align} 
We shall first discuss the properties of $\fc,\eta$ for $\Delta\neq 0$ and $I\neq 0$, deferring the other cases to Appendix~\ref{subsec:zero_case}. 

\subsection{$\Delta, I>0$ case}

Here we study several properties of $\fc$ and $\eta$, assuming $\Delta > 0$ and $I > 0$. We first study the solution $\fc$ to Eq.~\eqref{eq:def-c-app-prop}. Define the following function.
\begin{equation}
    f(x) = e^{-\Delta x}+e^{-I x}.
\end{equation}
Let $x_*$ be the intersection point of  $y=1$ and $y=f(x)$ [Fig.~\ref{fig:c-fx}]. This solution must be unique because $f(x)$ decreases monotonically [Fig.~\ref{fig:c-fx}]. Moreover, a solution exists because $f(0)=2$ and $f(\infty) = 0$. Thus $\fc = \frac{6}{x_*}>0$ is the unique solution. 
 
We remark that, while $I$ is fixed, $\centralcharge$ decreases as $\Delta$ increases [Fig.~\ref{fig:c-fx}]. A similar conclusion applies when $I$ decreases while $\Delta$ is being fixed.

\begin{figure}[!h]
    \centering
    \includegraphics{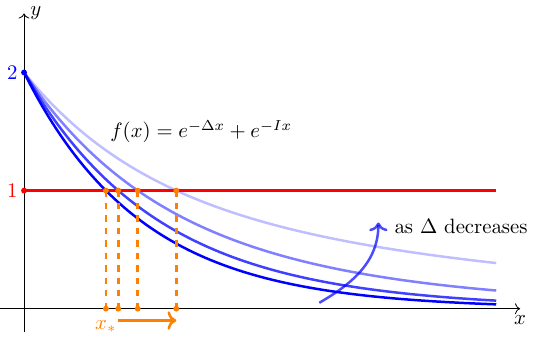}
    \caption{$\fc = 6/x_*$ from the intersection point of $y=f(x)$ and $y=1$. When $\Delta$ decreases, the curve $y=f(x)$ moves closer to the line $y=1$, as illustrated by the group of blue curves with decreasing opacity along the blue arrow. As a result, the solution $x_*$, which is shown by the orange dot in the $x$-axis, is increasing.}
    \label{fig:c-fx}
\end{figure}

We now discuss an alternative definition for $\fc$ and $\eta$. 
The definitions Eq.~\eqref{eq:def-c-app-prop} and Eq.~\eqref{eq:def-eta-app-prop} can be restated as the solution of the following equations: 
\begin{equation}
    \Delta = \frac{\fc}{6}\ln \frac{1}{\eta},\quad I = \frac{\fc}{6}\ln \frac{1}{1-\eta}. 
\end{equation}
This implies 
\begin{equation}\label{eq:intersec}
    (\Delta - I)\eta + I = \frac{\fc}{6}h(\eta),
\end{equation}
where $h(\eta)$ is the binary entropy function $h(\eta) = -\eta \ln \eta - (1-\eta)\ln(1-\eta).$ We also see that 
\begin{equation}\label{eq:slope}
    \Delta - I = \frac{\fc}{6} \left[\ln\eta - \ln(1-\eta)\right] = \frac{\fc}{6} h'(\eta),
\end{equation}
where $h'(\eta)$ is the derivative of $h(\eta)$ with respect to $\eta$. 

Eq.~\eqref{eq:intersec} and Eq.~\eqref{eq:slope} together imply a diagrammatic interpretation of the definition of $\fc,\eta$: Given a pair $\Delta, I > 0$ consider a line $y = (\Delta - I)x + I$ and a family of curves $y =\alpha h(x) $, parameterized by $\alpha$ [Fig.~\ref{fig:c-eta-diag}]. One can tune $\alpha$ such that the line is tangent to the curve. When this happens, denote the parameter as $\alpha_*$. Then $\fc$ is $6\alpha_*$, and $\eta$ is the $x$-coordinate of the tangent point.  

\begin{figure}[!h]
    \centering
    \includegraphics{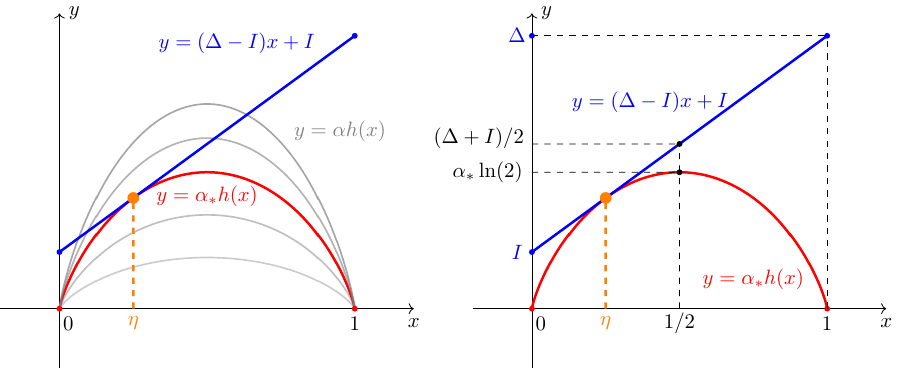}
    \caption{(Left) $\fc$ and $\eta$ from the tangent point of $y=(\Delta- I)x+I$ and $y=\alpha_* h(x)$. (Right) An upper bound of $\fc$ can be inferred from the plot.}
    \label{fig:c-eta-diag}
\end{figure}

As an application of this diagrammatic definition, one can see that $\fc$ is upper bounded [Fig.~\ref{fig:c-eta-diag}]
\begin{equation}\label{eq:c-upper-bound}
    \fc \leq \frac{\Delta + I}{2}.
\end{equation}

\subsection{$\Delta=0$ or $I=0$ case}
\label{subsec:zero_case}
We now focus on the cases where $\Delta$ or $I$ is zero. This can happen for gapped edges, such as in \cite{Beigi2010}. First, we discuss the definition of $\fc$. We formally define
\begin{equation}\label{eq:c-at-ID=0}
 \fc = 0, \quad \textrm{if } I =0, \textrm{ or } \Delta=0, 
\end{equation}
by taking the limit of the function $\fc(\Delta, I)$.

More explicitly, when $\Delta = 0, I\neq 0$, $\fc$ is defined as the following limit [Fig.~\ref{fig:c-fx}]: 
    \begin{equation}
        \fc(0,I) \equiv \lim_{\Delta \to 0}\fc(\Delta, I) = 0.         
    \end{equation}
As $\Delta \to 0$, the solution $x_*$ to the equation $f(x)=1$ goes to infinity and therefore $\fc \to 0$. When $I = 0,\Delta \ne 0$, $\fc$ is also zero under the limit 
    \begin{equation}
        \fc(\Delta,0) \equiv \lim_{I \to 0}\fc(\Delta, I) = 0,            \end{equation}
for similar reasons. (Notice the function $\fc(\Delta, I)$ is symmetric under the exchange of $\Delta$ with $I$ in the argument.) When $\Delta, I$ both are zero, the limit of $\fc$ is also zero: 
    \begin{equation}
        \fc(0,0)\equiv \lim_{(\Delta,I)\to (0,0)} \fc(\Delta, I) = 0.
    \end{equation}
This limit can be proved by the upper bound Eq.~\eqref{eq:c-upper-bound}. 

We now discuss the values of the cross-ratio $\eta$. Here is the summary.
\begin{equation}\label{eq:c-at-ID=0}
 \eta (\Delta, I) = \left\{
\begin{array}{ll}
1, & \text{if } \Delta=0, \, I \ne 0.  \\
0, & \text{if } \Delta \ne 0, \, I = 0.\\
\text{not uniquely determined, } & \text{if }\Delta = 0, \, I = 0.
\end{array}
\right. 
\end{equation}
We now discuss the limiting argument that leads to these equations.

When $\Delta=0, I\neq 0$, $\eta$ is defined as the limit: 
    \begin{equation}
        \eta(0,I)\equiv \lim_{\Delta \to 0}\eta(\Delta, I) = 1.
    \end{equation}
    The result of the limit can be seen from the definition of $\eta$: 
    \begin{equation}
        \lim_{\Delta\to 0}\frac{\Delta}{I} = \frac{\ln(\eta)}{\ln(1-\eta)} = 0\quad\Rightarrow\quad \eta = 1. 
    \end{equation}
This also implies $1-\eta = e^{-6I/\fc} = 0$, from which we can see $\fc=0$. This is consistent with our conclusion about $\fc$. Similarly, when $I=0, \Delta\neq 0$, $\eta$ is defined as the limit: 
    \begin{equation}
        \eta(\Delta,0)\equiv \lim_{I \to 0}\eta(\Delta, I) = 0,
    \end{equation}
which is also consistent with the result $\fc=0$ because $\eta = e^{-6\Delta/\fc}=0$. When $\Delta, I$  are both zero, the limit $\displaystyle\lim_{(\Delta,I)\to(0,0)}\eta(\Delta,I)$ does not exist, since its value depends on the path by which $(\Delta, I)$ reaches $(0,0)$ in the $\Delta,I$ plane. For example, when the path is along the line $(\Delta,0)$, $\eta=0$; while along the line $(0,I)$, $\eta = 1$. This happens if the edge is gapped, in which case the cross-ratios are undecided in the first place. 

Below we relate the case of $\Delta=0$ or $I=0$ to physical contexts. 

(i) When both $\Delta=I=0$, the edge is gapped \cite{Shi:2020rne}, and the cross-ratios are not uniquely decided. The fact that when $(\Delta,I)\to (0,0)$, $\eta(\Delta, I)$ has no unique limit is consistent with this physical scenario. In fact, for any three successive intervals $(a,b,c)$, one can let $\eta(a,b,c)$ take an arbitrary value in $[0,1]$, such that the consistency rules of cross-ratios are satisfied. This indicates that one can map the endpoints of the coarse-grained intervals to an \emph{arbitrary set} of points on the circle that preserve the orientation. This gives a class of $\varphi$ from the edge to a circle, analogous to what we discussed in Section~\ref{subsec:map-to-circle} and Fig.~\ref{fig:map-to-circle}. The difference here is that this choice of points is arbitrary as long as the orientation is preserved.\footnote{Because $\eta=0,1$ are allowed, we may have multiple points mapped to the same point of the circle.} This flexibility features a topological dependence rather than a geometrical dependence, and it is more flexible than the $PSL(2,R)$ in the case $\centralcharge>0$. 

(ii) One may wonder if there is a physical example for which $\Delta > 0$ but $I=0$.   One example is to consider the toric code on a square with alternating rough and smooth boundary conditions.  If the conformal ruler $\dune$ contains a point where the boundary condition changes, it gives a contribution of $\ln 2$ to $\Delta(\dune)$. This example further shows that $\centralcharge=0$ everywhere on the edge does not imply the uniqueness of locally indistinguishable states on the physical disk.

\section{Equivalence of the stationarity condition and the vector fixed-point equation}\label{sec:proof-of-equivalence}

In this Appendix, we prove that the stationarity condition and the vector fixed-point equation are equivalent [Theorem~\ref{thm:equiv}]. We begin by defining  our notation and terminology. We will consider a norm-preserving perturbation of the state.
We will denote objects that depend on the state $\ket{\psi}$ in the form of $\mathcal{A}(|\psi\rangle)$. The linear-order variation of this quantity is denoted as $\delta \mathcal{A}$.

We first prove the following lemma.
\begin{lemma}\label{lemma:dO}
Let $\calO(\ket{\psi})\equiv \bra{\psi}\hat{\calO}\ket{\psi}$ be the expectation value of a Hermitian operator $\hat{\calO}$. 
\begin{equation}
\hat{\calO}\ket{\psi}\propto\ket{\psi}
\end{equation}
if and only if $\delta \calO=0$ for any norm-preserving perturbation of $|\psi\rangle$.
\end{lemma}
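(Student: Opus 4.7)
The plan is to establish the two directions of the iff by unpacking what a linear-order variation of $\calO$ looks like and matching it against the structure of $\hat{\calO}\ket{\psi}$ decomposed into a component parallel to $\ket\psi$ and one orthogonal to it.

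First I would parametrize a norm-preserving perturbation. Writing $\ket\psi \to \ket\psi + \epsilon\ket{\psi'}$ with $\epsilon\in\mathbb{R}$ and $\langle\psi|\psi'\rangle = 0$, a direct expansion gives
\begin{equation}
\calO(\ket\psi + \epsilon\ket{\psi'}) = \bra\psi\hat{\calO}\ket\psi + \epsilon\bigl(\bra{\psi'}\hat{\calO}\ket\psi + \bra\psi\hat{\calO}\ket{\psi'}\bigr) + O(\epsilon^2),
\end{equation}
so the linear variation is $\delta\calO = 2\epsilon\,\mathrm{Re}\,\bra{\psi'}\hat{\calO}\ket\psi$. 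Since the perturbation $i\ket{\psi'}$ is also admissible (still orthogonal to $\ket\psi$, hence norm-preserving to linear order), applying the same formula to that perturbation and comparing gives $\delta\calO \supset -2\epsilon\,\mathrm{Im}\,\bra{\psi'}\hat{\calO}\ket\psi$. So the statement ``$\delta\calO = 0$ for every norm-preserving perturbation'' is equivalent to $\bra{\psi'}\hat{\calO}\ket\psi = 0$ for every $\ket{\psi'}\perp\ket\psi$.

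For the $(\Leftarrow)$ direction, if $\hat{\calO}\ket\psi = \lambda\ket\psi$ for some scalar $\lambda$, then for any $\ket{\psi'}\perp\ket\psi$ we have $\bra{\psi'}\hat{\calO}\ket\psi = \lambda\langle\psi'|\psi\rangle = 0$, so $\delta\calO = 0$ by the formula above. For the $(\Rightarrow)$ direction, decompose $\hat{\calO}\ket\psi = \alpha\ket\psi + \ket{\phi}$ with $\ket\phi\perp\ket\psi$ (where $\alpha = \bra\psi\hat{\calO}\ket\psi$). Choosing the perturbation direction $\ket{\psi'} = \ket\phi/\|\phi\|$ (which is orthogonal to $\ket\psi$ by construction), the vanishing of $\bra{\psi'}\hat{\calO}\ket\psi$ forces $\langle\phi|\phi\rangle/\|\phi\| = 0$, hence $\ket\phi = 0$ and $\hat{\calO}\ket\psi = \alpha\ket\psi \propto \ket\psi$.

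I do not anticipate any real obstacle; this is essentially the standard variational characterization of eigenvectors of a Hermitian operator, adapted to the norm-preserving-perturbation setup used throughout the paper. The only small subtlety worth flagging explicitly in the write-up is that the norm-preservation constraint only enforces $\mathrm{Re}\langle\psi|\psi'\rangle = 0$ to linear order, but the set of such $\ket{\psi'}$ is closed under multiplication by $i$ whenever $\langle\psi|\psi'\rangle = 0$, which is why both the real and imaginary parts of $\bra{\psi'}\hat{\calO}\ket\psi$ are independently constrained.
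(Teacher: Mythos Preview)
Your proof is correct and takes essentially the same approach as the paper: both compute $\delta\calO = \epsilon\bra{\psi'}\hat{\calO}\ket\psi + \text{h.c.}$, use the $i\ket{\psi'}$ perturbation to isolate real and imaginary parts, and conclude that stationarity is equivalent to $\bra{\psi'}\hat{\calO}\ket\psi = 0$ for all $\ket{\psi'}\perp\ket\psi$. One cosmetic point: your $(\Leftarrow)$ and $(\Rightarrow)$ labels are swapped relative to the ordering in the statement (and the paper's convention), but the mathematical content is identical.
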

\begin{proof}
We first prove the $\Leftarrow$ direction. Without loss of generality, consider a perturbation of the state $|\psi\rangle \to |\psi\rangle + \epsilon |\psi'\rangle$ for an infinitesimal $\epsilon \in \mathbb{R}$. The norm-preserving condition implies that $\langle \psi| \psi'\rangle=0$. Note that $\delta \mathcal{O} = \epsilon \langle \psi'| \hat{\mathcal{O}} |\psi\rangle + h.c.$ We can also consider a perturbation of the form of $|\psi\rangle \to |\psi\rangle + i\epsilon |\psi'\rangle$.  In this case, $\delta \mathcal{O} = i\epsilon (\langle \psi'| \hat{\mathcal{O}} |\psi\rangle - h.c.)$. The fact that both are zero implies that 
\begin{equation}
    \langle \psi'| \hat{\mathcal{O}}|\psi\rangle  = 0
\end{equation}
for any $\ket{\psi'}$ orthogonal to $\ket{\psi}$. This immediately proves the claim. 

Next, we prove the $\Rightarrow$ direction. This follows straightforwardly from the following identity:
\begin{equation}
\delta \mathcal{O} = \epsilon \langle \psi'| \hat{\mathcal{O}} |\psi\rangle + h.c.
\end{equation}
Due to the norm-preserving condition, $|\psi'\rangle$ is orthogonal to $|\psi\rangle$. Therefore, $\delta \mathcal{O}=0.$
\end{proof}

The second lemma is about the linear order variation of modular Hamiltonian. 
\begin{lemma}\label{lemma:dK}
    Let $\ket{\psi}$ be a normalized state and $K_A$ be its modular Hamiltonian of a region $A$. For any norm-preserving perturbation of $\ket{\psi}$, 
    \begin{equation}
        \bra{\psi}\delta K_A \ket{\psi} = 0. 
    \end{equation} 
\end{lemma}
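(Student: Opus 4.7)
The plan is to derive the claim from two different ways of computing $\delta S_A$, taking advantage of the identity $S_A=\bra{\psi}K_A\ket{\psi}$. This turns the statement into a matter of book-keeping between the variation of the state, the variation of the reduced density matrix, and the variation of the modular Hamiltonian.

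First, I would write $\rho_A=\mathrm{Tr}_{\bar A}(\ket{\psi}\bra{\psi})$ and consider a norm-preserving perturbation $\ket{\psi}\to\ket{\psi}+\epsilon\ket{\psi'}$ with $\braket{\psi|\psi'}=0$. This gives $\delta\rho_A=\mathrm{Tr}_{\bar A}(\ket{\psi'}\bra{\psi}+\ket{\psi}\bra{\psi'})$, and in particular $\mathrm{Tr}(\delta\rho_A)=\braket{\psi|\psi'}+\braket{\psi'|\psi}=0$. Using the standard first-order expansion of $-\ln\rho_A$ under trace, together with $\mathrm{Tr}(\delta\rho_A)=0$, yields
\begin{equation}
\delta S_A \;=\; \mathrm{Tr}(K_A\,\delta\rho_A) \;=\; \bra{\psi}K_A\ket{\psi'}+\bra{\psi'}K_A\ket{\psi}.
\end{equation}

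Second, I would compute $\delta S_A$ directly from $S_A=\bra{\psi}K_A\ket{\psi}$ via the product rule, obtaining
\begin{equation}
\delta S_A \;=\; \bra{\psi}K_A\ket{\psi'}+\bra{\psi'}K_A\ket{\psi}+\bra{\psi}\delta K_A\ket{\psi}.
\end{equation}
Equating the two expressions immediately gives $\bra{\psi}\delta K_A\ket{\psi}=0$, which is the claim. Since the derivation used no assumption on $\ket{\psi'}$ beyond norm-preservation and orthogonality to $\ket{\psi}$, the conclusion holds for arbitrary norm-preserving perturbations.

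The only subtle step is the first-order variation of the logarithm; I would justify it either by citing the well-known identity $\delta\,\mathrm{Tr}\,f(\rho_A)=\mathrm{Tr}(f'(\rho_A)\,\delta\rho_A)$ for a trace function, or by observing that the correction $\tfrac12\mathrm{Tr}([\delta\rho_A, \cdot ]\cdots)$ coming from noncommutativity of $\rho_A$ and $\delta\rho_A$ drops out under the trace. This is the only place where care is required; everything else is pure algebra. I do not expect a genuine obstacle.
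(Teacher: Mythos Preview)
Your proof is correct and is essentially the same as the paper's. The paper simply cites the identity $\Tr\big(\rho\,\delta(\ln\rho)\big)=0$ (referring to Appendix~A of \cite{Lin2023}); your comparison of the two expressions for $\delta S_A$ is just an unpacking of that identity, since $\delta S_A=\Tr(K_A\,\delta\rho_A)$ is equivalent to $\Tr(\rho_A\,\delta(\ln\rho_A))=0$, and then $\bra{\psi}\delta K_A\ket{\psi}=\Tr(\rho_A\,\delta K_A)=-\Tr(\rho_A\,\delta(\ln\rho_A))=0$.
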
 
\noindent
The proof is given in Appendix~A of \cite{Lin2023}, which follows from the general variation property of density matrices: $\Tr\big(\rho \,\delta (\ln \rho)\big)=0$. 
 
\noindent {\bf Proof of Theorem~\ref{thm:equiv}:}

Now, we are in a position to prove Theorem~\ref{thm:equiv}. Consider a conformal ruler $\dune$ and let $\Delta = \Delta(AA',B,CC')_{\ket\Psi},I = I(A:C|B)_{\ket\Psi}$ defined in Eq.~\eqref{eq:Delta-I}, and $\hat{\Delta},\hat{I}$ be their operator version defined in Eq.~\eqref{eq:op-I-Delta}. We remind the reader the definition of $\centralcharge$, $\eta$ which are the solutions to the following equations.
\begin{align} 
    &e^{-6\Delta/\fc} + e^{-6I/\fc} = 1,  \label{eq:def-c-app-thm}\\ 
    &\frac{\Delta}{I} = \frac{\ln \eta}{\ln(1-\eta)}.  \label{eq:def-eta-app-thm}
\end{align} 
Let also recall that $\KD(\eta) = \eta\hat{\Delta} + (1-\eta)\hat{I}$.

We first consider the physically most interesting case of $\fc> 0$, which implies $\eta\in (0,1)$, we shall prove $\delta \centralcharge(\mathbb{D})_{\ket{\Psi}} = 0 \Leftrightarrow \KD(\eta)\ket\Psi\propto\ket{\Psi}$. Under any norm-preserving perturbation $\ket{\Psi} \to\ket{\Psi}+\epsilon \ket{\Psi'}$, we can obtain
\begin{equation}
    \delta\fc(\dune)_{\ket{\Psi}} = \frac{6}{h(\eta)}(\eta \delta \Delta + (1-\eta)\delta I),
\end{equation}
where 
\begin{equation}\begin{split}
    \delta \Delta &= \epsilon(\bra{\Psi'}\hat{\Delta}\ket{\Psi} + \bra{\Psi}\hat{\Delta}\ket{\Psi'}) + \bra{\Psi}\delta\hat{\Delta}\ket{\Psi}\\  
    &= \epsilon(\bra{\Psi'}\hat{\Delta}\ket{\Psi} + \bra{\Psi}\hat{\Delta}\ket{\Psi'}).
\end{split}
\end{equation}
To obtain the second line, we note that the last term $\bra{\Psi}\delta\hat{\Delta}\ket{\Psi}$ in the first line vanishes  [Lemma~\ref{lemma:dK}]. We also note that $\delta I$ can be computed similarly. Therefore, 
\begin{equation}
    \delta \fc \propto \epsilon (\bra{\Psi'}\KD(\eta)\ket{\Psi} + \bra{\Psi}\KD(\eta)\ket{\Psi'}). \label{eq:delta_equiv_condition}
\end{equation}
By Lemma~\ref{lemma:dO}, the right hand side of Eq.~\eqref{eq:delta_equiv_condition} being zero (i.e. $\delta \fc =0$) is equivalent to the condition $\KD(\eta)|\Psi\rangle \propto |\Psi\rangle$, proving the claim. 

Now we discuss the $\centralcharge=0$ case.  
This can be divided into three cases: (i) $\Delta = 0, I \neq 0$, (ii) $\Delta \neq 0, I = 0$, and (iii) $\Delta =I =0$ [Appendix~\ref{appendix:property-c-eta}]. First of all, since $\centralcharge\geq 0$, $\centralcharge = 0$ implies the first order derivative of $\centralcharge$ must vanish, therefore $\delta\fc = 0$. Moreover, the vector fixed-point equation also holds:
For case (i), one can obtain $\hat{\Delta}\ket{\Psi} = 0$ [Appendix~\ref{app:D=0=deform}]. Moreover, from the definition of $\eta$ Eq.~\eqref{eq:def-eta-app-thm}, one can see that $\eta = 1$. Therefore, 
    \begin{equation}
        \KD(\eta)\ket{\Psi} = \hat{\Delta}\ket\Psi = 0.
    \end{equation}
For case (ii), it follows that $\hat{I} = 0$~\cite{Petz2003}. Moreover, $\eta = 0$ from the definition of $\eta$. Therefore 
    \begin{equation}
        \KD(\eta)\ket{\Psi} = \hat{I}\ket\Psi = 0.
    \end{equation}
For case (iii), although $\eta$ from Eq.~\eqref{eq:c-at-ID=0} is not uniquely determined, both $\hat{\Delta}\ket{\Psi}$ and $\hat{I}\ket\Psi$ are zero. Therefore, we have a family of vector equations
\begin{equation}
    \KD(\eta)\ket\Psi = (\eta\hat\Delta + (1-\eta)\hat{I})\ket\Psi = 0, \quad \forall \eta.
\end{equation}

\section{Mapping the physical edge to a circle}\label{app:conformal-geometry-construction} 
In this Appendix, we provide a constructive proof of Prop.~\ref{prop:map-to-S1-exist}.  

We first specify the setup and the notation: Consider a state $\ket\Psi$ on a disk with a set of coarse-grained intervals along the edge. Let $x_1,x_2,\cdots,x_n$ be the endpoints of these intervals, and they are labeled with a certain order. (We choose a convention where $x_1\to x_2\to\cdots$ is in the counterclockwise direction.) We will denote the map from the edge to the circle as $\varphi$, and the images of the endpoints $x_i$ are denoted as $\varphi(x_i)$. For the coarse-grained intervals on the edge, we will use the lower-case Roman letters to denote them, e.g., $a= (x_i, x_j)$\footnote{The interval starts at $x_i$, goes counterclockwise and ends at $x_j$.}. We will sometimes use a short-hand notation of $\varphi_a$ to denote the interval $(\varphi(x_i), \varphi(x_j))$ on the circle. 

In this setup, we prove the Prop.~\ref{prop:map-to-S1-exist} by explicitly constructing the map $\varphi$. 

\begin{proof}
The map can be constructed in the following steps: 

\begin{itemize}
    \item \emph{Step 1}: On a round circle, choose three points arbitrarily, assign them as $\varphi(x_1)$, $\varphi(x_2), \varphi(x_3)$, such that $\varphi(x_1)\to \varphi(x_2)\to\varphi(x_3)$ follows counterclockwise direction. 
    \item \emph{Step 2}: Then, for $i=3$ to $i=n-1$, one can find $\varphi(x_{i+1})$ on the circle one by one, by requiring that the quantum cross-ratio matches the geometric cross-ratio:  
    \begin{equation}\label{eq:qc-gc}
        \eta(i-2,i-1,i) = \eta_g(\varphi_{i-2},\varphi_{i-1},\varphi_{i}).
    \end{equation}
\end{itemize}
Moreover, the map has the properties
\begin{enumerate}
    \item $\varphi(x_1),\varphi(x_2),\varphi(x_3),\cdots,\varphi(x_n)$ on the circle follows the same order as $x_1,x_2,x_3$, $\cdots,x_n$ on the physical edge.
    \item For any three successive coarse-grained intervals $(a,b,c)$, the quantum cross-ratio $\eta$ is equal to the geometric cross-ratio $\eta_g$ of three successive $(\varphi_a,\varphi_b,\varphi_c)$, 
    \begin{equation}
        \eta(a,b,c) = \eta_g(\varphi_a,\varphi_b,\varphi_c).
    \end{equation}
\end{enumerate}  

We now prove both steps are doable and indeed $\varphi$ satisfies these properties: 
\begin{enumerate}
    \item Firstly, step 1 is obviously doable. 
    \item Then we show step 2 for $i=3,\cdots, n-1$ is doable, which is to show one can indeed find $\varphi(x_{i+1})$ on the counterclockwise direction to $\varphi(x_{i})$, such that the length of $(\varphi(x_1),\varphi(x_{i+1}))$ doesn't exceed the circumference of the circle\footnote{We will abbreviate the sentence to a phrase ``without exceeding the circle''.}. The proof is done by induction:  
    (i) $\varphi(x_4)$ can be found on the circle without exceeding the circle because $\eta_g(\varphi_1,\varphi_2,\varphi_3) = \eta(1,2,3)\in(0,1)$. 
    (ii) Now we show that suppose the $\varphi(x_{i})$ can be put on the circle without exceeding the circle, then $\varphi(x_{i+1})$ can be put on the circle without exceeding the circle, $i=4,\cdots,n-1$. This is because 
    \begin{equation}\label{eq:eta-i}
        \eta_g(\varphi_1,\varphi_2\cup\cdots \cup \varphi_{i-1},\varphi_i \cup\varphi_{i+1})  = \eta(1,2\cup\cdots\cup i-1,i\cup i+1) \in (0,1).
    \end{equation}
    The equation above is true because both $\eta,\eta_g$ follow the same decomposition rules~(Prop.~\ref{prop:rule2}), then $\eta_g(\varphi_1,\varphi_2\cup\cdots \cup \varphi_{i-1},\varphi_i \cup\varphi_{i+1})$ depends on $\eta_g(\varphi_j,\varphi_{j+1},\varphi_{j+2})$ in the same way as $\eta(1,2\cup\cdots\cup i-1,i\cup i+1) \in (0,1)$ depends on $\eta(j,{j+1},{j+2})$, $j=1,\cdots,i-1$. Since $\eta_g(\varphi_j,\varphi_{j+1},\varphi_{j+2})=\eta(j,{j+1},{j+2})$ by construction, therefore Eq.~\eqref{eq:eta-i} holds. Therefore, we finished the induction proof and show step 2 works for $i = 3,\cdots, n-1$. By now, we have shown that the construction steps are doable and have proved Property 1 by construction. 
    \item Now we prove the matching of the cross-ratios. As a starter, for $[1,1,1]$-type cross-ratio, one only need to check 
    \begin{equation}\begin{split}
        &\eta(n-2,n-1,n)=\eta_g(\varphi_{n-2},\varphi_{n-1},\varphi_{n}) \\ 
        &\eta(n-1,n,1)=\eta_g(\varphi_{n-1},\varphi_{n},\varphi_{1}) \\ &\eta(n,1,2)=\eta_g(\varphi_{n},\varphi_{1},\varphi_{2}),
    \end{split}
    \end{equation}
    since the matching of other $[1,1,1]$-type cross-ratios are guaranteed by construction. To show these equations are satisfied, one can simply use the complement relation [Prop.~\ref{prop:rule1}]. We take $\eta(n-2,n-1,n)=\eta_g(\varphi_{n-2},\varphi_{n-1},\varphi_{n})$ for an example, and the rest follows from the same idea. With the complement relation, one can first write $\eta(n-2,n-1,n) = 1-\eta(1\cup\cdots\cup n-3,n-2,n-1)$ and $\eta_g(\varphi_{n-2},\varphi_{n-1},\varphi_{n}) = 1-\eta_g(\varphi_1\cup\cdots\cup\varphi_{n-3},\varphi_{n-2},\varphi_{n-1})$. Then $\eta(1\cup\cdots\cup n-3,n-2,n-1)=\eta_g(\varphi_1\cup\cdots\cup\varphi_{n-3},\varphi_{n-2},\varphi_{n-1})$ follows from the same decomposition argument as we used for showing Eq.~\eqref{eq:eta-i}. 
    By now we've shown all the $[1,1,1]$-type cross-ratios are matched: 
    \begin{equation}
        \eta(i-2,i-1,i) = \eta_g(\varphi_{i-2},\varphi_{i-1},\varphi_i),\quad \forall i=1,\cdots,n\footnote{The labels should be understood as values modulo $n$.}.
    \end{equation}
    As we emphasized before, once these $[1,1,1]$-type cross-ratios are matched, then all the others are automatically matched, as they can be decomposed into $[1,1,1]$-type cross-ratios by the decomposition relations. For example, consider a $[2,1,1]$-type quantum cross-ratio $\eta(ab,c,d)$: 
    \begin{equation}\begin{split}
        \eta(ab,c,d) &= \frac{\eta(b,c,d)}{1-\eta(a,b,c)} \\ 
                    & = \frac{\eta_g(\varphi_b,\varphi_c,\varphi_d)}{1-\eta_g(\varphi_a,\varphi_b,\varphi_c)} \\ 
                    & = \eta_g(\varphi_{ab},\varphi_c,\varphi_d). 
    \end{split}
    \end{equation}
    Now we have finished the proof of property 2. 
\end{enumerate}
\end{proof}

\begin{remark}
    Once a map $\varphi$ is obtained, it is easy to obtain a class of such maps related by $PSL(2,\mathbb{R})$ transformations of the circle. This is reflected in the first step of the construction: one can arbitrarily choose the first three points. 
    The three real degrees of freedom specifying their three endpoints are precisely swept out by the $PSL(2,\mathbb{R})$ orbits (see \eg~\cite{needham2023visual} for a nice discussion).
\end{remark}

\section{Decomposition relation of quantum cross-ratios for non-chiral states: proof}
\label{app:proof-non-chiral}

Here we provide an proof of Prop.~\ref{prop:decomp-nonchiral} which is more explicit than the diagrammatical approach in the main text. The main idea is to decompose $K_{ABCD}$ into linear combinations of modular Hamiltonians supported only on $\dune(a,b,c)$ and $\dune(b,c,d)$, when they are acting on $\ket{\Psi}$. There are three different decompositions and the consistency of these decompositions lead us to our results such as  Eq.~\eqref{eq:cross-ratios-r1-nc}, Eq.~\eqref{eq:cross-ratios-r2-nc} and Eq.~\eqref{eq:cross-ratios-r3-nc}.

For simplicity of the notation, we index the intervals as 
   \begin{equation}
    (a,b,c)\to 1,\quad (b,c,d)\to 2,\quad (a,b,cd)\to 3,\quad (ab,c,d)\to 4,\quad (a,bc,d)\to 5. 
   \end{equation}

First, note that $K_{ABCD}$ appears in $\hat{I}_{3}=\hat{I}(A:CD|B),\hat{I}_{4}=\hat{I}(AB:D|C),\hat{I}_{5}=\hat{I}(A:D|BC)$. Therefore, the vector fixed-point equation on $\dune(a,b,cd),\dune(ab,c,d),\dune(a,bc,d)$ yields the following:
    \begin{align}
        K_{ABCD}\ket{\Psi} &= \left[ \frac{\eta_3}{1-\eta_3} \hat{\Delta}_3 + K_{AB}+K_{BCD} - K_B - \frac{\alpha_3}{1-\eta_3} \right] \ket{\Psi} \\ 
        & = \left[ \frac{\eta_4}{1-\eta_4} \hat{\Delta}_4 + K_{ABC}+K_{CD} - K_C - \frac{\alpha_4}{1-\eta_4}\right]\ket{\Psi}\\
       &=\left[\frac{\eta_5}{1-\eta_5}\hat{\Delta}_5 + K_{ABC}+K_{BCD} - K_{BC} - \frac{\alpha_5}{1-\eta_5}\right]\ket{\Psi}, 
    \end{align} 
    where the first, second and third line follow from $[\eta_i\hat{\Delta}_i + (1-\eta_i)\hat{I}_i]\ket{\Psi}=\alpha_i\ket{\Psi},i=3,4,5$ respectively. 
    
Second, using bulk {\bf A1}, when the following operators act on $|\Psi\rangle$, they can be decomposed as [Appendix~\ref{app:useful-vec-2}] 
    \begin{align}
        \hat{\Delta}_3 &=  \hat{\Delta}_1 -  \hat{I}_2 \\ 
        \hat{\Delta}_4 &=  \hat{\Delta}_2 -  \hat{I}_1 \\
        \hat{\Delta}_5 &=  \hat{\Delta}_1 +  \hat{\Delta}_2 -  \hat{I}_1 -  \hat{I}_2.
    \end{align}
 We can then use the relation $K_{ABC} = - \hat{I}_1 + K_{AB}+K_{BC} - K_B$ and $K_{BCD} = - \hat{I}_2 + K_{CD}+K_{BC} - K_C$, which yields
\begin{align}
        &\left(K_{ABCD} - K_{AB}-K_{BC}-K_{CD}+K_{B}+K_{C}\right)\ket{\Psi} \nonumber \\
         &= \left[ \frac{\eta_3}{1-\eta_3} \hat{\Delta}_1 - \frac{1}{1-\eta_3} \hat{I}_2 -\frac{\alpha_3}{1-\eta_3}\right]\ket{\Psi} \\ 
         &= \left[\frac{\eta_4}{1-\eta_4} \hat{\Delta}_2 - \frac{1}{1-\eta_4} \hat{I}_1 -\frac{\alpha_4}{1-\eta_4}\right] \ket{\Psi}\\
        &=\left[ \frac{\eta_5}{1-\eta_5}( \hat{\Delta}_1 + \hat{\Delta}_2) - \frac{1}{1-\eta_5}( \hat{I}_1+  \hat{I}_2) -\frac{\alpha_5}{1-\eta_5}\right]\ket{\Psi}.
\end{align}
   
Third, we can use the vector fixed-point equations on $\dune(a,b,c)$ and $\dune(b,c,d)$ to replace $\hat{I}_{1}$ and $\hat{I}_2$ by the following: 
    \begin{equation}
        \hat{I}_1 \ket{\Psi} =\frac{-\eta_1\hat{\Delta}_1+\alpha_1}{1-\eta_1}\ket{\Psi},\quad  \hat{I}_2 \ket{\Psi} =\frac{-\eta_2\hat{\Delta}_2+\alpha_2}{1-\eta_2}\ket{\Psi}.
    \end{equation}
Plugging these in, we get 
\begin{equation}\label{eq:three-decom}
    \begin{split}
    &\left(K_{ABCD} - K_{AB}-K_{BC}-K_{CD}+K_{B}+K_{C}\right)\ket{\Psi} \\
    &\begin{aligned} 
    =& &\frac{\eta_3}{1-\eta_3}\hat{\Delta}_1\ket{\Psi}& &+ \frac{1}{1-\eta_3} \frac{\eta_2}{1-\eta_2}\hat{\Delta}_2\ket{\Psi}& &  - \beta_3\ket{\Psi}&  \\ 
     =& &\frac{1}{1-\eta_4} \frac{\eta_1}{1-\eta_1}\hat{\Delta}_1\ket{\Psi}& &+\frac{\eta_4}{1-\eta_4}\hat{\Delta}_2\ket{\Psi}& &   - \beta_4\ket{\Psi}& \\ 
    =& &\frac{1}{1-\eta_5}\left( \eta_5 + \frac{\eta_1}{1-\eta_1} \right)\hat{\Delta}_1\ket{\Psi}& & +\frac{1}{1-\eta_5}\left( \eta_5 + \frac{\eta_2}{1-\eta_2}\right)\hat{ \Delta}_2\ket{\Psi}& &  - \beta_5\ket{\Psi}&,
    \end{aligned}
    \end{split}
\end{equation}
    where 
    \begin{align}
        &\beta_3 = \frac{1}{1-\eta_3}\left( \alpha_3 + \frac{\alpha_2}{1-\eta_2}\right) \\ 
        &\beta_4 = \frac{1}{1-\eta_4}\left( \alpha_4 + \frac{\alpha_1}{1-\eta_1}\right) \\ 
        &\beta_5 = \frac{1}{1-\eta_5}\left(\alpha_5 + \frac{\alpha_1}{1-\eta_1}+\frac{\alpha_2}{1-\eta_2}\right).
    \end{align}

Lastly, we note that $\hat{\Delta}_1\ket{\Psi},\hat{\Delta}_2\ket{\Psi},\ket{\Psi}$ are linearly independent [Assumption~\ref{ass:genericity}]. Therefore, if two vectors expressed as linear combination of these vectors are the same, the coefficients in front of these vectors ought to be the same. This implies
    \begin{align}
        &\frac{\eta_3}{1-\eta_3} = \frac{1}{1-\eta_4} \frac{\eta_1}{1-\eta_1} = \frac{1}{1-\eta_5}\left( \eta_5 + \frac{\eta_1}{1-\eta_1} \right) \\ 
        &\frac{1}{1-\eta_3} \frac{\eta_2}{1-\eta_2} = \frac{\eta_4}{1-\eta_4} = \frac{1}{1-\eta_5}\left( \eta_5 + \frac{\eta_2}{1-\eta_2}\right)
    \end{align}
    We can then solve for $\eta_3,\eta_4,\eta_5$:  
    \begin{equation}
        \eta_3 = \frac{\eta_1}{1-\eta_2},\quad \eta_4 = \frac{\eta_2}{1-\eta_1},\quad \eta_5 = \frac{\eta_1\eta_2}{(1-\eta_1)(1-\eta_2)}. 
    \end{equation}
    
We remark that the proof also implies $\beta_3 = \beta_4 = \beta_5$, which is consistent with the fact that the proportionality factor in $\KD(\eta_i)\ket{\Psi} = \alpha_i \ket{\Psi}$ is of the form $\alpha_i = \frac{\fc}{6}h(\eta_i)$, where $h(\eta_i)$ being the binary entropy function. This fact might be potentially useful in studies in which some of our assumptions are modified.

\section{Numerical example: $p+i p$ superconductor with irregular edges}\label{appendix:p+ip} 

In this Appendix, we study the $p+i p$ superconductor (SC) on a rectangle. The translation symmetry along the edge is explicitly broken, and hence, the edge is irregular. We numerically verify the assumptions and their logical consequences, namely the emergence of the notion of cross-ratio. 

\subsection{Setup} 
Consider the Hamiltonian for a $p+i p$ SC on a square lattice:
\begin{equation}\label{eq:pipH}
    H = \sum_{\vec{r},\vec{a}} \left[ -ta_{\vec{r}}^{\dagger}a_{\vec{r}+\vec{a}} + \Delta a_{\vec{r}}^{\dagger}a_{\vec{r}+\vec{a}}^{\dagger} e^{i \vec{a}\cdot \vec{A}}  + h.c. \right] - \sum_{\vec{r}}(\mu - 4t)a_{\vec{r}}^{\dagger}a_{\vec{r}}, 
\end{equation}
where $\vec{r} = (x,y)$ represents a site on the lattice, and $\vec{a} \in \{ (1,0), (0,1) \} $ is a generator of the square lattice.  
We set $\vec{A} = (0,\pi/2)$, so that $e^{i \vec{a}\cdot \vec{A}}$ is either $1$ or $i$. 
We choose $t = 1.0,\Delta=1.0,$ and $\mu = 1.3$, so that the groundstate of $H$ has a small correlation length (which is approximately $1.2$ lattice spacings). We employ the open boundary condition for both $x,y$ direction, so that the system is on a rectangle and the translation symmetry along the edge is explicitly broken. 

With this choice of parameters, the groundstate $\ket\Psi$ of $H$  satisfies our bulk assumptions [Section~\ref{sec:bulk_assumptions}]. For one thing, it has energy gap in the bulk and the entanglement entropy of a region within the interior of the bulk satisfy an area law. Therefore, bulk {\bf A1} is expected to be satisfied. Moreover, it has chiral central charge $c_{-}=1/2$ in the bulk and a gapless chiral edge, which is robust from being gapped out by local perturbations. Therefore, the bulk modular commutator is expected to give $\frac{\pi c_{-}}{3}$ with $c_{-}=1/2$. Indeed, we verified that these two bulk assumptions are satisfied with error decreasing with the subsystem size in our companion paper~\cite{emergence-of-virasoro}.\footnote{More precisely, the tests in \cite{emergence-of-virasoro} is used the cylindrical boundary condition. However, as long as the regions studied are deep inside the bulk, the results ought to be insensitive to the boundary condition.}

At the edge, provided that it has the translational symmetry, it is natural to expect the edge to be described by a chiral CFT. However, whether the chiral CFT description remains valid in the absence of translational symmetry is less clear. We will refer to such setups as \emph{irregular edges}. In this Appendix, we provide a strong numerical evidence that the global conformal symmetry remains intact even in such cases. 

Recall that one of the main premises of our work is the stationarity condition [Assumption~\ref{ass:stationarity}], which is equivalent to the vector fixed-point equation. From the vector fixed-point equation, we were able to show that the fundamental relations that define the cross-ratios emerge [Prop.~\ref{prop:rule1} and~\ref{prop:rule2}]. Thus, we first verify that the vector fixed-point equation in Appendix~\ref{subsec:verification_edge_assumption}. In Appendix~\ref{subsec:verification_cross_ratio}, we numerically verify that the cross-ratios for irregular edges indeed satisfy the relations, as we predicted. 

\subsection{Verification of the vector fixed-point equation}
\label{subsec:verification_edge_assumption}
We first verify the edge assumption via the vector fixed-point equation 
\begin{equation}\label{eq:vector-num}
    \KD(\eta)\ket\Psi \propto \ket\Psi,
\end{equation}
on the two conformal rulers shown in Fig.~\ref{fig:vector-2-tests}, 
where $\eta$ is the quantum cross-ratio defined in Eq.~\eqref{eq:def-c-eta}. 
We also compare the $\fc$s from these two conformal rulers and compare $\eta_J$ (from edge modular commutator), $\eta_K$ (from solving vector equation) with $\eta$ for each conformal ruler. These are shown to be close to each other, as expected [Prop.~\ref{Prop:edge-J}]. 
\begin{figure}
    \centering
    \includegraphics[width=0.8\columnwidth]{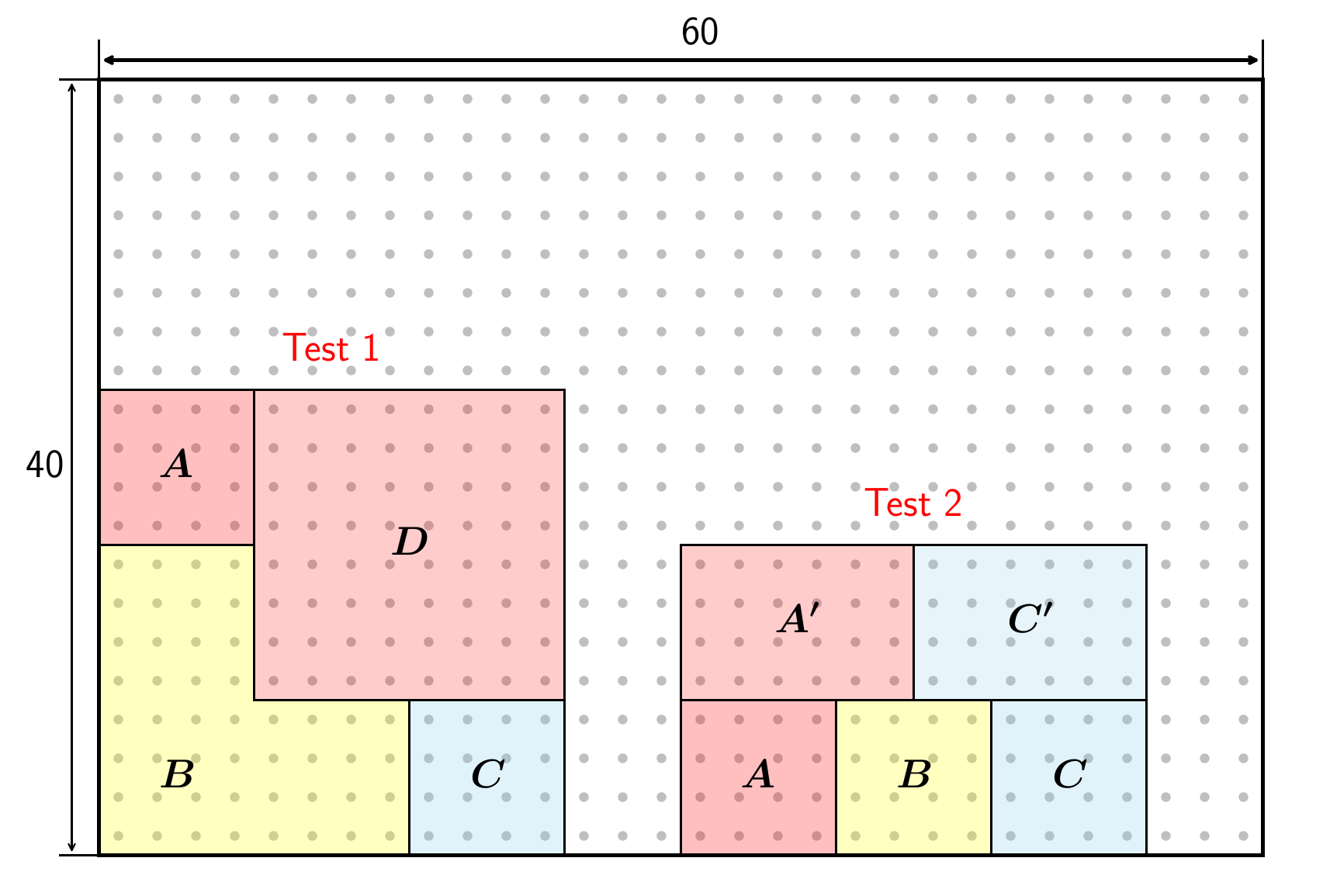}
    \caption{Two conformal rulers $\dune = (A,D,B,C,\emptyset)$ and $\dune(A,A',B,C,C')$ on a square lattice with open boundary condition on all four sides. Each dot in the background stands for a block of $2\times 2$ lattice sites so that the position and size of each subsystem can be inferred. }
    \label{fig:vector-2-tests}
\end{figure}

Here are more details on our numerical experiment. 

We first test the stationarity condition utilizing the vector fixed-point equation [Definition~\ref{definition:vector_fixed_point}].
On the two conformal rulers shown in Fig.~\ref{fig:vector-2-tests}, we computed $\fc$ and $\eta$ utilizing the definition Eq.~\eqref{eq:def-c-eta}.  To test the proportionality, we use the square root of the variance of $\KD(\eta)$ as a measure of error: 
\begin{equation}
        \sigma(\KD(\eta)) = \sqrt{ \bra\Psi \KD(\eta)^2 \ket\Psi - \bra\Psi \KD(\eta) \ket\Psi^2},
\end{equation}
where 
\begin{equation}
    \KD(\eta) = \eta \hat{\Delta}(\dune) + (1-\eta)\hat{I}(\dune). 
\end{equation}
As shown in Table.~\ref{tab:num-result}, $\sigma(\KD)\approx 0$ in both tests, which suggests the Eq.~\eqref{eq:vector-num} is approximately satisfied. 

In the meantime, we can also see $\fc$ computed on the two conformal rulers are approximately equal to each other, satisfying our [Prop.~\ref{prop:const-c}]. Moreover, both $\fc$ are approximately $1/2$. This confirms our expectation that $\fc$ shall be the total central charge of the edge CFT. The $p+ip$ SC groundstate in our test is known to have a ``purely-chiral'' CFT on the edge, meaning the anti-holomophic central charge $\bar c=0$, and therefore the total central charge of the edge CFT is $\ctot = c + \bar c = c = 1/2$. The data in the third column of Table.~\ref{tab:num-result} indeed verifies $\fc \approx \ctot$. 

The last test using this setup is the matching of $\eta,\eta_J,\eta_K$, where $\eta$ is defined in Eq.~\eqref{eq:def-c-eta}, $\eta_J$ is defined via edge modular commutator Eq.~\eqref{eq:def-eta-J} and $\eta_K$ is defined as the solution to the vector equation Eq.~\eqref{eq:eta-K}. $\eta$ (the quantity defined in Eq.~\eqref{eq:def-c-eta}) is listed in the fourth column, which has already been used to test the stationarity condition. To compute $\eta_J$, we can solve the following two systems of equations
\begin{equation}\begin{split}
&\text{Test 1:}\quad J(AD,B,C) = \frac{\pi c_{-}}{3} (1-\eta_J),\quad J(A,B,C) = -\frac{\pi c_{-}}{3}\eta_J \quad \\
&\text{Test 2:}\quad J(AA',B,CC') = \frac{\pi c_{-}}{3}(1-\eta_J),\quad J(A,B,C) = -\frac{\pi c_{-}}{3}\eta_J,
    \end{split}
\end{equation} 
for test 1 and test 2 respectively\footnote{Another way is to use $J(A,B,C) = -\frac{\pi c_{-}}{3}\eta_J$ with $c_{-}=1/2$. We choose not to use this way because we want to compute $\eta_J$ merely from the wavefunction, not utilizing beforehand knowledge of the state, even though the knowledge is well-known.}. To compute $\eta_K$, we utilize the square root of the variance of $\KD(x)$ with a series of $x$, to locate the $x$ such that $\sigma(\KD(x))$ is most close to zero [Fig.~\ref{fig:sigma-x}]. That particular minimal location is the $\eta_K$. The result is listed in the last column in Table.~\ref{tab:num-result}. By comparing the last three columns in Table.~\ref{tab:num-result}, we can see the three cross-ratios are remarkably close to each other! Thus, up to a small error, $\eta = \eta_J = \eta_K$ is satisfied, in agreement with [Prop.~\ref{Prop:edge-J}]. 

\begin{table}[!h]
    \centering
    \begin{tabular}{|l|l|l|l|l|l|}
        \hline
        Test & $\sigma(\KD(\eta))$ & $\fc$    & $\eta$   & $\eta_J$ & $\eta_K$ \\ \hline
        1    & 0.00046             & 0.500019 & 0.046344 & 0.046346 & 0.0463    \\ \hline
        2    & 0.00475             & 0.500079 & 0.254086 & 0.254078 & 0.2541    \\ \hline
    \end{tabular}
    \caption{}\label{tab:num-result}
\end{table}

\begin{figure}[!h]
        \centering 
        \includegraphics[width=0.8\columnwidth]{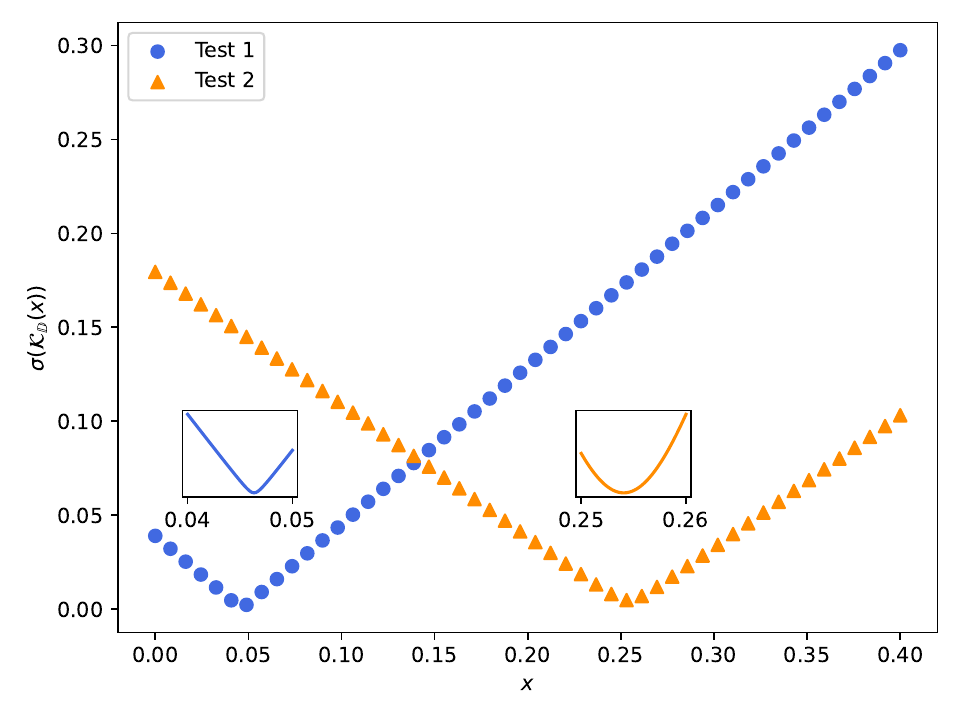}
        \caption{$\sigma(\KD(x))$ as a function of $x$. One can find an approximate solution of $\sigma(\KD(x))=0$ by minimizing the function. The inset is the same plot near the minima. The minima are achieved at $\eta_K = 0.0463\pm 0.0002$ (for Test 1) and $\eta_K = 0.2541\pm 0.0002$ (for Test 2). }
        \label{fig:sigma-x}
    \end{figure}

\subsection{Verification of the cross-ratio relations}
\label{subsec:verification_cross_ratio}
In this Section, we verify the consistency relation of the quantum cross-ratios. We shall focus on the decomposition relation [Prop.~\eqref{prop:rule2}]. 

There are two sets of subsystems we used for this purpose; see Fig.~\ref{fig:cross-ratio-tests}. The conformal ruler for the cross-ratios in each test is listed in Table.~\ref{tab:conf-ruler-tests}.
\begin{table}[!h]
\begin{tabular}{|l|l|l|l|l|l|}
\hline
Test & $\dune(a,b,c)$ & $\dune(b,c,d)$        & $\dune(ab,c,d)$   & $\dune(a,b,cd)$   & $\dune(a,bc,d)$   \\ \hline
1    & \scriptsize{$(A,X,B,C,Y)$}  & \scriptsize{$(B,Y,C,D,\emptyset)$} & \scriptsize{$(AB,X,C,D,Y)$}    & \scriptsize{$(A,X,B,CD,Y)$ }   & \scriptsize{$(A,X,BC,D,Y)$}    \\ \hline
2    & \scriptsize{$(A,X,B,C,Y)$ } & \scriptsize{$(B,Y,C,D,Z)$}         & \scriptsize{$(AB,XY,C,D,Z)$} & \scriptsize{$(A,X,B,CD,YZ)$} & \scriptsize{$(A,XY,BC,D,Z)$} \\ \hline
\end{tabular}
\caption{Conformal rulers for quantum cross-ratio computation in the two tests.}
\label{tab:conf-ruler-tests}
\end{table}
In each test, we first numerically compute $\eta(a,b,c),\eta(b,c,d)$. Then, applying the decomposition relation [Prop.~\eqref{prop:rule2}], we can predict $\eta(ab,c,d)$, $\eta(a,b,cd)$, and $\eta(a,bc,d)$: 
\begin{equation}\label{eq:rule2-appendix}
    \begin{aligned}
         \eta(ab,c,d) &= \frac{\eta(b,c,d)}{1-\eta(a,b,c)}  \\ 
        \eta(a,b,cd) &= \frac{\eta(a,b,c)}{1-\eta(b,c,d)}  \\ 
        \eta(a,bc,d) &= \frac{\eta(a,b,c)\, \eta(b,c,d)}{(1-\eta(a,b,c))(1-\eta(a,b,c))} .
    \end{aligned} 
\end{equation}
Then we compare these predictions and the direct numerical calculations of $\eta(ab,c,d)$, $\eta(a,b,cd)$, and $\eta(a,bc,d)$.

\begin{figure}[!h]
    \centering
    \includegraphics[width=0.8\columnwidth]{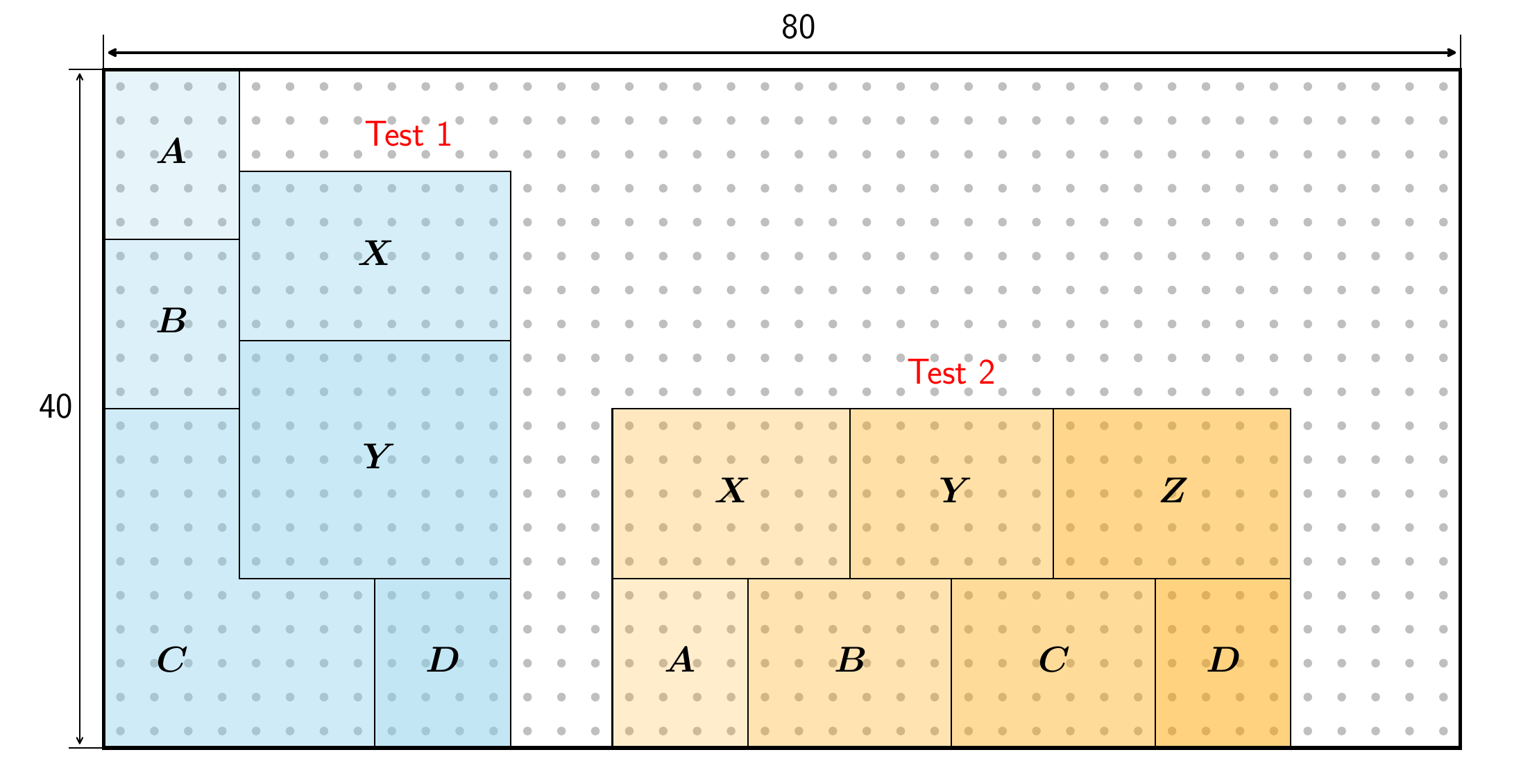}
    \caption{We consider two regions for testing the consistency relations of quantum cross-ratio. Each dot in the background stands for a block of $2\times 2$ lattice sites, similar to Fig.~\ref{fig:vector-2-tests}.} 
    \label{fig:cross-ratio-tests}
\end{figure}

\begin{table}[!h]
\centering 
\begin{tabular}{|l|l|l|l|l|l|}
\hline
Test & $\eta(a,b,c)$ & $\eta(b,c,d)$ & $\eta(ab,c,d)$       & $\eta(a,b,cd)$       & $\eta(a,bc,d)$       \\ \hline
1    & 0.516753      & 0.044985      & \makecell{0.093090\\ 0.093088} & \makecell{0.541094\\0.541094} & \makecell{0.050371\\ 0.050369}  \\ \hline
2    & 0.232726      & 0.232739      & \makecell{0.303330\\ 0.303332} & \makecell{0.303319\\ 0.303321} & \makecell{0.092003\\ 0.092007} \\ \hline
\end{tabular}
\caption{Tests of the consistency relation of quantum cross-ratio. For each cell of $\eta(ab,c,d),\eta(a,b,cd),\eta(a,bc,d)$, the first line is directly computed from the definition of $\eta$; the second line is computed using Eq.~\eqref{eq:rule2-appendix}.}
\label{tab:num-cross-ratio-relation}
\end{table}

The results are shown in Table.~\ref{tab:num-cross-ratio-relation}. One can see that the predictions agree with the direct calculations with high accuracy. Note that this holds even for a highly irregular regions used in Test 1. In Test 2, one may have naively expected that $\eta(a,b,c) = \eta(b,c,d)$ using the number of lattice sites as the distance measure. However, the quantum cross-ratios have a small asymmetry [Table.~\ref{tab:num-cross-ratio-relation}]. Although this is small, it is still an order of magnitude larger than the discrepancy between our prediction and the direct calculation of quantum cross-ratios. This is a nontrivial evidence that supports the validity of the relations satisfied by the quantum cross-ratios at the chiral edge.

\section{Exotic examples: non-CFT states}\label{appendix:exotic}

The main edge assumption of the main text is the stationarity condition [Assumption~\ref{ass:stationarity}]. However, one may wonder if there is a weaker assumption that serves the same purpose. One example would be the assumption that $\fc$ is a constant. In this Appendix, we provide counterexamples which satisfy this assumption but cannot be interpreted as CFT groundstates. In Appendix~\ref{appendix:exotic-6-site}, we provide such a counterexample. In Appendix~\ref{appendix:exotic-vector-k}, we provide examples which satisfy a vector equation (weaker than the vector fixed-point equation)  but yields a prediction of $\eta_K$ that cannot correspond to the geometrical cross-ratio. Therefore, certain naive attempts to formulate CFT in terms of conditions weaker than the stationarity condition do not work.

\subsection{Entropy of a state is insufficient}\label{appendix:exotic-6-site}
For a 1+1D CFT groundstate on a circle, the entanglement entropy of an interval of a chord length $\ell$ is of the form 
\begin{equation}\label{eq:EE-CFT-app}
    S(\ell) = \frac{\ctot}{6} \ln\( \frac{\ell }{\epsilon} \),
\end{equation}
where $\ctot$ is the total central charge and $\epsilon$ is the UV cutoff~\cite{Calabrese:2009qy}. Conversely, we may ask the following question. If the entanglement entropy of any interval takes the following form
\begin{equation}\label{eq:EE-psi}
    S(\ell) = \alpha \ln \ell + \beta,
\end{equation}
where $\alpha,\beta$ are two positive constant and $\ell$ is the chord length of the interval, can we conclude the state $\ket\psi$ is a CFT groundstate? We show that this is not necessarily the case. 

\begin{figure}[h]
$$ 
\begin{tikzpicture}
	\def\aa{.1};
	\def\pp{3};
	\def\RR{\pp};
	\pgfmathsetmacro\deltatheta{360/(2*\pp*(2*\pp-1))-5};
	\draw[thick] (0,0) circle (\RR); 
	\pgfmathtruncatemacro\qq{2*\pp-1};
\pgfmathtruncatemacro\rr{2*\pp-2};
 \pgfmathtruncatemacro\mm{\pp-1};
  \foreach \x in {0,...,\qq}
	  {\pgfmathsetmacro\shift{\x*360/(2*\pp)};
	   \foreach \y in {0,...,\mm}
	      \pgfmathparse{\y * 100/\mm}
	    \draw [red!\pgfmathresult!blue,very thick] ({\RR*cos(\deltatheta*(\rr-\y)+\shift)},{\RR*sin(\deltatheta*(\rr-\y)+\shift)}) -- 
	    	({\RR*cos(\deltatheta*\y+\shift + (\y+1)*360/(2*\pp))},{\RR*sin(\deltatheta*\y+\shift+ (\y+1)*360/(2*\pp))}) ; 
		}
  \foreach \x in {0,...,\qq}
  {\pgfmathsetmacro\shift{\x*360/(2*\pp)};
    \foreach \y in {0,...,\rr}
    	{ 
	\coordinate (\x\y) at ({\RR*cos(\deltatheta*\y+\shift)},{\RR*sin(\deltatheta*\y+\shift)}); 
	\draw[fill=white] ({\RR*cos(\deltatheta*\y+\shift)},{\RR*sin(\deltatheta*\y+\shift)}) circle (\aa);	
		}
}
	  \pgfmathsetmacro\shift{0*360/(2*\pp)};
	  \pgfmathsetmacro\qqq{2/3};
	\node at ( {\RR*cos(\deltatheta*\rr+\shift)/2 + \RR*cos(\deltatheta*0+\shift + 360/(2*\pp))/2}, 
	{\RR*sin(\deltatheta*\rr+\shift)/2 +\RR*sin(\deltatheta*0+\shift+ 360/(2*\pp))/2}) [below] {$x$};
	\node at ( {\RR*cos(\deltatheta*(\rr-1)+\shift)*(\qqq) + \RR*cos(\deltatheta*1+\shift + 2*360/(2*\pp))*(1-\qqq)}, 
	{\RR*sin(\deltatheta*(\rr-1)+\shift)*(\qqq) +\RR*sin(\deltatheta*1+\shift+ 2*360/(2*\pp))*(1-\qqq)}) [below] {$y$};
	\node at ( {\RR*cos(\deltatheta*(\rr-2)+\shift)*(2/3) + \RR*cos(\deltatheta*2+\shift + 3*360/(2*\pp))*(1/3)}, 
	{\RR*sin(\deltatheta*(\rr-2)+\shift)*(2/3) +\RR*sin(\deltatheta*2+\shift+ 3*360/(2*\pp))*(1/3)+0.1}) [below,left] {$z$};

\begin{scope}[xshift=5cm]
	\draw[very thick,red] (0,0) -- (1,0) ;
	\draw[fill=white] (0,0) circle (\aa);	
	\draw[fill=white] (1,0) circle (\aa);	
	\node at (.5,0) [above]{$x$};
	\node at (0,0) [below]{$i$};
	\node at (1,0) [below]{$j$};
	\node at (1,0)[right] { $~ \equiv \sqrt{x} \ket{0_i0_j} + \sqrt{1-x} \ket{1_i1_j}$};
\end{scope}

	\end{tikzpicture}
$$
    \caption{A state on 6 sites, each containing 5 qubits. Two qubits connected by a bond represents the specified entangled state.}
    \label{fig:6site-example}
\end{figure}
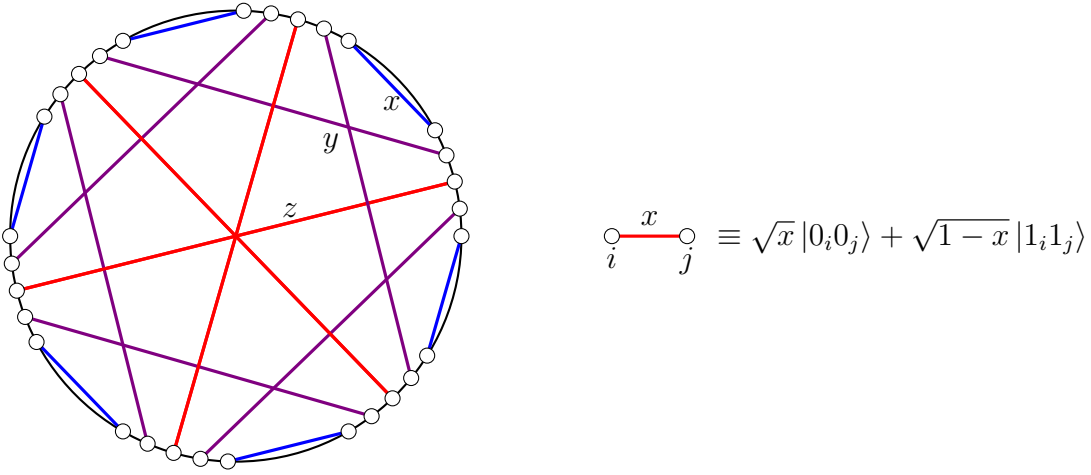

Let us start with the following simple example. 
\begin{exmp}[six-site example]\label{exmp:6site}
    We put six sites on a circle, each containing five qubits [Fig.~\ref{fig:6site-example}]. Each white dot stands for a qubit and each colored link that connects two white dots stands for an entangled state of the form
    \begin{equation} 
            \sqrt{p} \ket{00} + \sqrt{1-p }\ket{11},
    \end{equation}
    where the value of $p$ depends on the color of the edge. We call the entanglement entropy of one qubit in the entangled pair
\begin{equation}
        \chi_p = -p \ln p - (1-p)\ln(1-p) \in [0,\ln 2]
    \end{equation}
    the \emph{strength} of the bond. There are three types of entangled pair of strength $\chi_x,\chi_y,\chi_z$ [Fig.~\ref{fig:6site-example}].
\end{exmp}
     
Due to the six-fold rotation symmetry, there are only three types of distinct intervals in this example, consisting of one, two, or three contiguous sites. 
The chord length of these intervals are $\ell_n = 2\sin \frac{n\pi}{6}$, where $n=1,2,3$ is the number of sites in the interval. We demand the entanglement entropies of these intervals take the form of Eq.~\eqref{eq:EE-psi}: 
    \begin{equation}\begin{split}
        &S(\ell_1) = 2\chi_x + 2\chi_y + \chi_z = \alpha \ln(1) + \beta, \\
        &S(\ell_2) = 2\chi_x + 4\chi_y + 2\chi_z = \alpha \ln(\sqrt{3}) + \beta, \\
        &S(\ell_3) = 2\chi_x + 4\chi_y + 3\chi_z = \alpha \ln(2) + \beta. \\
    \end{split}
    \end{equation}
The solutions to these equations are
    \begin{equation}
        \begin{aligned}
            \chi_z &= \alpha \ln\frac{2}{\sqrt{3}}, \\
            \chi_x & = \frac{\beta - \alpha \ln \sqrt{3}}{2}, \\
            \chi_y & = \frac{\alpha}{2} \ln \frac{3}{2}.
        \end{aligned}
    \end{equation}
Note that $\chi_x,\chi_y, \chi_z$ must be nonnegative, which can be satisfied for some values of $\alpha$ and $\beta$. Thus we constructed a state $\ket\psi$ whose entanglement entropies of intervals take the same form as Eq.~\eqref{eq:EE-CFT-app} in CFT groundstate, with $\ctot = 6\alpha$.  In fact, this solution gives a tunable $\ctot>0$.

Lest the reader worry that Example~\ref{exmp:6site} above is special to six sites. We note that a generalization exists; see Example~\ref{example2}.
\begin{figure}[h]
    \centering
\begin{tikzpicture}[scale=.5]
	\def\aa{.1};
	\def\pp{5};
	\def\RR{\pp};
	\pgfmathsetmacro\deltatheta{360/(2*\pp*(2*\pp-1))-3};
	\draw[thick] (0,0) circle (\RR); 
	\pgfmathtruncatemacro\qq{2*\pp-1};
\pgfmathtruncatemacro\rr{2*\pp-2};

 \pgfmathtruncatemacro\mm{\pp-1};
  \foreach \x in {0,...,\qq}
	  {\pgfmathsetmacro\shift{\x*360/(2*\pp)};
	   \foreach \y in {0,...,\mm}
	      \pgfmathparse{\y * 100/\mm}
	    \draw [red!\pgfmathresult!blue,thick] ({\RR*cos(\deltatheta*(\rr-\y)+\shift)},{\RR*sin(\deltatheta*(\rr-\y)+\shift)}) -- 
	    	({\RR*cos(\deltatheta*\y+\shift + (\y+1)*360/(2*\pp))},{\RR*sin(\deltatheta*\y+\shift+ (\y+1)*360/(2*\pp))}) ; 
		}

  \foreach \x in {0,...,\qq}
  {\pgfmathsetmacro\shift{\x*360/(2*\pp)};
    \foreach \y in {0,...,\rr}
    	{ 
	\coordinate (\x\y) at ({\RR*cos(\deltatheta*\y+\shift)},{\RR*sin(\deltatheta*\y+\shift)}); 
	\draw[fill=white] ({\RR*cos(\deltatheta*\y+\shift)},{\RR*sin(\deltatheta*\y+\shift)}) circle (\aa);	
		}
}
\end{tikzpicture}
    \caption{A state on $2N$ sites. Each site contains $2N-1$ qubits.}
    \label{fig:enter-2p-site-exam}
\end{figure}
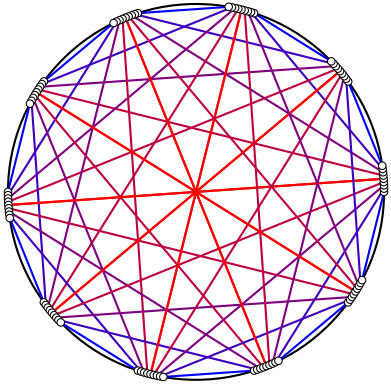

\begin{exmp}\label{example2}
Consider the state in Fig.~\ref{fig:enter-2p-site-exam}, where the circle has any even number $2N$ of sites, with $2N-1$ qubits at each site. Each qubit is entangled with a qubit in another site, indicated by the bond in the figure. Let $\chi_k$ be the strength of the bond between sites $i$ and $i \pm k$, for $k =1, 2, \cdots, N$, which is the same for different $i$. Therefore, the state has translation symmetry along the circle. 
\end{exmp}

For an interval of $n$ sites with chord length $\ell_n = 2 \sin \frac{\pi n}{N}$, we demand its entanglement entropy to be 
\begin{equation}\label{eq:Sn-ell}
    S(\ell_n) = \alpha \ln \left(\frac{\ell_n}{\ell_1}\right) + \beta,\quad n = 1,\cdots, N. 
\end{equation}
Moreover, these entanglement entropies can be explicitly computed from the entangled pair
\begin{equation}\label{eq:Sn-chi}
    \begin{split}
        &S(\ell_1) = \sum_{k=1}^{N-1} 2 \chi_k  +  \chi_N, \\ 
        &S(\ell_n) = n S(\ell_1) - 2 \sum_{k=1}^{n-1} (n - k ) \chi_k, \quad n = 2,\cdots, N 
    \end{split}
\end{equation}
Plugging Eq.~\eqref{eq:Sn-chi} into Eq.~\eqref{eq:Sn-ell}, one obtains a system of $N$ linear equations for the $N$ variables $\chi_1,\cdots,\chi_N$. The solutions are
\begin{equation}
\begin{aligned}
    \chi_k &= \alpha \ln \left( \frac{\sin \frac{k \pi}{2 N}}{\sqrt{ \sin \frac{(k-1)\pi}{2N} \cdot \sin \frac{(k+1)\pi}{2N} } } \right), \quad k = 2,\cdots, N \\ 
    \chi_1 &= \frac{1}{2} \left( \beta - 2\sum_{k =2}^{N-1} \chi_{k} - \chi_N \right).
\end{aligned}  
\end{equation}
We see that there is a range of $\alpha,\beta >0 $ where the strength of the bonds produces a valid state, i.e., $\chi_k \in (0,\ln 2]$, $\forall k$. 
Again, for each given integer $N\ge 3$, we have a class of models with tunable $\ctot>0$. A straightforward computation shows that the vector fixed-point equation on $[1,1,1]$-type intervals cannot be true because $\chi_2$ and $\chi_N$ cannot reach $\ln 2$ at the same time. 

These examples cannot be CFT groundstates; in the continuum CFT, the groundstate satisfies the vector fixed-point equation~\cite{Lin2023} which these states violate. Of course, in practice, any lattice regularization will exhibit some violation of this condition. However, the important point is that we have exhibited a family of states in Example \ref{example2} whose violation of the vector fixed point equation remains finite even in a certain thermodynamic limit.  Indeed, in the limit $N\to\infty$, $\chi_k = \frac{\alpha}{2} \ln \frac{k^2}{k^2-1},k\geq 2$, therefore the vector fixed-point equation is still violated.  One could be concerned that our thermodynamic limit does not fix the size of the local Hilbert space, and that including the assumption of fixed local Hilbert space, perhaps the form of the entropy could be enough to guarantee a CFT groundstate.  However, we observe that the states constructed above also exhibit a continuously tunable value of the central charge, which may be taken to be less than $1/2$ (i.e., $c_{\rm tot}/2 < 1/2$). Thus, they cannot represent unitary CFT groundstates.

Note that the states in Example~\ref{exmp:6site} and \ref{example2} have zero modular commutator $J(A,B,C)_{|\psi\rangle} =0$. Thus, supplementing the correct CFT entropy with the requirement that the modular commutator for any three contiguous intervals vanishes is also insufficient.

\subsection{States with a vector equation such that $\eta_K \ne \eta$}\label{appendix:exotic-vector-k}

An important equation of this work is the \emph{vector fixed-point equation} $ \KD(\eta) | \psi \rangle \propto |\psi\rangle$, where $\eta$ is the quantum cross-ratio computed from entropy combinations $I(\dune)$ and $\Delta(\dune)$ associated with a conformal ruler $\dune$. Its importance can be seen from its equivalence to the stationarity condition (Theorem~\ref{thm:equiv}).
In contexts related to chiral edges [Section~\ref{subsec:match-three}] and non-chiral setups [Section~\ref{sec:emergence-nonchiral}], we found a unique solution $\eta = \eta_K$ of
\begin{equation}\label{eq:weak-vector-appendix}
      \KD(\eta_K) | \psi \rangle \equiv \eta_K \hat{\Delta} + (1-\eta_K) \hat{I} | \psi\rangle \propto  |\psi\rangle.
\end{equation}
In other words, the vector fixed-point equation is the unique ``vector equation" of this form. 

In this Appendix, we ask (1) if there are states with a vector equation Eq.~\eqref{eq:weak-vector-appendix}, where the value $\eta_K\ne \eta$, and (2) what such state means physically. We provide two examples. Example~\ref{ex:eta-K-1} has a family of vector equations, while Example~\ref{ex:eta-K-2} has a vector equation with $\eta_K \ne \eta$. We argue that they are not physical CFT groundstates.
 
\begin{exmp} [Flat entanglement spectrum states] \label{ex:eta-K-1}
Any state with a flat entanglement spectrum has
\begin{equation}
    K_A |\lambda\rangle \propto |\lambda\rangle,
\end{equation}
for any subsystem $A$.
\end{exmp}

Such states include ``absolutely maximally entangled states", which are states constructed from perfect tensor~\cite{Pastawski2015,Geng2022}, as well as stabilizer states~\cite{Gottesman1997PhD}. Note that a state with a flat entanglement spectrum has a vanishing modular commutator for any three regions that is, $J(X,Y,Z)_{|\lambda\rangle}=0$ identically.
Suppose we arrange the qubits of such a state $|\lambda\rangle$ on a spin chain and pick three contiguous intervals $A,B,C$ from the chain.
Then, the state satisfies the vector equation
\begin{equation}\label{eq:arbitrary-vec-eta}
    \KD(\eta_K) |\lambda\rangle \propto |\lambda\rangle, \quad \forall  \eta_K.   
\end{equation}
Namely, there is a family of vector equations. Thus, one of the solutions is $\eta_K =\eta$, and thus the stationarity condition holds. Nonetheless, the genericity condition [Assumption~\ref{ass:genericity}] breaks. We argue, based on this, that such state $|\lambda\rangle$ are unrelated to CFT groundstates.

The next example is to attach a flat entanglement spectrum state to a CFT groundstate.

\begin{exmp} [States with $\eta_K \ne \eta$]\label{ex:eta-K-2}
Suppose we have a state $|\lambda\rangle$, which has a flat entanglement spectrum arranged on a spin chain. and let $|\psi\rangle$ be a CFT state on a circle. Consider the tensering state state $|\psi'\rangle = |\psi\rangle \otimes |\lambda\rangle$, such that the qubits in the state $|\lambda\rangle$ are added on top of the CFT circle.
One immediately verifies that it satisfies 
\begin{equation}
    \KD(\eta_{|\psi\rangle}) |\psi'\rangle \propto |\psi'\rangle,
\end{equation}
where $\eta_{|\psi\rangle}$ is the quantum cross-ratio computed from the state $|\psi\rangle$. This follows from the general properties of modular Hamiltonian on tensor states, $\KD(\eta_{|\psi\rangle}) |\psi\rangle \propto |\psi\rangle$ as well as Eq.~\eqref{eq:arbitrary-vec-eta}. In fact, $\eta_{|\psi\rangle}$ is the only solution for vector equation.
Generally, such a state $|\psi'\rangle$ does not satisfy the stationary condition. This is because $\eta_{|\psi\rangle} \ne \eta_{|\psi'\rangle}$ in general.
\end{exmp}

\phantomsection\addcontentsline{toc}{section}{References}
\bibliographystyle{ucsd}
\bibliography{ref} 
\end{document}